\title{The Cost of Skeletal Call-by-Need, Smoothly} 
\author{Beniamino {Accattoli}}{Inria, \& LIX, École Polytechnique, France}{beniamino.accattoli@inria.fr}{https://orcid.org/0000-0003-4944-9944}{}
\author{Francesco {Magliocca}}{Università degli Studi di Napoli Federico II,
Italy}{francesco.magliocca@unina.it}{https://orcid.org/0009-0009-2310-6565}{}
\author{Loïc {Peyrot}}{IMDEA Software Institute, Spain}{loic.peyrot@imdea.org}{https://orcid.org/0000-0002-1398-7460}{}
\author{Claudio {Sacerdoti Coen}}{Alma Mater Studiorum - Università di Bologna, Italy}{claudio.sacerdoticoen@unibo.it}{https://orcid.org/0000-0002-4360-6016}{}
\authorrunning{B. Accattoli, F. Magliocca, L. Peyrot, C. Sacerdoti Coen} 
\keywords{$\lambda$-calculus, abstract machines, call-by-need, cost models} 
\newcommand{\CodeSymbol}[1]{\textcolor{orange}{#1}}
\lstdefinelanguage{ocaml}{
  literate={=}{{\CodeSymbol{=}}}1
           {->}{{\CodeSymbol{->}}}1
           {<-}{{\CodeSymbol{<-}}}1,
  keywords={type, mutable, and, list, of, bool, option, ref, let, then, if, function, string},
  keywordstyle=\color{blue}\bfseries,
  keywords=[2]{Type},
  keywordstyle=[2]\color{blue}\bfseries,
  identifierstyle=\color{black},
  sensitive=false,
  mathescape=true,
  comment=[l]{//},
  morecomment=[n]{(*}{*)},
  commentstyle=\color{purple}\ttfamily,
  stringstyle=\color{red}\ttfamily,
  morestring=[b]',
  morestring=[b]"
}
    \newtheorem{theorem}{Theorem}[section]
    \newtheorem{lemma}[theorem]{Lemma}
    \newtheorem{corollary}[theorem]{Corollary}
    \newtheorem{proposition}[theorem]{Proposition}
    \newtheorem{definition}[theorem]{Definition}
\theoremstyle{remark}
\newcommand{\myproof}[1]{
\ifthenelse{\boolean{withproofs}}{#1}{}
}
\newcommand{\lterms}{\Lambda}
\newcommand{\la}[1]{\lambda #1.}
\newcommand{\tm}{t}
\newcommand{\tmtwo}{u}
\newcommand{\tmthree}{s}
\newcommand{\tmfour}{r}
\newcommand{\tmfive}{p}
\newcommand{\var}{x}
\newcommand{\vartwo}{y}
\newcommand{\varthree}{z}
\newcommand{\varfour}{w}
\newcommand{\varthreep}{z'}
\newcommand{\avar}{a}
\newcommand{\avartwo}{b}
\newcommand{\avarthree}{c}
\newcommand{\ap}[2]{#1[#2]}
\renewcommand{\ap}[2]{#1#2}
\newcommand{\dom}[1]{\symfont{dom}(#1)}
\newcommand{\rootRew}[1]{\mapsto_{#1}}
\newcommand{\Rew}[1]{\rightarrow_{#1}}
\renewcommand{\to}{\Rew{}}
\newcommand{\tostrat}{\Rew{\symfont{str}}}
\newcommand{\symfont}[1]{\mathtt{#1}}
\newcommand{\lssym}{\symfont{ls}}
\newcommand{\ls}{\lssym}
\newcommand{\db}{\symfont{dB}}
\newcommand{\msym}{\symfont{m}}
\newcommand{\lsndsym}{\symfont{nd}{\cdot}\symfont{ls}}
\newcommand{\sksym}{\symfont{sk}}
\newcommand{\val}{v}
\newcommand{\valtwo}{\val'}
\newcommand{\rtolsv}{\rootRew{\lsvsym}}
\newcommand{\rtolsnd}{\rootRew{\lsndsym}}
\newcommand{\ndsym}{\symfont{nd}}
\newcommand{\lsneedsym}{\symfont{nd}{\cdot}\ls}
\newcommand{\tolsneed}{\Rew{\lsneedsym}}
\newcommand{\ctxholep}[1]{\langle #1\rangle}
\newcommand{\ctxhole}{\ctxholep{\cdot}}
\newcommand{\ctxholefp}[1]{\langle #1\rangle}
\newsavebox{\@brx}
\newcommand{\llangle}[1][]{\savebox{\@brx}{\(\m@th{#1\langle}\)}%
  \mathopen{\copy\@brx\kern-0.7\wd\@brx\usebox{\@brx}}}
\newcommand{\rrangle}[1][]{\savebox{\@brx}{\(\m@th{#1\rangle}\)}%
  \mathclose{\copy\@brx\kern-0.7\wd\@brx\usebox{\@brx}}}
\renewcommand{\ctxholefp}[1]{\llangle #1\rrangle}
\newcommand{\ctx}{C}
\newcommand{\ctxp}[1]{\ctx\ctxholep{#1}}
\newcommand{\evctx}{E}
\newcommand{\evctxtwo}{\evctx'}
\newcommand{\evctxthree}{\evctx''}
\newcommand{\evctxp}[1]{\evctx\ctxholep{#1}}
\newcommand{\evctxtwop}[1]{\evctxtwo\ctxholep{#1}}
\newcommand{\evctxfp}[1]{\evctx\ctxholefp{#1}}
\newcommand{\evctxtwofp}[1]{\evctxtwo\ctxholefp{#1}}
\newcommand{\evctxthreefp}[1]{\evctxthree\ctxholefp{#1}}
\newcommand{\nbvctxtwo}[1]{\nbvctxtwo{#1}}
\newcommand{\sctx}{S}
\newcommand{\sctxtwo}{{\sctx'}}
\newcommand{\sctxthree}{{\sctx''}}
\newcommand{\sctxp}[1]{\sctx\ctxholep{#1}}
\newcommand{\sctxptwo}[1]{\sctxtwo\ctxholep{#1}}
\newcommand{\sctxtwop}[1]{\sctxtwo\ctxholep{#1}}
\newcommand{\sctxv}[1]{\sctx_{#1}}
\newcommand{\sctxvp}[2]{\sctxv{#1}\ctxholep{#2}}
\newcommand{\defeq}{:=}
\newcommand{\grameq}{::=}
\newcommand{\esub}[2]{[#1{\shortleftarrow}#2]}
\newcommand{\skesub}[2]{\langle#1{\shortleftarrow}#2\rangle}
\newcommand{\isub}[2]{\{#1{\shortleftarrow}#2\}}
\newcommand{\letexp}[3]{\letexpsym\ #1=#2\ \symfont{in}\ #3}
\newcommand{\rtodb}{\rootRew{\db}}
\newcommand{\todbp}[1]{\Rew{\db#1}}
\newcommand{\todb}{\todbp{}}
\newcommand{\llbrace}{\{ \kern -0.27em \vert}
\newcommand{\rrbrace}{\vert \kern -0.27em \}}
\newcommand{\streq}{\equiv}
\renewcommand{\l}{\lambda}
\newcommand{\ie}{{\em i.e.}\xspace}
\newcommand{\eg}{{\em e.g.}\xspace}
\newcommand{\ih}{{\textit{i.h.}}\xspace}
\newcommand{\fv}[1]{\symfont{fv}(#1)}
\newcommand{\red}[1]{{\color{red} {#1}}}
\newcommand{\ignore}[1]{}
\newcommand{\myinput}[1]{\ifthenelse{\boolean{withimages}}{\input{#1}}{}}
\newcommand{\reflemma}[1]{Lemma~\ref{l:#1}}
\newcommand{\reflemmap}[2]{Lemma~\ref{l:#1}.\ref{p:#1-#2}}
\newcommand{\reflemmaeq}[1]{{L.\ref{l:#1}}}
\newcommand{\refthm}[1]{Theorem~\ref{thm:#1}}
\newcommand{\refprop}[1]{Prop.~\ref{prop:#1}}
\newcommand{\refsect}[1]{Sect.~\ref{sect:#1}}
\newcommand{\refapp}[1]{Appendix~\ref{sect:#1}}
\newcommand{\reffig}[1]{Fig.~\ref{fig:#1}}
\newcommand{\refdef}[1]{Definition~\ref{def:#1}}
\newcommand{\refpoint}[1]{Point~\ref{p:#1}}
\newcommand{\levy}{{L{\'e}vy}\xspace}
\newcommand{\set}[1]{\{#1\}}
\newcommand{\nat}{\mathbb{N}}
\newcommand{\rtom}{\mapsto_\msym}
\newcommand{\cbn}{CbN\xspace}
\newcommand{\cbneed}{CbNeed\xspace}
\newcommand{\size}[1]{|#1|}
\newcommand{\sizep}[2]{|#1|_{#2}}
\newcommand{\sizedb}[1]{\sizep{#1}{\db}}
\newcommand{\sizeskdb}[1]{\sizep{#1}{\sksym{\cdot}\db}}
\newcommand{\env}{\genv}
\newcommand{\envtwo}{\env'}
\newcommand{\genv}{\symfont{E}}
\newcommand{\stack}{\symfont{S}}
\newcommand{\States}{\symfont{States}}
	\renewcommand{\state}{\symfont{Q}}
	\newcommand{\state}{\symfont{Q}}
\newcommand{\statetwo}{\state'}
\newcommand{\statethree}{\state''}
\newcommand{\statefour}{\state'''}
\newcommand{\decode}[1]{\underline{#1}}
\newcommand{\decodep}[2]{\decode{#1}\ctxholep{#2}}
\newcommand{\stempty}{\epsilon}
\newcommand{\cons}{::}
\newcommand{\deriv}{\symfont{e}}
\newcommand{\derivtwo}{\deriv'}
\newcommand{\derivthree}{\deriv''}
\newcommand{\derivfour}{\deriv'''}
\newcommand{\renamenop}[1]{#1^\alpha}
\newcommand{\mach}{\symfont{M}}
\newcommand{\tomachhole}[1]{\leadsto_{#1}}
\newcommand{\tomach}{\tomachhole{}}
\newcommand{\tomachb}{\tomachhole{\beta}}
\newcommand{\subsym}{\symfont{sub}}
\newcommand{\seasym}{\symfont{sea}}
\newcommand{\tomachine}{\tomachhole\mach}
\newcommand{\tomachbeta}{\tomachhole{\beta}}
\newcommand{\tomachsub}{\tomachhole{\subsym}}
\newcommand{\tomachsea}{\tomachhole{\seasym}}
\newcommand{\tomachseaone}{\tomachhole{\seasym_1}}
\newcommand{\tomachseatwo}{\tomachhole{\seasym_2}}
\newcommand{\tomachseathree}{\tomachhole{\seasym_3}}
\newcommand{\sizepr}[1]{\sizep{#1}{\prsym}}
\newcommand{\withproofs}[1]{\ifthenelse{\boolean{withproofs}}{#1}{}}
\newcommand{\withoutproofs}[1]{\ifthenelse{\boolean{withproofs}}{}{#1}}
    \newcommand{\case}[1]{{\bf Case #1.}}
\newcommand{\caselight}[1]{\textit{#1}}
\newcounter{numberone}
\newcommand{\bigo}{{\mathcal{O}}}
\newcommand{\emptylist}{\epsilon}
\newcommand{\glamst}[4]{(#1,#2,#3,#4)}
\newcommand{\tmtwop}{\tmtwo'}
	\renewcommand{\terms}{\Lambda}
	\newcommand{\terms}{\Lambda}
	\newcommand{\absterms}{\symfont{Terms}}
\newcommand{\esterms}{\Lambda_{\symfont{es}}}
\newcommand{\skterms}{\Lambda_{\symfont{sk}}}
\newcommand{\semsub}[2]{[\![#1{\shortleftarrow}#2]\!]}
\definecolor{dgreen}{rgb}{0.0, 0.5, 0.0}
\newcommand{\Id}{\symfont{I}}
\newcommand{\mlterms}{\lterms_\taken}
\newcommand{\taken}{\circ}
\newcommand{\lat}[1]{\l^\taken{#1}.}
\newcommand{\skeld}[1]{\mathsf{init}(#1)}
\newcommand{\frontiersym}{\Uparrow}
\newcommand{\skelu}[1]{\overline{#1}^{\frontiersym}}
\newcommand{\appt}{\taken}
\newcommand{\absym}{\symfont{ab}}
\newcommand{\rtomarkblone}{\rootRew{\prsym1}}
\newcommand{\rtomarkbltwo}{\rootRew{\prsym2}}
\newcommand{\rtomarkblthree}{\rootRew{\prsym3}}
\newcommand{\rtomarkwhone}{\rootRew{\absym1}}
\newcommand{\rtomarkwhtwo}{\rootRew{\absym2}}
\newcommand{\rtomarkwhthree}{\rootRew{\absym3}}
\newcommand{\tomarkblone}{\Rew{\prsym1}}
\newcommand{\tomarkbltwo}{\Rew{\prsym2}}
\newcommand{\tomarkblthree}{\Rew{\prsym3}}
\newcommand{\tomarkwhone}{\Rew{\absym1}}
\newcommand{\tomarkwhtwo}{\Rew{\absym2}}
\newcommand{\tomarkwhthree}{\Rew{\absym3}}
\newcommand{\tomark}{\Rew{\symfont{alg}}}
\newcommand{\mtm}{m}
\newcommand{\mtmtwo}{n}
\newcommand{\mtmthree}{o}
\newcommand{\mctx}{M}
\newcommand{\mctxp}[1]{\mctx\ctxholep{#1}}
\newcommand{\disc}[2]{\symfont{disc}^{#2}(#1)}
\renewcommand{\disc}[2]{\symfont{mskel}^{#2}(#1)}
\newcommand{\discabs}[1]{\symfont{mskel}(#1)}
\newcommand{\skelabs}[1]{\symfont{skel}(#1)}
\newcommand{\skeldec}[2]{\symfont{sk}\symfont{dec}^{#2}(#1)}
\newcommand{\skeldecabs}[1]{\symfont{sk}\symfont{dec}(#1)}
\newcommand{\skelvars}{\theta}
\renewcommand{\skelvars}{\symfont{V}}
\newcommand{\sizet}[1]{\sizep{#1}{\taken}}
\newcommand{\splitfun}[1]{\symfont{split}({#1})}
\newcommand{\chain}{\symfont{C}}
\newcommand{\prsym}{\symfont{pr}}
\newcommand{\tomachpr}{\tomachhole{\prsym}}
\newcommand{\tomachsk}{\tomachhole{\sksym}}
\newcommand{\tomachss}{\tomachhole{\sssym}}
\newcommand{\Sctxs}{\mathcal{S}}
\newcommand{\Sskctxs}{\mathcal{S}_{\symfont{sk}}}
\newcommand{\EVctxs}{\mathcal{E}}
\newcommand{\EVctxsp}[1]{\EVctxs\ctxholep{#1}}
\newcommand{\EVskctxs}{\mathcal{E}_{\symfont{sk}}}
\newcommand{\EVskctxsp}[1]{\EVskctxs\ctxholep{#1}}
\newcommand{\Mctxs}{\mathcal{M}}
\newcommand{\Mctxsp}[1]{\Mctxs\ctxholep{#1}}
\newcommand{\toneed}{\Rew{\symfont{need}}}
\newcommand{\rtosk}{\rootRew{\sksym}}
\newcommand{\tosk}{\Rew{\sksym}}
\newcommand{\sssym}{\symfont{ss}}
\newcommand{\rtoss}{\rootRew{\sssym}}
\newcommand{\toss}{\Rew{\sssym}}
\newcommand{\skneedsym}{\sksym{\cdot}\symfont{need}}
\newcommand{\toskneed}{\Rew{\skneedsym}}
\newcommand{\skdb}{\sksym{\cdot}\db}
\newcommand{\toskneeddb}{\Rew{\skdb}}
\newcommand{\toskdb}{\toskneeddb}
\DeclareRobustCommand{\circbullet}{\mathbin{\vphantom{\circ}\text{\circbullet@}}}
\newcommand{\circbullet@}{%
  \check@mathfonts
  \m@th\ooalign{%
    \clipbox{0 0 0 {\dimexpr\height-\fontdimen22\textfont2}}{$\bullet$}\cr
    $\circ$\cr
  }%
}
\newcommand{\fourstate}[4]{(#1,#2,#3,#4)}
\newcommand{\compilrelsym}{\triangleleft}
\newcommand{\compilrel}[2]{#1\compilrelsym#2}
\newcommand{\run}{r}
\newcommand{\runtwo}{\run'}
\newcommand{\runthree}{\run''}
\newcommand{\flesh}[1]{\symfont{flesh}(#1)}
\newcommand{\nfo}[1]{\symfont{nf}_{\seasym}(#1)}
\newcommand{\tosmad}{\leadsto_{\symfont{SMAD}}}
\renewcommand{\letexp}{\mathsf{let}}
\newcommand{\letin}[3]{{\sf let}\ #1=#2\ {\sf in}\ #3}
\tikzset{
node distance=1.3cm, auto,
every node/.style={font=\scriptsize },
ocenter/.style={baseline={([yshift=-.5ex, xshift=-.5ex]current bounding box)}},  
labelBeginAbove/.style={postaction={decorate,decoration={markings,mark=at position 0 with {\node[inner sep= 0.6pt, above=1pt]{\tiny #1};}} } },
labelBeginBelow/.style={postaction={decorate,decoration={markings,mark=at position 0 with {\node[inner sep= 0.6pt, below=1pt]{\tiny #1};}}}},
labelEndAbove/.style={postaction={decorate,decoration={markings,mark=at position 1 with {\node[inner sep= 0.6pt, above=1pt]{\tiny #1};}}}},
labelEndBelow/.style={postaction={decorate,decoration={markings,mark=at position 1 with {\node[inner sep= 0.6pt, below=1pt]{\tiny #1};}}}},
labelEndRight/.style={postaction={decorate,decoration={markings,mark=at position 1 with {\node[inner sep= 0.6pt, right=1pt]{\tiny #1};}}}},
labelEndLeft/.style={postaction={decorate,decoration={markings,mark=at position 1 with {\node[inner sep= 0.6pt, left=1pt]{\tiny #1};}}}}
}
\newcommand{\NoteProof}[1]{
	\ifthenelse{\boolean{withlinkedproofs}}{\ifthenelse{\boolean{appendix}}{
	\marginnote{Originally at p. \pageref{#1}}
	}{
	\marginnote{{Proof at p.\,{\pageref{app:#1}}}}
	}
	}{}
}
\newcommand{\applabel}[1]{$\phantomsection\label{app:#1}$}
\renewcommand{\cons}{{:}}
\renewcommand{\rtolsv}{\rtolsnd}
\begin{document}

\maketitle

\begin{abstract}
Skeletal call-by-need is an optimization of call-by-need evaluation also known as “fully lazy sharing”: when the duplication of a value has to take place, it is first split into “skeleton”, which is then duplicated, and “flesh” which is instead kept shared.

Here, we provide two cost analyses of skeletal call-by-need. Firstly, we provide a family of terms showing that skeletal call-by-need can be asymptotically exponentially faster than call-by-need in both time \emph{and} space; it is the first such evidence, to our knowledge.

Secondly, we prove that skeletal call-by-need can be implemented efficiently, that is, with bi-linear overhead. This result is obtained by providing a new smooth presentation of ideas by Shivers and Wand for the reconstruction of skeletons, which is then smoothly plugged into the study of an abstract machine following the distillation technique by Accattoli et al.

\end{abstract}

\section{Introduction}
\label{sec:introduction}
Call-by-need evaluation of the $\l$-calculus was introduced by Wadsworth in 1971 as an optimization of the untyped $\l$-calculus \cite{Wad:SemPra:71}. It combines the  advantages of both call-by-name and call-by-value: it  avoids diverging on unused arguments, as in call-by-name, and when arguments are evaluated they do so only once, as in call-by-value. Such a combination is trickier to specify than call-by-name or call-by-value. In particular, Wadsworth's original presentation via graph reduction was not easy to manage. 

In the 1990s, call-by-need was adopted by the then new Haskell language, and more manageable term-based calculi were developed by Launchbury \cite{DBLP:conf/popl/Launchbury93} and Ariola et al. \cite{DBLP:conf/popl/AriolaFMOW95}, and extended with non-determinism by Kutzner and Schmidt-Schau{\ss} \cite{DBLP:conf/icfp/KutznerS98}. 
Additionally, Sestoft studied abstract machines for call-by-need \cite{DBLP:journals/jfp/Sestoft97}. In 2014, Accattoli et al. introduced the \emph{\cbneed linear substitution calculus} (\cbneed LSC) \cite{DBLP:conf/icfp/AccattoliBM14}, a simplification of  Ariola et al.'s calculus resting on the \emph{at a distance} approach to explicit substitutions by Accattoli and Kesner \cite{DBLP:conf/csl/AccattoliK10}. It became a reference presentation, used in many recent studies \cite{DBLP:conf/fossacs/Kesner16,DBLP:journals/pacmpl/BalabonskiBBK17,DBLP:conf/ppdp/AccattoliB17,DBLP:conf/fossacs/KesnerRV18,DBLP:conf/ppdp/BarenbaumBM18,DBLP:conf/esop/AccattoliGL19,DBLP:journals/lmcs/KesnerPV24,DBLP:conf/fscd/BalabonskiLM21,DBLP:conf/csl/AccattoliL22,DBLP:conf/fscd/AccattoliL24}. As far as this paper is concerned, the study in \cite{DBLP:conf/icfp/AccattoliBM14} is particularly relevant: beyond introducing the \cbneed LSC that we shall use, it also relates it with abstract machines for \cbneed, building a neat bisimulation between the two, deemed \emph{distillation}.

\subparagraph{Skeletal Call-by-Need.} \cbneed was introduced together with a further optimization usually called \emph{fully lazy sharing} and here rather referred to as \emph{Skeletal \cbneed} (because \emph{fully lazy sharing} is not really descriptive). The idea is to refine the duplication of a value $\val$ so as to duplicate only the \emph{skeleton} of $\val$ while keeping its \emph{flesh} shared, thus ending up sharing more than in ordinary call-by-need. As an example, if $\val = \la\var\la\vartwo\varthree\varthree\var(\vartwo\varthree)$ then the skeleton of $\val$ is $\skelabs\val \defeq \la\var\la\vartwo\varfour\var(\vartwo\varthree)$ and its flesh is the context $\flesh\val \defeq \letin\varfour{\varthree\varthree}\ctxhole$. That is, the flesh collects and removes the maximal (non-variable) sub-terms of $\val$ that are \emph{free}, \ie such that none of their free variables is captured in $\val$.

Skeletal \cbneed is less studied than \cbneed because it is even trickier to define, as it requires to compute the skeletal decomposition of a value (that is, the split into skeleton and flesh). Early works by Turner \cite{DBLP:journals/spe/Turner79}, Hughes \cite{hughes:thesis}, and Peyton Jones \cite{DBLP:books/ph/Jones87} focus on implementative aspects. In the last ten years or so, various works provided operational insights. Balabonski developed an impressive theoretical analysis of Skeletal \cbneed, showing the equivalence of various presentations \cite{DBLP:conf/popl/Balabonski12} and studying the relationship with \levy's optimality \cite{DBLP:conf/icfp/Balabonski13}. Soon after that, Gundersen et al. showed that the \emph{atomic $\l$-calculus}, a $\l$-calculus with sharing issued from deep inference technology, naturally specifies the duplication of the skeleton \cite{DBLP:conf/lics/GundersenHP13}. Next, Kesner et al. \cite{DBLP:journals/lmcs/KesnerPV24} merged the insights of the \cbneed LSC and the atomic $\l$-calculus obtaining a Skeletal \cbneed LSC, that is, a presentation at a distance of Skeletal \cbneed.

\subparagraph{This Paper.} The present work extends the operational research line of Balabonski, Gundersen et al., and Kesner et al. with cost analyses of Kesner et al.'s Skeletal \cbneed LSC, aiming at closing the gap with some of the earlier implementative studies at the same time. Generally speaking, we aim at adding Skeletal \cbneed to the list of concepts covered by the theory of cost-based analyses of abstract machines and sharing mechanisms by Accattoli and co-authors \cite{DBLP:conf/icfp/AccattoliBM14,DBLP:journals/corr/AccattoliL16,DBLP:journals/iandc/AccattoliC17,DBLP:conf/ppdp/AccattoliB17,DBLP:conf/ppdp/AccattoliCGC19,DBLP:conf/ppdp/CondoluciAC19,DBLP:journals/scp/AccattoliG19,DBLP:journals/pacmpl/AccattoliLV21,DBLP:conf/lics/AccattoliCC21,DBLP:journals/lmcs/AccattoliLV24}. We provide two analyses.

\subparagraph{First Analysis: Exponential Time and Space Speed-Ups.} The first analysis shows that in some cases Skeletal \cbneed is considerably more efficient than \cbneed. For that, we exhibit a family of $\l$-terms $\{\tm_n\}_{n\in\nat}$ such that $\tm_n$ evaluates in $\Omega(2^n)$ time and space in \cbneed, while requiring only $\bigo(n)$ space and time in Skeletal \cbneed. 
The literature only shows examples of single $\l$-terms where Skeletal \cbneed takes a few steps less than \cbneed, but never proves any asymptotic speed-up nor deals with space. Lastly, proving that our family evaluates within the given bounds requires a fine, non-trivial analysis of evaluation sequences.

\subparagraph{Second Analysis: Bi-Linear Overhead.} The second analysis  shows that Skeletal \cbneed can be implemented efficiently, with an overhead that is \emph{bi-linear}, \ie linear in both the number of $\beta$-steps and the size of the initial term. This time, the same is true for \cbneed. The insight is that computing  skeletons  does require additional implementative efforts, but it comes at no asymptotic price.
This work was indeed triggered by showing that the operational reconstruction of the skeleton at work in Gundersen et al. \cite{DBLP:conf/lics/GundersenHP13} and Kesner et al. \cite{DBLP:journals/lmcs/KesnerPV24} can be implemented in linear time. Their specifications rest on side conditions about variables in sub-terms that lead to non-linear overhead, if implemented literally. In fact, the linear time reconstruction of skeletons by Shivers and Wand \cite{DBLP:journals/fuin/ShiversW10}, from 2010, avoids the side conditions  and pre-dates the recent works. The insightful study in \cite{DBLP:journals/fuin/ShiversW10}, however, is technical and not well-known. Here we simplify it, recasting it in the context of abstract machines.

\subparagraph{Efficient Skeletal Decompositions, or Shivers and Wand Reloaded.} Shivers and Wand present their ideas using graph-reduction for $\l$-terms. They reconstruct the skeletal decomposition via a visit of the value to skeletonize seen as a graph, based on two key points:
\begin{enumerate}
\item \emph{Bi-directional edges}: all the edges of their graphs can be traversed in \emph{both} directions. When term graphs are seen as mathematical objects, edges are often directed but the theoretical study can also look at edges in reverse. At the implementative level, however, things are different. For a parsimonious use of space, most graph-based implementations (\eg the proof nets inspired ones in \cite{DBLP:conf/ppdp/AccattoliB17,DBLP:conf/lics/AccattoliCC21,DBLP:conf/fscd/AccattoliC24}) indeed restrict some edges to be traversed in one direction only (namely, parent to children), thus sparing some pointers. Therefore, bi-directional edges are an unusual approach, usually considered redundant.

\item \emph{Abstractions, occurrences, and upward reconstruction}: in most graphical representations, there are edges between abstractions and the occurrences of their bound variables, usually directed only from the occurrences to the abstraction. Shivers and Wand's insight is that, with bi-directional edges, one can move from the abstraction back to the occurrences of the variable and then reconstruct the skeleton by visiting upward from the occurrences.
\end{enumerate}
Shivers and Wand's study is graph-based and low-level, interleaved with code snippets. 
Here, we provide a new neat presentation of their algorithm as a simple high-level rewriting system over terms, circumventing graphs and low-level details. The reformulation of their study is the main technical innovation of this paper. While the key concepts are due to Shivers and Wand, we believe that our new presentation is a notable contribution in that it allows their concepts to blossom, simplifying their study and making them more widely accessible. 

\subparagraph{Distillation.} We then smoothly lift the distillation-based study of \cbneed by Accattoli et al. \cite{DBLP:conf/icfp/AccattoliBM14} to Skeletal \cbneed, providing an abstract machines using the reloaded  algorithm for skeletal decompositions as a black-box. The new abstract machine is then shown to be implementable within a bi-linear overhead, completing the second analysis.

The overall insight is that the apparently linear space inefficiency of adding pointers for bi-directional traversals of terms/graphs enables optimizations---namely, skeletal duplications---that can bring, in some cases, exponential speed-ups for both time and space.

%
\subparagraph{Implementation.} We provide an OCaml implementation of the abstract machine, and in particular of the reloaded skeleton reconstruction algorithm it rests upon. The implementation is there to  validate the complexity analyses, as well as to integrate the low-level aspect of Shivers and Wand's study. 
The implementation is discussed in \refapp{implementation}.

\ifthenelse{\boolean{isarxiv}}{}{\subparagraph{Proofs.} A long version with proofs in the Appendix is on arXiv~\cite{accattoli2025costskeletalcallbyneedsmoothly}.}

\section{Background: Call-by-Need}
\label{sect:cbneed}

In this section, we recall \cbneed, presented as a strategy of the \cbneed linear substitution calculus (shortened to LSC). For the sake of simplicity, we do not distinguish here between the strategy and the calculus, and use \emph{\cbneed LSC} to refer to both.

 Accattoli and Kesner's LSC \cite{DBLP:conf/rta/Accattoli12,DBLP:conf/popl/AccattoliBKL14} is a micro-step $\l$-calculus with explicit substitutions. \emph{Micro-step} means that substitutions act on one variable occurrence at a time, rather than \emph{small-step}, that is, on all occurrences at the same time. The \cbneed LSC was introduced by  Accattoli et al. \cite{DBLP:conf/icfp/AccattoliBM14} as a simplified variant of Ariola et al.'s presentation of \cbneed \cite{DBLP:conf/popl/AriolaFMOW95}. The simplification is the use of rewriting rules \emph{at a distance}---that is, adopting contexts in the root rules---that allow one to get rid of the commuting rewriting rules of Ariola et al.

\begin{figure}[t!]
\centering
\fbox{
\begin{tabular}{c}
$\begin{array}{r@{\hspace{.3cm}} rll@{\hspace{.25cm}}|@{\hspace{.25cm}}r@{\hspace{.3cm}} rll}
\textsc{Terms} &\esterms\ni\tm,\tmtwo & \grameq & \var \mid \val \mid \tm\tmtwo \mid \tm\esub\var\tmtwo
&
\textsc{Values} &\val,\valtwo & \grameq & \la\var\tm \text{ with }\tm\in\terms
\end{array}$
\\[4pt]
$\begin{array}{r@{\hspace{.3cm}} rll}
\textsc{Substitutions ctxs} &\Sctxs\ni\sctx,\sctxtwo & \grameq & \ctxhole \mid  \sctx\esub\var\tmtwo
\\
\textsc{\cbneed Evaluation ctxs} &\EVctxs\ni\evctx,\evctxtwo & \grameq & \ctxhole \mid \evctx\tmtwo  \mid \evctx\esub\var\tmtwo \mid \evctxfp\var\esub\var\evctxtwo
\end{array}$
\\[5pt]\hline
\textsc{Root rules}
\\[2pt]
$\begin{array}{r@{\hspace{.5cm}} rlll}
\textsc{Distant $\beta$} &\sctxp{\la\var\tm}\tmtwo & \rtodb & \sctxp{\tm\esub\var\tmtwo}
\\[2pt]
\textsc{\cbneed linear subst.} &\evctxfp\var\esub\var{\sctxp\val} & \rtolsnd & \sctxp{\evctxfp\val\esub\var\val}
\end{array}$
\\[10pt]\hline
\begin{tabular}{c|c}
$\begin{array}{rll@{\hspace{.25cm}}|@{\hspace{.25cm}}rll}
\multicolumn{6}{c}{\textsc{Rewriting rules}}
\\
\todb & \grameq & \EVctxsp\rtodb
&
\tolsneed & \grameq & \EVctxsp\rtolsnd
\end{array}$

&
$\begin{array}{rlll}
\multicolumn{3}{c}{\textsc{Notation}}
\\
\toneed & \grameq & \todb \cup \tolsneed
\end{array}$
\end{tabular}
\end{tabular}
}
\caption{The Call-by-Need Linear Substitution Calculus (\cbneed LSC).}
\label{fig:cbneed-strategy}
\end{figure}
\subparagraph{Terms and Contexts.} The set of terms of the  $\l$-calculus is noted $\terms$.  The \cbneed LSC is defined in \reffig{cbneed-strategy}.  The set of terms of the \cbneed LSC is noted $\esterms$. They add  \emph{explicit substitutions}  $\tm\esub{\var}{\tmtwo}$ (shortened to ESs) to $\l$-terms, that is 
a more compact notation for $\letin\var\tmtwo\tm$, but where the order of evaluation between $\tm$ and $\tmtwo$ is a priori not fixed; evaluation contexts shall fix it.
The set $\fv{\tm}$ of \emph{free} variables is defined as expected, in particular,
$\fv{\tm\esub{\var}{\tmtwo}} \defeq
(\fv{\tm} \setminus \set{\var}) \cup \fv{\tmtwo}$. Both $\la{\var}\tm$ and $\tm\esub{\var}{\tmtwo}$ bind $\var$ in $\tm$, and terms are identified up to $\alpha$-renaming. 
A term $\tm$ is \emph{closed} if $\fv{\tm} = \emptyset$, \emph{open} otherwise.
Meta-level capture-avoiding substitution is noted $\tm\isub\var\tmtwo$. The size $\size\tm$ of $\tm$ is the number of its construct.

Note that values are only abstraction; this choice is standard in the literature on \cbneed. Note that, moreover, their bodies are $\l$-terms \emph{without} ESs. This is in order to avoid dealing with calculating the skeleton of ESs in the next section. It is a harmless choice because evaluation (that creates ESs) never enters inside abstraction.

Contexts are terms with exactly one occurrence of the \emph{hole} $\ctxhole$, an additional constant, standing for a removed sub-term. We shall heavily use two notions of contexts: \emph{substitution contexts} $\sctx$, that are lists of ESs, and evaluation contexts $\evctx$.

The main operation about contexts is \emph{plugging} $\evctxp{\tm}$ where the hole $\ctxhole$ in context 
$\evctx$ is replaced by $\tm$. Plugging, as usual with contexts, can
capture variables; for instance $((\ctxhole \tm)\esub\var\tmtwo)\ctxholep\var = (\var\tm)\esub\var\tmtwo$. 
We write $\evctxfp{\tm}$ when we want to stress that the context $\evctx$ does not capture the free variables of $\tm$.

\subparagraph{Rewriting Rules.}
The reduction rules of the \cbneed LSC are slightly unusual as they use \emph{contexts} both to allow one to reduce redexes located in sub-terms (via evaluation contexts $\evctx$), which is standard, \emph{and} to define the redexes themselves (via both substitution $\sctx$ and evaluation contexts $\evctx$), which is less standard. This approach is 
called \emph{at a distance} and related to cut-elimination on proof nets; see Accattoli \cite{DBLP:journals/tcs/Accattoli15,DBLP:conf/ictac/Accattoli18} for the link with proof nets. The notion of evaluation context $\evctx$ used for \cbneed is exactly the one by Ariola et al. \cite{DBLP:conf/popl/AriolaFMOW95}, that extend call-by-name evaluation contexts with the production $\evctxfp\var\esub\var\evctxtwo$, which---by enterings ESs---is what enables the memoization aspect of \cbneed.

The \emph{distant beta rule} $\rtodb$ is essentially the $\beta$-rule, except that the argument goes into a new ES, rather then being immediately substituted, and that there can be a substitution context $\sctx$ in between the abstraction and the argument. Example: $(\la\var\vartwo)\esub\vartwo\tm\tmtwo \rtom \vartwo\esub\var\tmtwo\esub\vartwo\tm$. One with on-the-fly $\alpha$-renaming is $(\la\var\vartwo)\esub\vartwo\tm\vartwo \rtom \varthree\esub\var\vartwo\esub\varthree\tm$.

The \emph{linear substitution rule by need} $\rtolsnd$ replaces a single variable occurrence by a copy of the value in the ES, commuting the substitution context around the value (if any) out of the ES. Example: $(\var\var)\esub\var{\Id\esub\vartwo\tm} \rtolsv (\Id\var)\esub\var\Id\esub\vartwo\tm$.

The use of substitution contexts $\sctx$ in the root rules is what allows one to avoid the two commuting rules of Ariola et al. \cite{DBLP:conf/popl/AriolaFMOW95}, namely (rephrasing their $\letexp$-expressions as ESs)
$\tm\esub\var\tmthree\tmtwo \rootRew{let\text{-}C} (\tm\tmtwo)\esub\var\tmthree$
 and $\tm\esub\vartwo{\tmtwo\esub\var\tmthree} \rootRew{let\text{-}A} \tm\esub\vartwo\tmtwo\esub\var\tmthree$.

The two root rules $\rtodb$ and $\rtolsv$ are then closed by evaluation contexts $\evctx$. In \reffig{cbneed-strategy}, we use the compact notation $\todb \defeq \EVctxsp\rtodb$ to denote that $\todb$ is defined as $\evctxp\tm \todb \evctxp\tmtwo$ if $\tm\rtodb\tmtwo$, and similarly for the other rule.


\section{Skeletal Call-by-Need}
\label{sect:skeletal}
In this section, we present our version of Skeletal \cbneed, which is a very minor variant of the one by Kesner et al. \cite{DBLP:journals/lmcs/KesnerPV24} (the difference is discussed at the end of the section), itself defined by tweaking the \cbneed LSC via the notion of skeleton.

The basic idea of skeletal \cbneed is that every value $\val\defeq \la\var\tm$ can be split in two: the skeleton $\la\var\tmtwo$, which is a sort of sub-term of $\tm$, and the flesh $\sctx$, which is a substitution context collecting the maximal sub-terms of $\val$ that do not depend on $\var$. The flesh is extracted from $\val$ before duplicating it, as to avoid duplicating the code (and possibly redexes) of the flesh, thus increasing sharing. Some definitions are in order.

\subparagraph{Free Sub-Terms.} Extractable sub-terms are called \emph{free sub-terms}, defined next. We actually need a parametrized notion of free sub-term, which is relative to a set of variables $\skelvars$ that are supposed not to occur in the sub-term. 

\begin{definition}[Maximally free sub-terms]
Let $\tm \in \lterms$ and $\skelvars$ be a set of variables. A sub-term $\tmtwo$ of $\tm$ is (the term part of) a decomposition $\tm=\ctxp\tmtwo$ for some context $\ctx$; additionally, $\tmtwo$ is \emph{$\skelvars$-free in $\tm$} if no variable in $\fv{\tmtwo}$ is captured by $\ctx$ and if $\fv\tmtwo\cap \skelvars = \emptyset$, and 
\emph{maximally $\skelvars$-free} if it is not a strict sub-term of any other $\skelvars$-free sub-term in $\tm$.
\end{definition}

\subparagraph{Skeleton.} The skeletal decomposition of a term $\tm$ is the splitting of $\tm$ into its maximal free sub-terms on one side, collected as a substitution context $\sctx$ called \emph{the flesh}, and what is left of $\tm$ after the removal, that is, its \emph{skeleton}. As for free sub-terms, we rather define parametrized notions of skeletal decomposition and skeleton. Then, we specialize the definitions for values.

\begin{definition}[(Relative) Skeleton]
Let $\skelvars$ be a set of variables. The skeletal decomposition of $\tm$ relative to a set of variables $\skelvars$ is the pair $(\tmtwo,\sctx)$ of an ordinary $\l$-term $\tmtwo$ and a substitution context $\sctx$ defined as $\skeldec{\tm}{\skelvars}  \defeq  (\var, \ctxhole\esub\var\tm)$ with $\var$ fresh, if $\fv \tm \cap \skelvars = \emptyset$ and $\tm$ is not a variable, otherwise:
  \begin{center}$\begin{array}{r@{\hspace{.2cm}}l@{\hspace{.2cm}}lllllll}
\skeldec{\var}{\skelvars} & \defeq & (\var,\ctxhole) 
\\
\skeldec{\la\var\tmtwo}{\skelvars} & \defeq & (\la{\var}\tmthree,\sctx)
&\text{where }\skeldec{\tmtwo}{\skelvars \cup \set\var} = (\tmthree,\sctx)
\\
\skeldec{\tmtwo\ap\tmthree}{\skelvars} & \defeq & (\tmfour\tmfive, \sctxtwop\sctx)
&\text{where }\skeldec{\tmtwo}{\skelvars \cup \set\var} = (\tmfour,\sctx) \text{ and }\skeldec{\tmthree}{\skelvars \cup \set\var} = (\tmfive,\sctxtwo)
\\
  \end{array}$\end{center}
The skeleton, the flesh, and the skeletal decomposition of $\la\var\tm$ are defined respectively as $\skelabs{\la\var\tm} \defeq \la\var\tmtwo$, $\flesh{\la\var\tm}\defeq\sctx$, and $\skeldecabs{\la\var\tm} \defeq (\skelabs{\la\var\tm},\flesh{\la\var\tm})$, where $\skeldec\tm{\set\var} = (\tmtwo,\sctx)$.
\end{definition}

\begin{figure}[t!]
\centering
\fbox{
\begin{tabular}{c}
$\begin{array}{r@{\hspace{.25cm}} rll}
\textsc{Terms} &\skterms\ni\tm,\tmtwo & \grameq & \var \mid \val \mid \tm\tmtwo \mid \tm\esub\var\tmtwo \mid \tm\skesub\var\val \text{ with }\val=\skelabs\val
\\
\textsc{Values} &\val,\valtwo & \grameq & \la\var\tm \text{ with }\tm\in\terms

\\[4pt]
\textsc{Substitutions ctxs} &\Sskctxs\ni\sctx,\sctxtwo & \grameq & \ctxhole \mid  \sctx\esub\var\tmtwo \mid  \sctx\skesub\var\val 
\\
\textsc{Skeletal eval. ctxs} &\EVskctxs\ni\evctx,\evctxtwo & \grameq & \ctxhole \mid \evctx\tmtwo  \mid \evctx\esub\var\tmtwo \mid  \evctx\skesub\var\val \mid \evctxfp\var\esub\var\evctxtwo
\end{array}$
\\[5pt]\hline
\textsc{Root rules}
\\[2pt]
$\begin{array}{r@{\hspace{.25cm}} rlll}
\textsc{Distant $\beta$} &\sctxp{\la\var\tm}\tmtwo & \rootRew{\db} & \sctxp{\tm\esub\var\tmtwo}
\\[2pt]
\textsc{Skeletonization} &\evctxfp\var\esub\var{\sctxp\val} & \rtosk & \sctxp{\sctxtwop{\evctxfp\var\skesub\var\valtwo}} 
\mbox{ with }\skeldecabs{\val}=(\valtwo,\sctxtwo)
\\[2pt]
\textsc{Skeletal subst.} &\evctxfp\var\skesub\var\val & \rtoss & \evctxfp\val\skesub\var\val
\end{array}$
\\[20pt]\hline
\begin{tabular}{c|c}
$\begin{array}{rll@{\hspace{.25cm}}|@{\hspace{.25cm}}rll}
\multicolumn{6}{c}{\textsc{Rewriting rules}}
\\
\toskneeddb & \grameq & \EVskctxsp\rtodb
&
\tosk & \grameq & \EVskctxsp\rtosk
\\
\toss & \grameq & \EVskctxsp\rtoss
\end{array}$

&
$\begin{array}{rlll}
\multicolumn{3}{c}{\textsc{Notations}}
\\
\toskneed & \grameq & \toskneeddb \cup \tosk \cup \toss
\\[4pt]
\multicolumn{3}{l}{\semsub\var\tm\text{ stands for either }\esub\var\tm\text{ or }\skesub\var\val}
\
\end{array}$
\end{tabular}
\end{tabular}
}
\caption{The skeletal \cbneed strategy $\toskneed$.}
\label{fig:skeletal-strategy}
\end{figure}
\subparagraph{Skeletal \cbneed.} We adopt the presentation of skeletal \cbneed by Kesner et al. \cite{DBLP:journals/lmcs/KesnerPV24} (refered to as \emph{fully lazy \cbneed} by them), that adds a new \emph{skeletal (explicit) substitution} $\tm\skesub\var\val$ containing \emph{skeletal values}, that is, values that are equal to their skeleton. The language of terms and the strategy is defined in \reffig{skeletal-strategy}. The new set of terms is noted $\skterms$.

\textbf{\textit{Notation}}: when we need to treat explicit and skeletal substitution together, we use the notation $\semsub\var\tm$, which stands for either $\esub\var\tm$ or (with a slight abuse) $\skesub\var\val$.

The \cbneed substitution rule $\tolsneed$ for a redex $\evctxfp\var\esub\var{\sctxp\val}$ is refined via two rules. The first \emph{skeletonization} rule $\tosk$ does three further mini tasks at once: firstly, it decomposes $\val$ in its skeleton $\valtwo$ and its flesh $\sctxtwo$ (thus obtaining $\evctxfp\var\esub\var{\sctxp{\sctxtwop\val}}$); secondly, it re-organizes the term commuting $\sctx$ and $\sctxtwo$ out of the ES (obtaining  $\sctxp{\sctxtwop{\evctxfp\var\esub\var\valtwo}}$); thirdly, it turns the ES into a skeletal substitutions, finally producing $\sctxp{\sctxtwop{\evctxfp\var\skesub\var\valtwo}}$.

The second \emph{skeletal substitution} rule $\toss$ simply replaces $\var$ with $\valtwo$.

Note that evaluation contexts $\evctx$ are adapted to include skeletal substitutions, but that they do not enter inside them, since their content is always a value. The distant $\beta$ rule for Skeletal \cbneed is noted $\toskneeddb$ to distinguish it from the one for \cbneed. The Skeletal \cbneed reduction $\toskneed$ is the union of $\toskneeddb$, $\tosk$, and $\toss$.

The next proposition is an easy adaptation of the one for \cbneed in Accattoli et al. \cite{DBLP:conf/icfp/AccattoliBM14}.

\begin{toappendix}
\begin{proposition}
\label{skneed-determinism}\NoteProof{skneed-determinism}
The reduction $\toskneed$ is deterministic.
\end{proposition}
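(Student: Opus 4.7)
The plan is to follow the standard pattern for determinism of evaluation strategies in calculi at a distance, adapting the \cbneed proof of Accattoli et al.~\cite{DBLP:conf/icfp/AccattoliBM14} to the presence of skeletal substitutions. Determinism reduces to two ingredients: (i) each root rule is a function, so the contractum is uniquely determined by the redex; and (ii) every term admits at most one decomposition $\tm = \evctxp{r}$ with $\evctx \in \EVskctxs$ and $r$ a redex for one of $\rtodb$, $\rtosk$, $\rtoss$. Point (i) is immediate for $\rtodb$ and $\rtoss$, and for $\rtosk$ it holds because $\skeldecabs{\cdot}$ is a syntactic function of its argument.

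Point (ii) is proved by induction on $\tm$. Variables, abstractions, and applications proceed exactly as in the \cbneed LSC: an application $\tm\tmtwo$ either is a $\rtodb$-redex (when $\tm$ is an abstraction under a substitution context) or inherits its unique decomposition from $\tm$ via the production $\evctx\tmtwo$. For a skeletal ES $\tm\skesub\var\val$, the only applicable production is $\evctx\skesub\var\val$, since $\EVskctxs$ has no rule entering the value of a skeletal substitution; the decomposition is therefore inherited from $\tm$.

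The delicate case is a regular ES $\tm\esub\var\tmtwo$, for which two productions could a priori apply: either the hole lies in $\tm$, via $\evctx\esub\var\tmtwo$, or $\tm = \evctxfp\var$ and the hole lies under the substitution, via $\evctxfp\var\esub\var\evctxtwo$. These choices are forced: if $\tm$ decomposes around a redex only the first applies, while otherwise $\tm$ is a normal form and the second applies only when $\tm$ is exactly a hooked occurrence of $\var$, which by induction pins down $\evctx$ uniquely. A subtle point is that when $\tmtwo = \sctxp\val$ the whole ES is the $\rtosk$-redex (with $\evctxtwo = \ctxhole$), and the hook production cannot chain further into $\sctx$ to grab a competing redex, since an abstraction is not a variable-hook and so the hook production cannot be re-applied at an inner ES whose left-hand side is a value.

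The remaining step is non-overlap of the three root rules, which is immediate syntactically: $\rtodb$ fires at an application, whereas $\rtosk$ and $\rtoss$ fire at an ES with a variable in hook position, and the two substitution rules are distinguished by the flavour of ES (plain $\esub\var\cdot$ containing $\sctxp\val$ versus skeletal $\skesub\var\val$). The main obstacle is the hook case analysed above, where one must rule out simultaneous decompositions via the two productions on regular ESs; this is the step where the argument differs most from a call-by-name determinism proof. Combining (i), (ii) and non-overlap yields determinism of $\toskneed$.
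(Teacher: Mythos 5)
Your overall strategy---reducing determinism to functionality of the root rules plus uniqueness of the decomposition of a term into evaluation context and root redex, proved by induction on the term---is the same as the paper's, and your case analysis (applications, skeletal ESs, the delicate regular-ES case) mirrors the paper's main induction. The gap is that the properties of \emph{hooked} terms $\evctxfp\var$ on which your argument leans at every delicate point are never established, and they do not follow from the induction as you set it up. Concretely, you use three facts: (1) a hooked term $\evctxfp\var$ is $\toskneed$-normal (this is what justifies ``if $\tm$ decomposes around a redex only the first production applies'', and what rules out the mixed case of one hole in $\tm$ and the other under the substitution); (2) the hook decomposition $\tm = \evctxfp\var$ is itself unique (needed to ``pin down $\evctx$ uniquely'', and already needed at your point (i): contrary to your claim, functionality of $\rtoss$ and $\rtosk$ is \emph{not} immediate, since the contractum $\evctxfp\val\skesub\var\val$ depends on $\evctx$ and not just on the redex as a term); and (3) no answer $\sctxp\val$ is a hooked term (your ``an abstraction is not a variable-hook'' covers only the base case; blocking the hook production from chaining into $\sctx$ needs the full statement $\sctxp\val \neq \evctxfp\var$, which requires its own induction). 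None of these is delivered by your induction hypothesis, which only asserts uniqueness of decompositions around \emph{redexes}; uniqueness and normality of decompositions around a variable hook are different statements about different decompositions.

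The paper fills exactly this hole with an auxiliary lemma, proved by a separate structural induction on the evaluation context: for $\tm = \evctxfp\var$, (i) $\tm \neq \sctxp\val$ for every $\sctx$ and $\val$; (ii) $\tm = \evctxtwofp\vartwo$ implies $\evctxtwo = \evctx$ and $\vartwo = \var$; (iii) $\tm$ is a $\toskneed$-normal form. With this lemma in hand, your case analysis goes through essentially as written; without it, or an equivalent strengthening of your induction statement to cover hook decompositions, the claims ``only the first applies'', ``by induction pins down $\evctx$ uniquely'', and ``cannot chain further into $\sctx$'' are assertions rather than consequences.
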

\end{toappendix}

\subparagraph{Structural Equivalence.} To study the relationship with abstract machines, we shall need the concept of structural equivalence, defined in \reffig{structural-equivalence}. The concept is standard for $\l$-calculi with ESs at a distance, and it also standard to use it in relationship with abstract machines, as systematically done by Accattoli et al. \cite{DBLP:conf/icfp/AccattoliBM14}. Intuitively, structural equivalence allows one to move explicit/skeletal substitutions around because the move does not change the behaviour of the term. This fact is captured by the following strong bisimulation property, proved by an immediate adaptation of the similar property in \cite{DBLP:conf/icfp/AccattoliBM14}.
\begin{figure}[t!]
\centering
\fbox{
\begin{tabular}{c}
\textsc{Root structural equalities}
\\[4pt]
$\begin{array}{r@{\hspace{.25cm}} rlll}
\textsc{Expl. subst.} & \evctxp{\tm\esub\var\tmthree} & \streq_1 & \evctxp\tm\esub\var\tmthree & \text{if }\dom\evctx\cap\fv{\tm\esub\var\tmthree}=\emptyset\text{ and }\var\notin\fv\evctx
\\[3pt]
\textsc{Skel. subst.} & \evctxp{\tm\skesub\var\val} & \streq_2 & \evctxp\tm\skesub\var\val & \text{if }\dom\evctx\cap\fv{\tm\skesub\var\val}=\emptyset\text{ and }\var\notin\fv\evctx
\end{array}$
\\[10pt]\hline
\begin{tabular}{ccccccccc}
\multicolumn{6}{c}{\textsc{Closure rules}}
\\
\AxiomC{$\tm \streq_1 \tmtwo$}
\UnaryInfC{$\tm\streq\tmtwo$}
\DisplayProof
&
\AxiomC{$\tm \streq_2 \tmtwo$}
\UnaryInfC{$\tm\streq\tmtwo$}
\DisplayProof
&
\AxiomC{}
\RightLabel{\scriptsize ref}
\UnaryInfC{$\tm\streq\tm$}
\DisplayProof
&
\AxiomC{$\tm \streq \tmtwo$}
\RightLabel{\scriptsize sym}
\UnaryInfC{$\tmtwo\streq\tm$}
\DisplayProof
&
\AxiomC{$\tm \streq \tmtwo$}
\AxiomC{$\tmtwo\streq \tmthree$}
\RightLabel{\scriptsize tr}
\BinaryInfC{$\tm\streq\tmthree$}
\DisplayProof
&
\AxiomC{$\tm \streq \tmtwo$}
\RightLabel{\scriptsize ev}
\UnaryInfC{$\evctxp\tm\streq\evctxp\tmtwo$}
\DisplayProof

\end{tabular}
\end{tabular}
}
\caption{Structural equivalence $\streq$.}
\label{fig:structural-equivalence}
\end{figure}

\begin{proposition}[$\streq$ is a strong bisimulation]
\label{prop:streq-is-a-strong-bisim}
Let $a\in\set{\db,\sksym,\sssym}$. If $\tm \streq \tmtwo$ and $\tm \Rew{a} \tm'$ then there exists $\tmtwo'$ such that $\tmtwo \Rew{a} \tmtwo'$ and $\tm' \streq \tmtwo'$.
\end{proposition}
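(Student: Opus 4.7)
The plan is to prove the strong bisimulation property by induction on the derivation of $\tm \streq \tmtwo$, following the pattern of the analogous result for call-by-need in~\cite{DBLP:conf/icfp/AccattoliBM14}. The closure rules \emph{ref}, \emph{sym}, and \emph{tr} propagate the property routinely: reflexivity is immediate (take $\tmtwo' := \tm'$); symmetry is automatic since the axioms are used symmetrically; transitivity follows by composing two uses of the inductive hypothesis. The rule \emph{ev} requires observing that the root axioms preserve the outermost shape of a term (moving an ES in or out of an evaluation context cannot turn a value into a non-value, nor change the head constructor), so a reduction in $\evctxp\tm$ whose redex straddles the boundary between $\evctx$ and $\tm$ has a matching redex in $\evctxp\tmtwo$, while reductions entirely inside $\tm$ are handled by the IH.

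The heart of the proof is the base case: the two root equalities $\streq_1$ and $\streq_2$. Focus on $\streq_1$, namely $\evctxp{\tm\esub\var\tmthree} \streq_1 \evctxp\tm\esub\var\tmthree$, under the side conditions $\dom\evctx\cap\fv{\tm\esub\var\tmthree}=\emptyset$ and $\var\notin\fv\evctx$. Given a reduction $\evctxp{\tm\esub\var\tmthree} \Rew{a} s$ with $a \in \set{\db,\sksym,\sssym}$, I case-split on the position of the fired redex relative to $\evctx$, $\tm$, the ES $\esub\var\tmthree$, and $\tmthree$. The routine cases are those in which the redex lies entirely inside $\tm$, inside $\tmthree$, or purely within $\evctx$: the same redex fires on the right-hand side, and the resulting pair is again related by $\streq_1$ (or by a few applications of the axioms through the closure rules). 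The delicate case is when the redex consumes the moving ES $\esub\var\tmthree$ itself, i.e., a $\tosk$- or $\toss$-step with subject $\var$. Here the side condition $\var\notin\fv\evctx$ forces the evaluation context reaching the occurrence of $\var$ to lie entirely inside $\tm$; thus the same redex fires on the right-hand side, and the two resulting terms differ only by the position of some ES, hence remain $\streq$-equivalent. The treatment of $\streq_2$ is analogous and slightly simpler, because the value stored in a skeletal substitution is already skeletal and so cannot trigger a further $\tosk$-step.

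The main obstacle will be the $\tosk$-case consuming the moving ES, because $\tosk$ simultaneously skeletonizes $\sctxp\val$ (inside $\tmthree$), commutes the flesh out of the ES, and rewrites the ES into a skeletal one. One has to check that each of these three sub-actions respects the side conditions needed for $\streq_1$ (and for the newly generated $\streq_2$) on both the $\tm$- and the $\evctx$-side; this is a careful but standard scoping argument, because the freshness conditions produced by the skeletal decomposition align with the side conditions on the axioms, and none of the moved ESs bind variables free in $\evctx$. Once the base cases are settled, the result lifts through the closure rules as indicated above, yielding the strong bisimulation.
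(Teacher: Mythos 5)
Your overall strategy---induction on the derivation of $\tm \streq \tmtwo$, closure rules handled generically, the real work concentrated in the root axioms $\streq_1$ and $\streq_2$---is exactly the intended proof: the paper in fact gives no proof of this proposition, stating only that it is obtained ``by an immediate adaptation of the similar property'' of \cite{DBLP:conf/icfp/AccattoliBM14}, and your sketch is precisely that standard adaptation, including the correct identification of the $\tosk$-step consuming the moved ES as the delicate new case, and the observation that the freshness of the flesh variables makes the scoping side conditions go through.

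There is, however, one hole in your case analysis for the root axioms, and it is the characteristic case of this calculus. Besides redexes lying entirely inside $\tm$, inside $\tmthree$, inside $\evctx$, or consuming the ES $\esub\var\tmthree$, there are redexes whose pattern straddles $\evctx$ and $\tm$ with the moved ES caught \emph{strictly inside the pattern without being consumed}: since the root rules $\rtodb$ and $\rtosk$ are at a distance, the ES can be absorbed into their substitution context $\sctx$. Concretely, take $\evctx = \evctxtwop{\ctxhole\,\tmfour}$ and $\tm = \sctxp{\la\vartwo\tmfive}$: the left term $\evctxtwop{\sctxp{\la\vartwo\tmfive}\esub\var\tmthree\,\tmfour}$ fires a $\db$ redex whose substitution context is $\sctx\esub\var\tmthree$, while the right term $\evctxtwop{\sctxp{\la\vartwo\tmfive}\,\tmfour}\esub\var\tmthree$ fires the corresponding $\db$ redex with substitution context $\sctx$ under the outer ES; the analogous situation arises for $\sksym$ when the needed variable occurrence lies in $\evctx$ and the value $\sctxp\val$ is $\tm$. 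This case does not break the argument---both sides fire a step with the same label, and the results again differ only by the position of ESs, hence stay $\streq$-related---but it is the very reason the proposition holds for a calculus at a distance (it is exactly where the commuting rules of Ariola et al.~\cite{DBLP:conf/popl/AriolaFMOW95} would otherwise be needed), so a complete proof must treat it explicitly rather than leave it outside the enumerated cases.
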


\subparagraph{Difference with Kesner et al's Skeletal \cbneed \cite{DBLP:journals/lmcs/KesnerPV24}.}
There is a small difference between our definition of Skeletal \cbneed and Kesner et al.'s, namely in the definition of skeletons. For us, the skeletal decomposition of, say, $\val \defeq\la\var\vartwo\var(\varthree\varthree)$ is $(\la\var\vartwo\var\varfour, \ctxhole\esub\varfour{\varthree\varthree})$, while for them it is $(\la\var\vartwo'\var\varfour, \ctxhole\esub{\vartwo'}\vartwo\esub\varfour{\varthree\varthree})$, that is, they \emph{flesh-out} also single variable occurrences while we do \emph{not}, obtaining a slightly more parsimonious decomposition.

This difference, however, has essentially no impact on the theory, nor on the asymptotic costs. In particular, the operational equivalence of Skeletal \cbneed with respect to \cbneed and \cbn is preserved. The proof of Kesner et al., indeed, works for all decompositions such that substituting the flesh into the skeleton recovers the original value; our decomposition has this property. Therefore, we can import the following result.

\begin{theorem}[Operational equivalence \cite{DBLP:journals/lmcs/KesnerPV24}]
Two $\l$-terms $\tm$ and $\tmtwo$ are \cbn observational equivalent if and only if they are Skeletal \cbneed observational equivalent.
\end{theorem}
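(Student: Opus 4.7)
The plan is to \emph{not} reprove the theorem from scratch, but rather to reduce it to Kesner et al.'s original statement, exploiting the observation---already made in the paragraph preceding the theorem---that their proof only uses skeletal decompositions through a single abstract property: substituting the flesh back into the skeleton recovers the original value. Concretely, I would first verify that our decomposition satisfies this \emph{recovery property}: for every value $\val$, if $\skeldecabs\val = (\valtwo, \sctx)$ then $\sctxp\valtwo$ is $\alpha$-equivalent to $\val$. This is an immediate induction on $\val$ following the clauses defining $\skeldec\cdot\skelvars$, where in each clause the freshly introduced variable in the flesh is bound exactly at the position where the maximal free sub-term was extracted.

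Second, I would revisit Kesner et al.'s proof and check, step by step, that every invocation of their skeletal decomposition can be replaced by a black-box use of the recovery property. The two directions of their equivalence---CbN $\Rightarrow$ Skeletal CbNeed and Skeletal CbNeed $\Rightarrow$ CbN---go through standard simulation arguments: a CbN step is simulated by $\toskneeddb$ followed by zero or more pairs of $\tosk$/$\toss$ steps, while Skeletal CbNeed steps are projected by unfolding all skeletal and explicit substitutions. In both cases, the only algebraic fact about the decomposition that is ever used is that unfolding $\skesub\var\valtwo$ with $\skeldecabs\val=(\valtwo,\sctxtwo)$ reconstructs (modulo structural equivalence) an occurrence of $\val$, which is exactly the recovery property.

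Third, I would package any mismatch between the two conventions via the strong bisimulation of Proposition~\ref{prop:streq-is-a-strong-bisim}: our more parsimonious decomposition and Kesner et al.'s differ only by some trivial ESs of the form $\esub{\vartwo'}\vartwo$ that can be floated out or absorbed using $\streq$, without affecting observational behaviour. This turns the difference between the two variants into a structural-equivalence noise that is swallowed by the bisimulation.

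The main obstacle I expect is this last bookkeeping step: making precise the correspondence between a Skeletal CbNeed run in our system and one in Kesner et al.'s system, so that observational equivalence transfers cleanly. Once this correspondence is established, the theorem follows by chaining our equivalence (Skeletal CbNeed ours $\simeq$ Skeletal CbNeed theirs) with Kesner et al.'s result (Skeletal CbNeed theirs $\simeq$ CbN).
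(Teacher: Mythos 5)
Your proposal matches the paper's own justification: the paper does not reprove the result but imports it from Kesner et al., observing precisely that their proof only uses the skeletal decomposition through the recovery property (substituting the flesh back into the skeleton yields the original value), which the paper's more parsimonious decomposition satisfies. Your extra bookkeeping via structural equivalence is more than the paper spells out, but the core reduction is the same.
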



\section{Skeletal \cbneed Can Bring an Exponential Asymptotic Speed-Up}
\label{sect:family}
In this section, we show that Skeletal \cbneed can be both exponentially faster and use exponentially less space than \cbneed, taking as reference cost models (following the terminology of Accattoli et al. \cite{DBLP:journals/lmcs/AccattoliLV24}):
\begin{itemize}
\item \emph{Abstract time}, that is the number of (distant) $\beta$-steps;
\item \emph{Ink space}, that is, the maximum size of the involved terms.
\end{itemize}
By smoothly adapting results in the literature \cite{DBLP:conf/birthday/SandsGM02,DBLP:conf/rta/AccattoliL12,DBLP:journals/pacmpl/ForsterKR20}, these can be proved to be reasonable cost models for time for \cbneed and for (super-)linear space for both \cbneed and Skeletal \cbneed. In fact, the relevant part of proving that abstract time is reasonable for Skeletal \cbneed shall be our own \refthm{SkMAD-is-bilinear} at the end of the paper.

 We prove the result by defining a family of terms $\set{\tm_n}_{n\in\nat}$ and showing that $\tm_n$ evaluates in an exponential number of $\db$ steps and producing terms of exponential size in \cbneed, while it uses only a linear number of $\skdb$ steps and terms of linear size in Skeletal \cbneed.

\subparagraph{The Family.} We shall use the abbreviations $\Id \defeq \la\var\var$ and $\gamma \defeq \la\vartwo\la\varthree{\vartwo \Id (\vartwo \Id) \varthree}$. Next, we define an auxiliary family $\set{\tmtwo_n}_{n\in\nat}$ as follows:
$\tmtwo_0 \defeq \Id$ and $\tmtwo_{n+1} \defeq \gamma \tmtwo_n$.
Lastly, the actual family is defined as $\tm_n \defeq (\la\var{\var \Id (\var \Id)}) \tmtwo_n$.

Our result is that the term $\tm_n$ evaluates in $\bigo(n)$  $\toskdb$-steps (namely $6n + 4$ steps) with skeletal \cbneed and
$\Omega(2^n)$  $\todb$-steps (namely $8 \times 2^n + n -4$ steps) with \cbneed.
This is shown via quite technical statements. Firstly, note that $\tm_n \todb (\var\Id (\var\Id)) \esub\var{\tmtwo_n}$; it is actually for $(\var\Id (\var\Id)) \esub\var{\tmtwo_n}$ that we shall prove bounds. These bounds are expressed stating that there are families of evaluation contexts $\sctx_n^{\ndsym}$ and $\sctx_n^{\sksym}$ (one for \cbneed and one for Skeletal \cbneed) such that
$(\var\Id (\var\Id)) \esub\var{\tmtwo_n}$ reduces to
$\sctx_n^{\ndsym}\ctxholep\Id$  and $\sctx_n^{\sksym}\ctxholep\Id$ (which is a normal form in both cases) in $6n + 3$ and $8 \times 2^n + n - 3$ steps respectively. The results about space are obtained by showing that $\sctx_n^{\ndsym}$ has size exponential in $n$, while $\sctx_n^{\sksym}$ is linear in $n$.

\subparagraph{Evaluation in Skeletal \cbneed.} The skeletal case is, perhaps surprisingly, the easiest case for which one can find an inductive invariant leading to the bound.

\begin{toappendix}
\begin{lemma}
\label{l:skneed-family-aux}\NoteProof{l:skneed-family-aux}
  Let $n \in \nat$ and $\tmthree_n \defeq (\var\Id(\var\Id))\esub\var{\tmtwo_n}$. Then there is a substitution context $\sctx_n^{\sksym}$ and an evaluation 
  $\deriv_n: \tmthree_n \toskneed^* \sctx_n^{\sksym}\ctxholep\Id$ such that $\sizeskdb{\deriv_n}=6n + 3$ and $\size{\sctx_n^{\sksym}} = \bigo(n)$.
\end{lemma}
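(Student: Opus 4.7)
We proceed by induction on $n$, building an explicit reduction sequence $\deriv_n$ of $\tmthree_n$ in which each ``level'' of the recursion $\tmtwo_n = \gamma\tmtwo_{n-1}$ contributes a constant number of $\todb$-steps. The key insight is that the flesh extracted by skeletonization of the value produced at each level is evaluated exactly once and then memoized in the skeletal ES, so that the two occurrences of $\var\Id$ in the outer term only trigger one flesh evaluation per level. For the base case $n = 0$, the term $\tmthree_0 = (\var\Id(\var\Id))\esub\var\Id$ reduces directly: one $\tosk$ promotes $\esub\var\Id$ to a skeletal ES (with trivial decomposition $\skeldecabs\Id = (\Id,\ctxhole)$), and then the three $\beta$-redexes appearing along the reduction---the first $\Id\Id$ obtained by substituting the head $\var$, the outer $\Id(\var\Id)$, and the second $\Id\Id$ inside the newly-created ES body---fire in sequence, each costing one $\todb$-step, for three total; the result is $\Id$ inside a substitution context of constant size.

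For the inductive step on $\tmthree_{n+1} = (\var\Id(\var\Id))\esub\var{\gamma\tmtwo_n}$ the reduction proceeds in six phases, contributing $6$ $\todb$-steps on top of one inductive call of cost $6n+3$. (i) Demand on $\var$ fires $\gamma\tmtwo_n$ inside its ES (1 $\todb$), yielding the value $\val \defeq \la\varthree{\vartwo\Id(\vartwo\Id)\varthree}$ with $\esub\vartwo{\tmtwo_n}$ alongside. (ii) A $\tosk$ step skeletonizes $\val$ into skeleton $\la\varthree{\varfour\varthree}$ with flesh $\esub\varfour{\vartwo\Id(\vartwo\Id)}$, and a $\toss$ step substitutes the skeleton into the leading $\var$ position. (iii) Firing $(\la\varthree{\varfour\varthree})\Id$ (1 $\todb$) produces $\varfour\varthree\esub\varthree\Id$. (iv) Demand on $\varfour$ enters its ES and triggers evaluation of $\vartwo\Id(\vartwo\Id)$ with $\vartwo$ bound to $\tmtwo_n$: up to $\alpha$-renaming this is precisely the situation of $\tmthree_n$, so by IH the flesh reduces in $6n+3$ $\todb$-steps, with $\varfour$'s ES becoming skeletal with body $\Id$. (v) Substituting $\varfour \to \Id$ and firing the resulting $\Id\varthree$ redex (1 $\todb$) completes the reduction of the first $\var\Id$. (vi) Firing the outer $\Id(\var\Id)$ (1 $\todb$) creates a new ES body $\var\Id$; inside it, substituting the already-skeletal $\var$, firing $(\la\varthree{\varfour\varthree})\Id$ (1 $\todb$), substituting the already-skeletal $\varfour$, and firing $\Id\varthree$ (1 $\todb$) finish the evaluation. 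The crucial point is that the second demand on $\varfour$ reads the memoized skeletal ES and does not re-evaluate the flesh. Summing, $1+1+(6n+3)+1+1+2 = 6(n+1)+3$ $\todb$-steps, and since each phase adds only a constant number of new ESs to the final substitution context we get $\size{\sctx_{n+1}^{\sksym}} = \bigo(n)$.

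The main obstacle is the invocation of the IH in phase (iv): the flesh evaluation takes place surrounded by ESs accumulated at the outer level, so the situation matches $\tmthree_n$ only ``up to context''. We address this by strengthening the IH to assert that the reduction of $\tmthree_n$ proceeds identically when placed inside an arbitrary substitution context of skeletal ESs binding fresh variables; alternatively, the strong bisimulation property of the structural equivalence $\streq$ lets us freely move the outer ESs past the sub-term being reduced, recovering the unparameterized IH. The remaining work is careful bookkeeping: determining whether an ES is already skeletal (so that $\toss$ applies directly, without preliminary $\tosk$), and entering the body of an ES via the $\evctxfp\var\esub\var\evctxtwo$ production of evaluation contexts whenever demand reaches a variable whose binding is not yet a value.
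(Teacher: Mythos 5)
Your proof follows the paper's own proof essentially step for step: the same induction on $n$, the same reduction sequences with the same accounting ($3$ $\skdb$-steps in the base case; $2$ before and $4$ after the inductive call in the step case, giving $2+(6n+3)+4=6(n+1)+3$), the same key observation that the second demand on the flesh variable reads the memoized skeletal ES, and the same linear-size argument for $\sctx_n^{\sksym}$. Moreover, your second way of discharging the inductive call---via $\streq$ and its strong bisimulation property (\refprop{streq-is-a-strong-bisim})---is exactly the paper's move: $\streq$ pushes $\esub\vartwo{\tmtwo_n}$ inside the flesh ES $\esub\varfour{\cdot}$ so that (a renaming of) $\tmthree_n$ reappears as a genuine sub-term in evaluation position, the unparametrized IH applies there, and the single $\streq$ step is then postponed to the end of the sequence by strong bisimulation (which preserves the number and kind of steps, and keeps the final term of shape $\sctx\ctxholep\Id$ with the same size), yielding the $\streq$-free evaluation $\deriv_n$ that the statement requires; this last postponement is worth spelling out, since the statement asks for a pure $\toskneed^*$ sequence.

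One caveat: your first proposed fix---strengthening the IH to ``$\tmthree_n$ placed inside an arbitrary substitution context of skeletal ESs binding fresh variables''---does not work as stated, because it mislocates the obstacle. Placing $\tmthree_n$ as a whole inside such a context is unproblematic anyway: $\toskneed$ is closed under evaluation contexts, evaluation contexts compose, and substitution contexts of skeletal ESs are evaluation contexts, so that strengthening is essentially vacuous. The actual problem is that $\tmthree_n$ does \emph{not} occur as a sub-term at all: after skeletonization, the application $\vartwo\Id(\vartwo\Id)$ lies inside $\esub\varfour{\cdot}$ while its binding $\esub\vartwo{\tmtwo_n}$ lies outside it, and the material separating them (the ES $\esub\varthree\Id$, the pending argument $\var\Id$, the skeletal ES $\skesub\var{\la\varthree\varfour\varthree}$) is neither purely skeletal nor fresh. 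The context-parametrized strengthening that does work is the one the paper adopts for the \cbneed case (\reflemma{need-family-aux2}): allow an arbitrary evaluation context \emph{between} the application and its ES, \ie prove the statement for $\evctxfp{\vartwo\Id(\vartwo\Id)}\esub\vartwo{\tmtwo_n}$. Either that parametrization or your $\streq$-based route closes the step; the substitution-context version does not.
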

\end{toappendix}

\subparagraph{Evaluation in \cbneed.} For the non-skeletal case, the bound is formulated via two families of substitution contexts.
For $n \in \nat$, the first family is given by
$\sctxv{0} \defeq \ctxhole\esub{\var_0}\Id$ and
$\sctxv{n+1} \defeq
\sctxv{n}\ctxholep{\ctxhole\esub{\var_{n+1}}{\tmthree_{n+1}}}$
with $\tmthree_{n+1} \defeq \la{\vartwo_n}\var_n\Id(\var_n\Id)\vartwo_n$. The second family $\sctx^{\ndsym}_n$ is defined in the proof of the lemma, itself relying on an auxiliary lemma in
\ifthenelse{\boolean{isarxiv}}{\refapp{app-family}.}{the technical report~\cite{accattoli2025costskeletalcallbyneedsmoothly}.}
The statement is parametrized by an evaluation context $\evctx$ for the induction to go through, but the relevant case is the one with $\evctx$ empty.

\begin{toappendix}
\begin{lemma}
\label{l:need-family-aux2}\NoteProof{l:need-family-aux2}
    Let $n \in \nat$ and $\evctx$ be an evaluation context.
  Then there exist a substitution context $\sctx^{\ndsym}_n$ and an evaluation
  $\deriv_n: \evctxfp{\var_n\Id(\var_n\Id)}\esub{\var_n}{\tmtwo_n}
  \toneed^* \sctxvp{n}{\evctxfp{\sctx^{\ndsym}_n\ctxholep\Id}}$ such that
  $\sizedb{\deriv_n} = 8 \times 2^n + n - 5$  and $\size{\sctx^{\ndsym}_n} = \Omega(2^n)$.
\end{lemma}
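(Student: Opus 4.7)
The plan is to induct on $n$, in combination with an auxiliary lemma (proved in the appendix) that handles the case where the ES for $\var_n$ already contains the reduced value $\tmthree_n$ rather than $\tmtwo_n$. For the \emph{base case} $n = 0$, with $\tmtwo_0 = \Id$, the focus term $\var_0\Id(\var_0\Id)\esub{\var_0}\Id$ evaluates in three $\todb$-steps interleaved with several $\tolsneed$-substitutions: the leftmost $\var_0$ and each head variable freshly introduced by the successive $\todb$'s are substituted in turn, closing with the evaluation of the argument $\var_0\Id$ inside its own ES. The resulting term has the form $\sctx^{\ndsym}_0\ctxholep\Id\esub{\var_0}\Id$ with $\sctx^{\ndsym}_0$ a constant-size chain of $\Id$-binding ESs, matching $8\cdot 2^0 + 0 - 5 = 3$ and $\size{\sctx^{\ndsym}_0} = \Omega(1)$.

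For the \emph{inductive step}, starting from $\evctxfp{\var_{n+1}\Id(\var_{n+1}\Id)}\esub{\var_{n+1}}{\gamma\tmtwo_n}$, I would fire a fixed prefix: head evaluation enters the $\var_{n+1}$-ES (via the $\evctxfp\var\esub\var\evctxtwo$ clause), performs one $\todb$ on $\gamma\tmtwo_n$ to expose $(\la{\vartwo_n}\var_n\Id(\var_n\Id)\vartwo_n)\esub{\var_n}{\tmtwo_n}$, then one $\tolsneed$ substitutes the leftmost $\var_{n+1}$ (commuting the inner $\var_n$-ES out), and one $\todb$ fires on the resulting $(\la{\vartwo_n}\ldots)\Id$ redex. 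This puts the term into IH-shape around the leftmost $\var_n\Id(\var_n\Id)$: applying the IH with a suitably enlarged evaluation context $\evctx'$ contributes $8\cdot 2^n + n - 5$ $\db$-steps, installs $\sctxv{n}$ on the outside, and leaves $\sctx^{\ndsym}_n\ctxholep\Id$ in focus. A couple of bookkeeping $\todb$/$\tolsneed$ steps close the first $\var_{n+1}\Id$ position to a value, the outer redex $\Id(\var_{n+1}\Id)$ then fires, and the \emph{second} occurrence of $\var_{n+1}$ comes into focus.

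The \emph{main obstacle} is this second invocation. Its $\var_{n+1}$-ES now holds the already-computed value $\tmthree_{n+1} = \la{\vartwo_n}\var_n\Id(\var_n\Id)\vartwo_n$, so a single $\tolsneed$ followed by one $\todb$ exposes a fresh copy of $\var_n\Id(\var_n\Id)$ in which \emph{all} the $\var_k$-ESs for $k \leq n$ already hold their cached values (exactly the bindings of $\sctxv{n}$). The standard IH does not apply, since its premise requires the un-reduced $\tmtwo_n$ in the ES, so we appeal to the auxiliary lemma's ``cached'' variant, whose count saves the $n$ $\gamma$-reductions and yields roughly $8\cdot 2^n - 5$ $\db$-steps. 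Summing the two invocations with the $O(1)$ bookkeeping gives $T(n+1) = T(n) + T'(n) + C = 8\cdot 2^{n+1} + (n+1) - 5$ for the appropriate constant $C$. For the space bound, $\sctx^{\ndsym}_{n+1}$ is built by concatenating the residual substitution contexts produced by the two recursive invocations together with the bookkeeping ESs, so $\size{\sctx^{\ndsym}_{n+1}} \geq 2\size{\sctx^{\ndsym}_n}$, yielding $\size{\sctx^{\ndsym}_n} = \Omega(2^n)$. The truly delicate part is verifying that the outer context accumulated at the end matches exactly $\sctxv{n+1}$ as defined by the recursion, which requires careful bookkeeping of how the various ESs commute and interleave throughout the reduction.
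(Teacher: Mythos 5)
Your proposal is correct and takes essentially the same route as the paper: both proceed by induction on $n$, use the induction hypothesis for the first occurrence of $\var_n\Id(\var_n\Id)$ (whose substitution still holds the un-reduced $\tmtwo_n$), and invoke a separately proved auxiliary lemma for the second occurrence, where the substitutions already hold the cached values of $\sctxv{n}$, costing $8 \times 2^n - 5$ $\db$-steps. The paper's auxiliary lemma (on the contexts $\sctx^\bullet_n$) is exactly your ``cached'' variant, the constant bookkeeping steps sum to the same recurrence, and the $\Omega(2^n)$ space bound is likewise obtained from the exponential size of the auxiliary lemma's residual substitution context.
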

\end{toappendix}

The next statement sums up the results, and is based on the result at the end of the paper stating  that Skeletal \cbneed can be implemented in bilinear time, that is, linear in the size $\size{\tm_n}$ of the initial term and in the number of $\skdb$ steps (\refthm{SkMAD-is-bilinear}).

\begin{theorem}[Instance of exponential skeletal speed-up for both time and space]
\label{thm:family-speed-up}
There is a family $\set{\tm_n}_{n\in\nat}$ of $\l$-terms such that $\tm_n$ normalizes in $\bigo(n)$ abstract time and $\bigo(n)$ ink space with Skeletal \cbneed and $\Omega(2^n)$ abstract time and $\Omega(2^n)$ ink space with \cbneed.
\end{theorem}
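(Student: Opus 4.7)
The plan is to assemble \Cref{l:skneed-family-aux} and \Cref{l:need-family-aux2}, which quantify the two strategies, and to invoke \Cref{SkMAD-is-bilinear} to ensure that abstract time is a reasonable cost model for Skeletal \cbneed. Both evaluations begin with the common initial step $\tm_n \todb (\var\Id(\var\Id))\esub\var{\tmtwo_n}$, since $\toskneeddb$ and $\todb$ coincide on this $\beta$-redex (there is no value sitting under $\var$ yet, hence nothing to skeletonize). So it suffices to analyze the residual reductions from $(\var\Id(\var\Id))\esub\var{\tmtwo_n}$.

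For the skeletal upper bounds I would apply \Cref{l:skneed-family-aux} to obtain an evaluation of length $6n+3$ $\toskdb$-steps to the normal form $\sctx_n^{\sksym}\ctxholep{\Id}$ with $\size{\sctx_n^{\sksym}}=\bigo(n)$; adding the initial step yields $6n+4=\bigo(n)$ abstract time. For ink space, the only values duplicated along the evaluation are skeletons of sub-values of $\tmtwo_n$, which have constant size, and the number of accumulated substitutions is linear in $n$, so every intermediate term fits inside $\sctx_n^{\sksym}\ctxholep{\Id}$ up to a constant additive term, remaining $\bigo(n)$. Symmetrically, for the \cbneed lower bounds I would instantiate \Cref{l:need-family-aux2} with $\evctx=\ctxhole$, producing an evaluation of length $8\times 2^n + n - 5$ whose final term contains $\sctx^{\ndsym}_n\ctxholep{\Id}$ with $\size{\sctx^{\ndsym}_n}=\Omega(2^n)$; since this term is an intermediate term of the complete reduction, both abstract time and ink space are $\Omega(2^n)$.

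The main obstacle is \Cref{l:need-family-aux2} itself rather than this final assembly. The $2^n$ blow-up stems from the fact that $\gamma$ uses its argument twice (in $\vartwo\Id(\vartwo\Id)\varthree$): although \cbneed memoizes each $\tmtwo_i$ so that it reaches weak head normal form only once, the $\tolsneed$ rule still duplicates the memoized value at each use, and each duplicated copy introduces two fresh needs of the next-level variable, doubling the number of active occurrences at every level. Capturing this cascade in an induction robust under arbitrary outer evaluation contexts $\evctx$, so that the lemma can be applied compositionally at each level, is what makes the parametrized statement of \Cref{l:need-family-aux2} technically delicate; the detailed counting lives in the appendix.
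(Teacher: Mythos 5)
Your assembly is the same as the paper's: take the common initial $\db$-step $\tm_n \todb (\var\Id(\var\Id))\esub\var{\tmtwo_n}$, then apply \reflemma{skneed-family-aux} for the skeletal bounds and \reflemma{need-family-aux2} with $\evctx=\ctxhole$ for the \cbneed bounds, reading the space bounds off the sizes of $\sctx_n^{\sksym}$ and $\sctx^{\ndsym}_n$; your step counts ($6n+4$ and $8\times 2^n+n-4$ $\beta$-steps) match the paper's. (A minor slip: the two calculi agree on the first step simply because they share the same distant $\beta$ rule, not because "there is nothing to skeletonize yet"; skeletonization never concerns $\beta$-redexes.)

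The one place where you genuinely diverge is the skeletal ink-space bound, and there your argument is the shakier of the two. Ink space is the maximum size over \emph{all} terms of the evaluation, while \reflemma{skneed-family-aux} only bounds the final term and the number of $\skdb$-steps, so some extra argument is indeed needed. The paper closes the gap with a generic observation that uses nothing beyond the lemma's statement: the calculus has no garbage-collection rule, so term size can only \emph{decrease} through $\skdb$-steps, each of which removes a constant amount (two constructs removed, one added); hence the maximum size along the run is at most the size of the final term plus the number of steps, i.e. $\bigo(n)+\bigo(n)=\bigo(n)$. Your argument instead appeals to properties of the particular evaluation---that the duplicated skeletons are constant-size and that only linearly many substitutions accumulate---which are true but are not part of the lemma's statement, so they would have to be extracted from (or added to) its proof. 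Moreover, the claim that every intermediate term "fits inside $\sctx_n^{\sksym}\ctxholep\Id$ up to a constant additive term" is not literally correct (e.g. the starting term $(\var\Id(\var\Id))\esub\var{\tmtwo_n}$ is not a subterm of the final chain of substitutions); only the size comparison holds. Your reasoning can be repaired into a forward-counting argument (initial size $\bigo(n)$, each step adds at most a constant because duplicated skeletons are constant-size, $\bigo(n)$ steps in total), but the paper's backward-counting argument is preferable because it is self-contained given the lemma. The rest of your proposal, including the use of \refthm{SkMAD-is-bilinear} to justify abstract time as a cost model and the remarks on why \reflemma{need-family-aux2} is the technically delicate part, is consistent with the paper.
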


\begin{proof}
\hfill
\begin{itemize}
\item For \cbneed, by \reflemma{need-family-aux2} there is an evaluation sequence:
\begin{center}
 $\tm_n =(\la\var{\var \Id (\var \Id)}) \tmtwo_n \todb 
 (\var\Id (\var\Id)) \esub\var{\tmtwo_n} \toneed^*   \sctxvp{n}{\sctx^{\ndsym}_n\ctxholep\Id}
 $
 \end{center}
taking $k_n \defeq 8 \times 2^n + n - 4 = \Omega(2^n)$ $\db$-steps and such that $\size{\sctx^{\ndsym}_n} = \Omega(2^n)$.

\item For Skeletal \cbneed, by \reflemma{skneed-family-aux} there is an evaluation sequence:
\begin{center}
 $\tm_n =(\la\var{\var \Id (\var \Id)}) \tmtwo_n \toskdb 
 (\var\Id (\var\Id)) \esub\var{\tmtwo_n} \toskneed^*   \sctx_n^{\sksym}\ctxholep\Id
 $
 \end{center}
taking $k_n \defeq 6n+4 = \bigo(n)$ $\skdb$-steps and such that $\size{\sctx_n^{\sksym}} = \bigo(n)$. Note that the calculus has no garbage collection rule, so the size of terms diminishes only by $\skdb$ steps, and of a constant amount (a $\skdb$ step removes two constructs and adds one). Therefore, the maximum size of terms is bounded by the size of the final term plus the number of steps. That is, the space cost is $\bigo(n)$.
\qedhere
\end{itemize}
\end{proof}


\section{A Rewriting-Based Algorithm for Computing the Skeleton}
\label{sect:algorithm}

In this section, we first define \emph{marked} terms and a marked reformulation of skeletal decompositions. Then, we use the marked setting to give a graphically-inspired rewriting system computing the skeletal decomposition of a value, giving a neat new presentation of ideas first developed by Shivers and Wand \cite{DBLP:journals/fuin/ShiversW10}. The rewriting system formalizes an algorithm. It is presented as a rewriting system as to smoothly manage it as an operational semantics notion.

\begin{figure}[t!]
\centering
\fbox{
$\begin{array}{rrllll}
  \textsc{Marked terms}&
   \mlterms\ni\mtm,\mtmtwo & \grameq  &\var\mid \var^{\taken} \mid  \la\var\mtm \mid  \lat\var\mtm  \mid \mtm\mtmtwo \mid \mtm\taken\mtmtwo \mid \skelu\mtm
  \\[4pt]
\textsc{Marked contexts} &
  \Mctxs\ni \mctx & \grameq  &\ctxhole \mid  \la\var\mctx \mid  \lat\var\mctx \mid \mctx\mtmtwo \mid \mtm\mctx
   \\
   &&& \mid \mctx\taken\mtmtwo \mid \mtm\taken\mctx  \mid \skelu\mctx
\end{array}$
}
\caption{Marked terms and contexts.}
\label{fig:marked-terms}
\end{figure}

\subparagraph{Marked Skeleton.} Marked terms and contexts are defined in \reffig{marked-terms}. The idea is to incrementally mark with a tag $\circ$ the constructs belonging to the skeleton during a visit of the term which is allowed to jump from an abstraction to the occurrences of the variables, since these are usually connected in a graphical representation. 

Constructs of the $\l$ calculus are then either the usual ones (referred to as \emph{unmarked}) or marked with $\taken$, which means that the constructor is part of the skeleton. Additionally, there is an extra \emph{up} construct $\skelu\tm$ that shall be used to indicate the frontier of the visit of the term. 

In the marked setting, we can recast skeletal decompositions by marking as white the constructs belonging to the skeleton and leaving unmarked all the constructs of the flesh.
\begin{definition}[Marked skeleton]
Let $\skelvars$ be a set of variables. The marked skeleton of $\tm$ relative to $\skelvars$ is defined as $\disc{\tm}{\skelvars}  \defeq  \tm$ if $\fv \tm \cap \skelvars = \emptyset$, otherwise:
  \begin{center}$\begin{array}{rll@{\hspace{.5cm}}|@{\hspace{.5cm}}rll}
   \disc{\var}{\skelvars} & \defeq & \var^\taken 
   &
   \disc{\tm\ap\tmtwo}{\skelvars} & \defeq & \disc{\mtm}{\skelvars} \appt \disc{\mtmtwo}{\skelvars} 
   \\
   \disc{\la\var\tm}{\skelvars} & \defeq & \lat{\var}{\disc{\tm}{\skelvars \cup \set\var}} 
  \end{array}$\end{center}
The marked skeleton of $\la\var\tm$ is defined as $\discabs{\la\var\tm} \defeq \lat\var\disc\tm{\set\var}$.
\end{definition}
Example: for $\val \defeq \la\var\la\vartwo\varthree\varthree\var(\vartwo\varthree)$, one has $\discabs{\val} = \lat\var\lat\vartwo(\varthree\varthree)\taken\var^{\taken}\taken(\vartwo^{\taken}\taken\varthree)$.

\subparagraph{From the Marked Skeleton to the Skeletal Decomposition.} Next, we turn marked skeletons into skeletal decompositions, by extracting unmarked sub-terms and turning them into the flesh, while removing the marks from the skeleton. We shall do this via a \emph{splitting function}, defined below, which operates on (parametrized) marked skeletons. For that, we give a lemma guaranteeing  that the unmarked sub-terms of marked skeletons can be recognized from their top constructor, that is in  $\bigo(1)$, thus removing the need to inspect them.

\begin{toappendix}
\begin{lemma}[Being unmarked is downward closed in marked skeletons]
\label{l:unmarked-downward-closed}\NoteProof{l:unmarked-downward-closed}
Let $\disc{\tm}{\skelvars} = \mctxp{\mtm}$ with $\mtm$ starting unmarked. Then $\mtm$ has no marks.
\end{lemma}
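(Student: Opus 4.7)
The plan is to proceed by structural induction on $\tm$ (quantified over all $\skelvars$), exploiting a sharp dichotomy in the definition of $\disc{\tm}{\skelvars}$. Namely, the first clause (when $\fv\tm\cap\skelvars=\emptyset$) returns $\tm$ itself, producing a result with no marks whatsoever; each of the three other clauses (variable, abstraction, application) places a mark at the top constructor. Consequently, the top of $\disc{\tm}{\skelvars}$ is unmarked if and only if the first clause applies, and in that case the entire result is mark-free.

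Using this dichotomy, the proof is routine. For each $\tm$ and $\skelvars$, I would split on whether $\fv\tm\cap\skelvars=\emptyset$. If so, $\disc{\tm}{\skelvars}=\tm$ has no marks, hence $\mtm$ has no marks either, for any decomposition $\mctxp{\mtm}$. If not, the top of $\disc{\tm}{\skelvars}$ is a marked constructor; the hypothesis that $\mtm$ starts unmarked then forces $\mctx$ to be non-empty and its top to match the marked top of $\disc{\tm}{\skelvars}$. In the abstraction case $\tm=\la\var\tm'$, $\mctx$ must begin with $\lat\var$, and the induction hypothesis applies to $\tm'$ with $\skelvars\cup\set\var$; in the application case $\tm=\tm_1\tm_2$, $\mctx$ must begin with a marked application, and the induction hypothesis applies to the sub-term ($\tm_1$ or $\tm_2$) into which $\mctx$ descends; in the variable case only the hole decomposition is available and its $\mtm=\var^\taken$ is marked, making the hypothesis vacuous.

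The only subtlety, and not really an obstacle, is ruling out the $\skelu\mctx$ production of the marked-context grammar: this is handled by observing that the definition of $\disc{\tm}{\skelvars}$ never generates a $\skelu$ construct, so the top of $\disc{\tm}{\skelvars}$ is never $\skelu$, hence $\mctx$ cannot begin with $\skelu$ either. Overall the argument is pure bookkeeping once the flesh-vs.-skeleton dichotomy of the defining clauses is made explicit.
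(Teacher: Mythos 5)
Your proof is correct and matches the paper's in spirit: the paper disposes of this lemma with a one-line ``straightforward induction on $\mctx$'', and your induction on $\tm$ (with $\skelvars$ generalized) is the same routine argument organized along the term rather than the context, which is an inessential difference since the context's shape mirrors the term's. The dichotomy you isolate---unmarked top constructor iff the flesh clause of $\disc{\cdot}{\cdot}$ applied iff the whole subtree is mark-free---together with the observation that $\disc{\cdot}{\cdot}$ never produces $\skelu$, is exactly what makes that induction go through.
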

\end{toappendix}

The lemma implies the correctness of the first clause of the following definition, as it ensures that in that case $\mtm$ has no marks.

\begin{definition}[Splitting]
Let $\mtm=\disc{\tm}{\skelvars}$ for some $\tm$ and $\skelvars$. The following splitting function
separates the skeleton from the flesh and removes all marks:
\begin{center}$\arraycolsep=2pt
\begin{array}{rrll}
\splitfun{\mtm} & \defeq & (\varthree, \ctxhole\esub\varthree{\mtm}) & \text{if $\mtm$ starts unmarked,  is not a variable, and $\varthree$ is fresh;} \\
\splitfun{\var} & \defeq & (\var,\ctxhole); & \\
\splitfun{\var^\taken} & \defeq & (\var,\ctxhole); & \\
\splitfun{\lat\var\mtmtwo} & \defeq & (\la\var\tmtwo, \sctx) &
\text{with $\splitfun{\mtmtwo} = (\tmtwo,\sctx)$;} \\
\splitfun{\mtm_1\appt\mtm_2} & \defeq & (\tm_1\ap\tm_2,\sctxptwo\sctx) &
\text{with $\splitfun{\mtm_1} = (\tm_1,\sctx)$ and
$\splitfun{\mtm_2} = (\tm_2,\sctxtwo)$}.
\end{array}$
\end{center}
\end{definition}
Example: $\splitfun{\lat\var\lat\vartwo(\varthree\varthree)\taken\var^{\taken}\taken(\vartwo^{\taken}\taken\varthree)}=(\la\var\la\vartwo\varfour\var(\vartwo\varthree), \ctxhole\esub\varfour{\varthree\varthree})$.

We can now prove that the marked notions allow one to retrieve the standard ones. 
\begin{toappendix}
\begin{proposition}[Soundness of the marked skeleton]
\label{prop:marked-skeleton-soundness}\NoteProof{prop:marked-skeleton-soundness}
Let $\tm \in \terms$ and $\skelvars$ be a set of variables.
\begin{enumerate}
\item \emph{Auxiliary parametrized statement}: $\splitfun{\disc\tm\skelvars} = \skeldec\tm\skelvars$;
\item \emph{Soundness}: $\skeldecabs{\la\var\tm} = \splitfun{\discabs{\la\var\tm}}$.
\end{enumerate}
\end{proposition}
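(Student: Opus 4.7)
The plan is to prove part (1) by structural induction on $\tm$, mirroring the branching of the two definitions, and then to derive part (2) by unfolding $\discabs{\la\var\tm}$ and $\skeldecabs{\la\var\tm}$ on top of (1). The key observation is that both $\disc{\tm}{\skelvars}$ and $\skeldec{\tm}{\skelvars}$ split on the same condition $\fv\tm \cap \skelvars = \emptyset$, so I would first separate these two cases and, inside each, proceed by induction.

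In the easy case $\fv\tm \cap \skelvars = \emptyset$, the whole sub-term $\tm$ is mark-free after marking, i.e.\ $\disc{\tm}{\skelvars}=\tm$; if $\tm$ is not a variable, the first clause of $\splitfun$ fires and returns $(\varthree,\ctxhole\esub\varthree\tm)$, exactly matching the first clause of $\skeldec$, while if $\tm$ is a variable (then necessarily outside $\skelvars$) both sides collapse to $(\var,\ctxhole)$. In the recursive case $\fv\tm \cap \skelvars \neq \emptyset$, I would follow the three sub-cases of $\disc$. The variable sub-case is $\splitfun{\var^\taken}=(\var,\ctxhole)=\skeldec\var\skelvars$. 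The abstraction and application sub-cases invoke the inductive hypothesis on smaller terms with the appropriate parameter ($\skelvars\cup\{\var\}$ under a $\lambda$, and $\skelvars$ for both sides of an application) and are then a direct reading-off of the matching clauses of $\splitfun$ and $\skeldec$.

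The one subtle ingredient is justifying invocations of the first clause of $\splitfun$, whose side condition presupposes a fully unmarked input, not just an unmarked outermost constructor. This is \reflemma{unmarked-downward-closed}, which I would use implicitly to know that whenever $\disc\tm\skelvars$ starts unmarked then it is completely mark-free and therefore eligible for the first clause of $\splitfun$; in the argument above, this is exactly the situation covered by the easy case.

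For (2), unfolding the two macros gives $\discabs{\la\var\tm}=\lat\var\disc\tm{\{\var\}}$ and, writing $\skeldec\tm{\{\var\}}=(\tmtwo,\sctx)$, also $\skeldecabs{\la\var\tm}=(\la\var\tmtwo,\sctx)$. The abstraction clause of $\splitfun$ then reduces the goal to $\splitfun{\disc\tm{\{\var\}}}=(\tmtwo,\sctx)$, which is an instance of part (1). The main obstacle, and the only place where care is needed, is the book-keeping in the application sub-case of (1): one must check that the two flesh contexts produced by $\splitfun$ on its recursive calls concatenate in the same order as in $\skeldec$, and that the variable scopes line up; beyond this, no deeper idea is required.
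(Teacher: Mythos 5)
Your proposal is correct and follows essentially the same route as the paper: structural induction on $\tm$ for the parametrized statement, splitting on whether $\fv\tm \cap \skelvars = \emptyset$ and reading off the matching clauses of $\splitfun$ and $\skeldec$ (with \reflemma{unmarked-downward-closed} licensing the first clause of $\splitfun$, exactly as the paper uses it), and then deriving Point 2 by unfolding the definitions and applying Point 1.
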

\end{toappendix}

\subparagraph{Computing the Marked Skeleton.} We shall now define a parallel algorithm for computing the marked skeleton via a rewriting system on marked terms. The definition is in \reffig{skeleton-rewriting}. The algorithm shall be applied to abstractions, say to $\la\var\tm$. The algorithm is described by inserting the \emph{up} symbols $\frontiersym$, denoting where the algorithm is operating, and propagating through the term. At the beginning, there is exactly one $\frontiersym$, as the algorithm starts on $\la\var\skelu\tm$.

The algorithm has two sets of rules. Propagation rules turn the encountered unmarked constructs into marked ones. The interesting rule is $\tomarkblthree$ for abstractions, that is also the very first one to be applied on the initial term $\la\var\skelu\tm$. Its effect is that $\lambda\var$ is marked as $\l^\taken\var$ and all the occurrences of $\var$ in $\tm$ are also marked, obtaining $\lat\var\tm\isub\var{\var^\taken}$; this is how we recast the role played by Shivers and Wand's bi-directional edges connecting abstractions and variables. Essentially, propagation rules move $\frontiersym$ upwards but they also add $\frontiersym$ at the bottom of the syntax tree on the variable occurrences associated to marked abstractions.

Absorption rules, instead, remove the $\frontiersym$ symbol when it encounters a marked construct, as it means that the concerned thread of the algorithm is passing where another thread already passed before, thus the concerned thread becomes redundant and can be terminated. 

Intuitively, the set of $\frontiersym$-sub-terms is the frontier of where the parallel algorithm is operating. When the algorithm is over, no node is marked with $\frontiersym$. 

The algorithm is parallel in that the initial substitution $\tm\isub\var{\var^\taken}$ and rule $\rtomarkblthree$ might introduce many occurrences of $\frontiersym$, which can be propagated independently.
\begin{figure}[t!]
\centering
\fbox{
\begin{tabular}{c|c}
$\begin{array}{rcl}
\multicolumn{3}{c}{\textsc{Propagating rules}}
\\[2pt]
\skelu{\mtm} \mtmtwo & \rtomarkblone & \skelu{\mtm\appt \mtmtwo}
\\
\mtm \skelu{\mtmtwo} & \rtomarkbltwo & \skelu{\mtm\appt \mtmtwo}
\\
\la\var\skelu{\mtm} & \rtomarkblthree & \skelu{\lat\var\mtm\isub\var{\skelu{\var^\taken}}}
\\[2pt]
\hline
\multicolumn{3}{c}{\textsc{Absorbing rules}}
\\[2pt]
\skelu{\mtm}\appt \mtmtwo & \rtomarkwhone & \mtm\appt \mtmtwo
\\
\mtm\appt \skelu{\mtmtwo} & \rtomarkwhtwo & \mtm\appt \mtmtwo
\\
\lat\var\skelu{\mtm} & \rtomarkwhthree & \lat\var\mtm
\end{array}$

&
\begin{tabular}{c}
$\textsc{Initialization}$
\\
$\skeld{\la\var\tm} \defeq \la\var\skelu\tm$
\\[4pt]
\hline
\textsc{Contextual closures}
\\[4pt]
	$\Rew{a} \defeq \Mctxsp{\rootRew{a}}$ \\ for $a\in\set{\prsym1,\prsym2,\prsym3,\absym1,\absym2,\absym3}$
\\[8pt]
\hline
$\begin{array}{lll}
\multicolumn{3}{c}{\textsc{Algorithm}}
\\
\tomark & \defeq & \tomarkblone\cup\tomarkbltwo\cup\tomarkblthree\cup
\\
&&\tomarkwhone\cup\tomarkwhtwo\cup\tomarkwhthree
\end{array}$
\\[4pt]
\end{tabular}
\end{tabular}
}
\caption{Rewriting rules for marked terms.}
\label{fig:skeleton-rewriting}
\end{figure}


As an example, consider $\val \defeq \la\var\la\vartwo\varthree\varthree\var(\vartwo\varthree)$, for which one has, for instance: 
\begin{center}
$\begin{array}{lclcllllllll}
\skeld{\val} 
&= &
\la\var\skelu{\la\vartwo\varthree\varthree\var(\vartwo\varthree)}
&\tomarkblthree &
\skelu{\lat\var\la\vartwo\varthree\varthree\skelu{\var^{\taken}}(\vartwo\varthree)}
\\[3pt]
&\tomarkbltwo &
\skelu{\lat\var\la\vartwo\skelu{(\varthree\varthree)\taken\var^{\taken}}(\vartwo\varthree)}
&\tomarkblone &
\skelu{\lat\var\la\vartwo\skelu{(\varthree\varthree)\taken\var^{\taken}\taken(\vartwo\varthree)}}
\\[3pt]
&\tomarkblthree &
\skelu{\lat\var\skelu{\lat\vartwo(\varthree\varthree)\taken\var^{\taken}\taken(\skelu{\vartwo^{\taken}}\varthree)}}
&\tomarkblone &
\skelu{\lat\var\skelu{\lat\vartwo(\varthree\varthree)\taken\var^{\taken}\taken\skelu{\vartwo^{\taken}\taken\varthree}}}
\\[3pt]
&\tomarkwhtwo\tomarkwhthree &
\skelu{\lat\var\lat\vartwo(\varthree\varthree)\taken\var^{\taken}\taken(\vartwo^{\taken}\taken\varthree)}
&= &
\skelu{\discabs\val}
\end{array}$
\end{center}

\begin{proposition}
\label{prop:algorithm-sn-confluence}
Reduction $\tomark$ is strongly normalizing and diamond (thus confluent).
\end{proposition}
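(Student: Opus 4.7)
My plan is to establish strong normalization by a strictly decreasing natural-number measure, and the diamond property by a short critical-pair analysis; confluence then follows immediately from the diamond property.

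For strong normalization I would use the measure $\mu(\mtm) := u(\mtm) + s(\mtm)$, where $u(\mtm)$ counts the unmarked constructors of $\mtm$ (unmarked $\l$'s, unmarked applications, and unmarked variables) and $s(\mtm)$ counts the up-marker occurrences $\skelu{\cdot}$. A rule-by-rule check shows $\mu$ strictly decreases by exactly $1$ at every step. Rules $\rtomarkblone$ and $\rtomarkbltwo$ turn an unmarked application into a marked one while relocating the up-marker, so $u$ drops by $1$ and $s$ is preserved. Rule $\rtomarkblthree$ applied to $\la\var\skelu{\mtm}$ with $k$ free occurrences of $\var$ in $\mtm$ marks the $\l$ and the $k$ variable occurrences (lowering $u$ by $1+k$) while inserting exactly $k$ fresh up-markers around them (raising $s$ by $k$), for a net $-1$. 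Each absorbing rule removes one up-marker without touching $u$. Contextual closure preserves this balance, so $\mu$ strictly decreases globally, giving strong normalization.

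For the diamond property, the key observation is that each up-marker occurrence in a term is the active site of at most one redex, since the applicable rule is uniquely determined by the constructor sitting immediately above the up-marker. Two distinct redexes therefore use two distinct up-markers, and the only way they can share a non-up-marker construct is at an application whose two immediate children both start with an up-marker. This yields exactly two critical pairs: the unmarked pattern $\skelu{\mtm'}\,\skelu{\mtmtwo'}$, where $\rtomarkblone$ and $\rtomarkbltwo$ compete, and the marked pattern $\skelu{\mtm'}\appt\skelu{\mtmtwo'}$, where $\rtomarkwhone$ and $\rtomarkwhtwo$ compete. In the first case, the two one-step reducts $\skelu{\mtm'\appt\skelu{\mtmtwo'}}$ and $\skelu{\skelu{\mtm'}\appt\mtmtwo'}$ each reduce in one further step (by $\rtomarkwhtwo$ and $\rtomarkwhone$ respectively) to $\skelu{\mtm'\appt\mtmtwo'}$; the marked case is analogous, both branches meeting at $\mtm'\appt\mtmtwo'$.

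All remaining pairs of distinct redexes are either disjoint or strictly nested without pattern overlap. Reducing one preserves the other at the same position (possibly with modified meta-variable contents), and the two one-step reducts coincide. The only delicate point concerns $\rtomarkblthree$, which triggers a substitution $\isub\var{\skelu{\var^\taken}}$ across its body: since no rule pattern inspects the content of a variable leaf, the substitution preserves every pattern elsewhere in $\mtm$, and two nested $\rtomarkblthree$-firings induce substitutions on distinct fresh variables that commute up to $\alpha$-conversion. Hence the diamond property holds, which implies confluence directly; together with strong normalization this completes the proposition. I expect the main technical nuisance to be this last commutation check, a routine but patient case analysis over the six rules.
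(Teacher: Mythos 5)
Your proposal is correct and takes essentially the same route as the paper: a counting argument on unmarked constructors and up-markers for strong normalization (the paper uses the lexicographic pair of the two counts where you use their sum, which decreases by exactly one per step --- an immaterial difference), followed by the observation that redexes cannot duplicate or erase one another and that the only overlaps are the two application-shaped critical pairs, each closing in one step. If anything, your handling of the first critical pair is slightly more precise than the paper's diagram, which elides the outer up-marker produced by the propagation rules, and your explicit check that the substitution in the abstraction rule preserves and commutes with disjoint or nested redexes makes rigorous what the paper dismisses as clear.
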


\begin{proof}
\applabel{prop:algorithm-sn-confluence}\hfill
\begin{enumerate}
\item \emph{Strong normalization}. Let $\mtm$ be a marked term. As terminating measure, consider $(\sizep\mtm{\neg\taken},\sizep\mtm\frontiersym)$ where $\sizep\mtm{\neg\taken}$ and $\sizep\mtm\frontiersym$ are the number of unmarked and $\frontiersym$ constructs in $\mtm$. Note that the propagation rules decrease $\sizep\mtm{\neg\taken}$ while the absorption rules decrease $\sizep\mtm\frontiersym$ and leave $\sizep\mtm{\neg\taken}$ unchanged. Then $\tomark$ is strongly normalizing.
\item \emph{Diamond}. We prove the diamond property, which implies confluence. Clearly, redexes cannot duplicate or erase each other. There are only two critical pairs, namely:
\begin{center}
\begin{tabular}{c@{\hspace{.5cm}} |@{\hspace{.5cm}} c}
\begin{tikzpicture}[ocenter]
		\node at (0,0)[align = center](source){\normalsize $\mctxp{\skelu{\mtm} \skelu\mtmtwo}$};
		\node at (source.center)[right = 40pt](source-right){\normalsize $\mctxp{\mtm\appt \skelu\mtmtwo}$};
		\node at (source.center)[below = 25pt](source-down){\normalsize $\mctxp{\skelu{\mtm}\appt \mtmtwo}$};
		\node at (source-right|-source-down)(target){\normalsize $\mctxp{\mtm\appt \mtmtwo}$};
		
		\draw[->](source) to node[above] {\scriptsize $\prsym1$} (source-right);
		\draw[->](source) to node[left] {\scriptsize $\prsym2$}(source-down);	
		\draw[->, dotted](source-right) to node[right] {\scriptsize $\absym2$}(target);
		\draw[->, dotted](source-down) to node[above] {\scriptsize $\absym1$} (target);
	\end{tikzpicture}
&
\begin{tikzpicture}[ocenter]
		\node at (0,0)[align = center](source){\normalsize $\mctxp{\skelu{\mtm}\appt \skelu\mtmtwo}$};
		\node at (source.center)[right = 40pt](source-right){\normalsize $\mctxp{\mtm\appt \skelu\mtmtwo}$};
		\node at (source.center)[below = 25pt](source-down){\normalsize $\mctxp{\skelu{\mtm}\appt \mtmtwo}$};
		\node at (source-right|-source-down)(target){\normalsize $\mctxp{\mtm\appt \mtmtwo}$};
		
		\draw[->](source) to node[above] {\scriptsize $\absym1$} (source-right);
		\draw[->](source) to node[left] {\scriptsize $\absym2$}(source-down);	
		\draw[->, dotted](source-right) to node[right] {\scriptsize $\absym2$}(target);
		\draw[->, dotted](source-down) to node[above] {\scriptsize $\absym1$} (target);
	\end{tikzpicture}\qedhere
	\end{tabular}
\end{center}
\end{enumerate}
\end{proof}

To state the correctness of the algorithm and give a bound on the number of its steps, we need two definitions.

\begin{definition}
Let $\mtm$ be a marked term.
\begin{itemize}
\item \emph{Starts with $\taken$/unmarked/$\frontiersym$}:
$\mtm$ \emph{starts with $\taken$} (resp. \emph{unmarked}, resp. \emph{with} $\frontiersym$) if it has shape $\var^\taken$, $\lat\var\mtmtwo$, or $\mtmtwo\appt\mtmthree$ (resp. $\var$, $\la\var\mtmtwo$, or $\mtmtwo\mtmthree$, resp. $\skelu\mtm$).
\item \emph{White size}: the white size $\sizet\mtm$ of $\mtm$ is the number of $\taken$ constructs in $\mtm$.
\end{itemize}
\end{definition}

The following lemma is the key step in the proof of correctness of the algorithm. It is a technical lemma, the statement of which is more easily understood by looking at the proof of the following theorem.
\begin{toappendix}
\begin{lemma}[Crucial]
\label{l:disc-comp-step}\NoteProof{l:disc-comp-step}
Let $\tm$ be a term, $\skelvars$ a set of variables, $\var \in\fv\tm\setminus \skelvars$, $\mtm \defeq \disc\tm\skelvars$, and $\mtmtwo \defeq \disc\tm{\skelvars \cup \{\var\}}$. Then:
\begin{center}$
\mtm\isub\var{\skelu{\var^\taken}} \tomark^{k} \begin{cases}
    \skelu{\mtmtwo} & \text{with $k = \sizet{\mtmtwo} - \sizet{\mtm}-1$, if $\mtm$ starts unmarked;} \\
    \mtmtwo & \text{with $k =\sizet{\mtmtwo} - \sizet{\mtm}$, if $\mtm$ starts with $\taken$.}
\end{cases}
$\end{center}
\end{lemma}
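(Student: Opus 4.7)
The plan is structural induction on $\tm$. The IH is used once per sub-term of $\tm$ that contains $\var$ free: it pushes the $\skelu$'s seeded by the substitution $\cdot\isub\var{\skelu{\var^\taken}}$ up to the root of that sub-term, ending either with an outer $\skelu$ (if the sub-term starts unmarked) or without (if it starts marked). The outputs are then combined through the surrounding context using the propagation rules ($\prsym1, \prsym2, \prsym3$) and absorption rules ($\absym1, \absym2, \absym3$). The two branches of the conclusion, distinguished by whether $\mtm$ starts unmarked or marked, arise from exactly this dichotomy at the outermost level.

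The base case $\tm = \var$ is immediate: $\mtm = \var$, $\mtmtwo = \var^\taken$, and $\mtm\isub\var{\skelu{\var^\taken}} = \skelu\mtmtwo$ in zero steps, matching $k = \sizet{\mtmtwo} - \sizet{\mtm} - 1 = 0$. For $\tm = \tmthree\tmfour$, one splits further on which of $\tmthree$ and $\tmfour$ contain $\var$, applies the IH to each relevant side, and combines. If $\mtm$ starts marked (so the outer $\appt$ is already in place), any residual $\skelu$'s from the IH are absorbed by $\absym1$/$\absym2$. If $\mtm$ starts unmarked, the outer application itself is marked by a $\prsym1$/$\prsym2$ step that also consumes one $\skelu$. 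A direct tally confirms the step count in each sub-case.

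The delicate case is $\tm = \la\vartwo\tmthree$ under the assumption $\fv\tm \cap \skelvars = \emptyset$, so that $\mtm$ is fully unmarked. WLOG $\vartwo \notin \skelvars \cup \{\var\}$ by $\alpha$-renaming. A first application of the IH on $\tmthree$ with $\skelvars$ reduces $\tmthree\isub\var{\skelu{\var^\taken}}$ to $\skelu{\disc{\tmthree}{\skelvars \cup \{\var\}}}$; plugging this under $\la\vartwo(\cdot)$ and firing one $\prsym3$ step marks the abstraction while seeding a fresh $\skelu{\vartwo^\taken}$ at each free occurrence of $\vartwo$ in the body. A second, nested application of the IH on $\tmthree$---this time with $\skelvars \cup \{\var\}$ and tracked variable $\vartwo$, noting that $\disc{\tmthree}{\skelvars \cup \{\var\}}$ starts marked since $\var \in \fv\tmthree$---propagates the new $\skelu$'s to completion, reaching $\skelu{\lat\vartwo\disc{\tmthree}{\skelvars \cup \{\var,\vartwo\}}} = \skelu\mtmtwo$ with no residual $\skelu$ inside. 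The step counts sum to $\sizet{\mtmtwo} - 1$ as required. The remaining sub-case ($\fv\tm \cap \skelvars \neq \emptyset$) is simpler: both $\mtm$ and its body $\disc{\tmthree}{\skelvars \cup \{\vartwo\}}$ start marked, and a single application of the IH on $\tmthree$ with $\skelvars \cup \{\vartwo\}$ suffices.

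The main obstacle is this last delicate sub-case: the $\prsym3$ rule entangles the processing of $\var$ and $\vartwo$ through its built-in substitution, forcing a nested application of the IH with a freshly chosen tracked variable. Once this pattern is in place, the bookkeeping on $\sizet{\cdot}$ falls out uniformly, since each propagation step adds exactly one $\taken$-mark (at the crossed node) and each absorption step adds none, while the $\prsym3$-seeded $\skelu$'s are tallied by the second IH.
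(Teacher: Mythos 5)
Your proof is correct and follows essentially the same route as the paper's own: structural induction with the marked/unmarked dichotomy, and in the delicate unmarked-abstraction case the same key move of a first IH application for $\var$, one $\prsym3$ step, then a nested second IH application for the bound variable $\vartwo$ relative to the enlarged set $\skelvars\cup\{\var\}$. The only detail you gloss over is the vacuous sub-cases where the relevant variable does not occur free (e.g.\ $\vartwo\notin\fv{\tmthree}$ under the abstraction, or the side of an application not containing $\var$): there the IH is inapplicable because its freeness hypothesis fails, and the paper instead invokes a small auxiliary lemma stating that adding a non-occurring variable to the parameter set leaves the marked skeleton unchanged.
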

\end{toappendix}

\begin{theorem}[Soundness of the algorithm]
\label{thm:number-steps-algorithm}
Let $\tm$ be a term. Then
$\skeld{\la\var\tm} \tomark^n \skelu{\discabs{\la\var\tm}}$ with $n = \sizet{\discabs{\la\var\tm}}$.
\end{theorem}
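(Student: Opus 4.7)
The plan is to structure the derivation in three blocks: one forced top-level $\rtomarkblthree$ step that opens the initial abstraction and seeds $\frontiersym$ on every free occurrence of $\var$; a long middle block obtained by invoking \reflemma{disc-comp-step} inside the resulting context; and a final $\rtomarkwhthree$ step that absorbs the $\frontiersym$ left trapped directly under $\lat\var$.

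First I would observe that $\skeld{\la\var\tm} = \la\var\skelu\tm$ contains a single $\tomark$-redex, the outermost one, so the initial step is forced:
\[ \la\var\skelu\tm \tomarkblthree \skelu{\lat\var\tm\isub\var{\skelu{\var^\taken}}}. \]
The edge case $\var \notin \fv\tm$ is then immediate: the metasubstitution does nothing, $\disc\tm{\set\var} = \tm$ is unmarked, and the single step above already produces $\skelu{\lat\var\tm} = \skelu{\discabs{\la\var\tm}}$, matching $\sizet{\discabs{\la\var\tm}} = 1$.

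In the main case $\var \in \fv\tm$, I would apply \reflemma{disc-comp-step} with $\skelvars = \emptyset$, so that $\disc\tm\emptyset = \tm$ is unmarked and $\sizet{\tm} = 0$. The lemma delivers
\[ \tm\isub\var{\skelu{\var^\taken}} \tomark^{k} \skelu{\disc\tm{\set\var}} \qquad \text{with } k = \sizet{\disc\tm{\set\var}} - 1. \]
Contextual closure under $\skelu{\lat\var\ctxhole} \in \Mctxs$ lifts this to the same number of steps from $\skelu{\lat\var\tm\isub\var{\skelu{\var^\taken}}}$ to $\skelu{\lat\var\skelu{\disc\tm{\set\var}}}$. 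A final $\rtomarkwhthree$ step erases the inner $\frontiersym$:
\[ \skelu{\lat\var\skelu{\disc\tm{\set\var}}} \tomarkwhthree \skelu{\lat\var\disc\tm{\set\var}} = \skelu{\discabs{\la\var\tm}}. \]
Summing the three blocks gives total length $1 + (\sizet{\disc\tm{\set\var}} - 1) + 1 = 1 + \sizet{\disc\tm{\set\var}} = \sizet{\lat\var\disc\tm{\set\var}} = \sizet{\discabs{\la\var\tm}}$, as required.

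The real work resides entirely in the crucial lemma, whose proof must induct on $\tm$ while tracking how each seeded $\frontiersym$ either propagates upward through unmarked nodes (accruing one step per node visited) or is absorbed by already-marked constructs (at no cost), and how nested abstractions whose bound variable lies outside $\skelvars$ trigger further $\rtomarkblthree$ cascades. The bookkeeping in the two bracketed cases of that lemma, and specifically the $-1$ in the unmarked case, is exactly what makes the two outer steps of the present theorem balance the counting cleanly; once the lemma is in hand, the present theorem is just a routine wrapping of its conclusion inside $\skelu{\lat\var\ctxhole}$.
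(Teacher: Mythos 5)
Your proposal is correct and follows essentially the same route as the paper's own proof: the forced initial $\tomarkblthree$ step, the case split on whether $\var \in \fv\tm$, the instantiation of \reflemma{disc-comp-step} at $\skelvars = \emptyset$ (using $\disc\tm\emptyset = \tm$ unmarked, so $\sizet\tm = 0$) lifted through the marked context $\skelu{\lat\var\ctxhole}$, and the closing $\tomarkwhthree$ absorption, with the identical count $1 + (\sizet{\disc\tm{\set\var}} - 1) + 1 = \sizet{\discabs{\la\var\tm}}$. The only cosmetic difference is that you make explicit the uniqueness of the initial redex and the contextual-closure step, both of which the paper leaves implicit.
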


\begin{proof}
By definition $\skeld{\la\var\tm} = \la\var\skelu{\tm} \tomarkblthree \skelu{\lat\var\tm\isub\var{\skelu{\var^\taken}} }= \skelu{\lat\var\disc\tm\emptyset\isub\var{\skelu{\var^\taken}}}$, while $\discabs{\la\var\tm} = \lat\var\disc\tm{\set\var}$.
We consider two cases.
\begin{itemize}
\item \case{$\var \notin \fv{\tm}$} Then $\disc\tm{\{\var\}} = \tm$ and $\tm\isub\var{\skelu{\var^\taken}} = \tm$, so $\lat\var\tm\isub\var{\skelu{\var^\taken}}=\lat\var\tm=\lat\var\disc\tm{\{\var\}}=\discabs{\la\var\tm}$. Thus, $\skeld{\la\var\tm} \tomarkblthree \skelu{\discabs{\la\var\tm}}$. Moreover, $n = 1 = \sizet{\discabs{\la\var\tm}}$, since the only white construct of $\discabs{\la\var\tm}$ is the root abstraction.
\item \case{$\var \in \fv{\tm}$}
By \reflemma{disc-comp-step}, $\lat\var\tm\isub\var{\skelu{\var^\taken}} \tomark^k \lat\var\skelu{\disc\tm{\{\var\}}}$, where $k = \sizet{\disc\tm{\{\var\}}} - \sizet{\tm} -1 = \sizet{\disc\tm{\{\var\}}}-1$.
Therefore:
\begin{center}
$\begin{array}{lclcll}
\skeld{\la\var\tm} & \tomarkblthree & \skelu{\lat\var\tm\isub\var{\skelu{\var^\taken}} }

& \tomark^{k} & 
\skelu{\lat\var\skelu{\disc\tm{\{\var\}}}}
\\
& \tomarkwhthree & 
\skelu{\lat\var\disc\tm{\{\var\}} }
& = & 
\skelu{\discabs{\la\var\tm}}.
\end{array}$
\end{center}
Now, $n = k + 2 = \sizet{\disc\tm{\{\var\}}} -1 +2 = \sizet{\disc\tm{\{\var\}}} +1= \sizet{\discabs{\la\var\tm}}$.\qedhere
\end{itemize}
\end{proof}

\subsection{Complexity Analysis} 
We proved that the number of rewriting steps of the algorithm for computing the marked skeleton $\discabs{\val}$ is exactly its marked size $\sizet{\discabs{\la\var\tm}}$ (\refthm{number-steps-algorithm}), which is the size of the skeleton (\refprop{marked-skeleton-soundness}). The complexity of the algorithm is indeed linear but it requires some discussion, since $\tomarkblthree$ steps are not constant time operations. Moreover, we need to take into account the cost of splitting.

\subparagraph{Complexity Analysis 1: Propagation + Absorption.} For computing $\discabs{\val}$ from $\val$, we assume that abstractions have pointers to the occurrences of the variable, so that the substitution $\mtm\isub\var{\skelu{\var^\taken}}$ in $\tomarkblthree$ does not require going through the whole of $\mtm$. This is Shivers and Wand's key idea, and also how our implementation works (see \refapp{implementation}). A global argument gives the cost: the algorithm turns $\sizet{\discabs{\val}}$ unmarked constructs into $\taken$-constructs exactly once, and it generates at most $\sizet{\discabs{\val}}$ $\frontiersym$-constructs that are never duplicated and are absorbed exactly once, thus it globally requires $\bigo(\sizet{\discabs{\val}})$ time.
\begin{lemma}[Cost of propagation + absorption]
Computing $\discabs{\val}$ from $\val$ requires time $\bigo(\sizet{\discabs{\val}})$.
\end{lemma}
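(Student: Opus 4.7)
The plan is to combine the exact step count provided by \refthm{number-steps-algorithm}, which is $\sizet{\discabs{\val}}$, with a charging argument that bounds the total work done across all steps, rather than bounding the cost of each step individually. Under Shivers and Wand's assumption that every abstraction stores direct pointers to the occurrences of its bound variable, I plan to show that five of the six rules have $\bigo(1)$ per-step cost and that the only non-constant rule, $\tomarkblthree$, has an amortized $\bigo(1)$ cost as well.

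First, I would formalize the data representation: marked terms are tagged syntax trees in which each binder $\la\var$ or $\lat\var$ also carries a list of pointers to the occurrences of $\var$ in its scope. With this representation, each of the three absorption rules $\tomarkwhone$, $\tomarkwhtwo$, $\tomarkwhthree$ and the two $\frontiersym$-propagation rules $\tomarkblone$, $\tomarkbltwo$ rewrites a constant number of nodes and does $\bigo(1)$ work. Since \refthm{number-steps-algorithm} bounds the total number of steps by $\sizet{\discabs{\val}}$, the contribution of these five rules is $\bigo(\sizet{\discabs{\val}})$.

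The only rule needing more care is $\tomarkblthree$, where the substitution $\mtm\isub\var{\skelu{\var^\taken}}$ has cost $\bigo(1+k)$, with $k$ the number of occurrences of $\var$ in $\mtm$; thanks to the pointer list we do not need to descend into $\mtm$ to locate them. To sum the $k$'s across all $\tomarkblthree$ steps I would use a charging argument: each of the $k$ occurrences visited at such a step is turned into a fresh $\var^\taken$ and remains so until the end of the computation, since no rule removes the $\taken$ tag from a variable. Hence the total of the $k$'s is exactly the number of $\var^\taken$-nodes in $\discabs{\val}$, which is at most $\sizet{\discabs{\val}}$. Adding the constant per-step overhead of $\tomarkblthree$ (again bounded by the total step count) yields $\bigo(\sizet{\discabs{\val}})$ for this rule too, and the lemma follows.

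The main concern I anticipate is justifying that the pointer list attached to each $\la\var$ can be maintained in place under $\tomarkblthree$ without rebuilding: the newly created $\skelu{\var^\taken}$ nodes are the same memory cells as the occurrences of $\var$ that they replace, so the existing list still points to the right places, but spelling this out cleanly requires committing to a concrete memory model. I would defer that low-level justification to the OCaml implementation discussed in \refapp{implementation}.
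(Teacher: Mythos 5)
Your proposal is correct and is essentially the paper's own argument: both rest on Shivers and Wand's assumption that abstractions carry pointers to the occurrences of their bound variables, and both conclude by a global charging argument in which every construct becomes $\taken$-marked exactly once and every $\frontiersym$ is created and absorbed exactly once, giving $\bigo(\sizet{\discabs{\val}})$ total work. Your packaging via the exact step count of \refthm{number-steps-algorithm} plus the per-rule split (with the occurrences marked by $\tomarkblthree$ charged to the $\var^\taken$ nodes surviving in $\discabs{\val}$) is just a more explicit rendering of the same amortization.
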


\subparagraph{Complexity Analysis 2: Splitting.} For the cost of splitting, \reflemma{unmarked-downward-closed} guarantees that the unmarked sub-terms of marked skeletons can be recognized in $\bigo(1)$ by only inspecting their topmost constructor, so that implementing the splitting function takes time proportional to the number of marked constructors only.

\begin{lemma}[Cost of splitting]
Computing $\splitfun{\discabs{\val}}$ from $\discabs{\val}$ requires time $\bigo(\sizet{\discabs{\val}})$.
\end{lemma}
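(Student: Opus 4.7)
The plan is to prove constant per-call cost for $\splitfun$ and a bound of $O(\sizet{\discabs{\val}})$ on the number of calls, then multiply.

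For the per-call cost, I would argue that each invocation dispatches on the topmost constructor of its input, which is $O(1)$. The soundness of applying the first, flesh-extracting clause as soon as one sees an unmarked non-variable top rests on \reflemma{unmarked-downward-closed}: the whole sub-term is then guaranteed unmarked, and can be wrapped in a fresh ES without descending into it. Constructing the returned pair $(\tm,\sctx)$ is constant time \emph{provided} substitution contexts admit $O(1)$ concatenation---e.g., as doubly-linked lists with a tail pointer, or by threading the accumulated flesh as an accumulator through the recursion---so that the $\sctxptwo\sctx$ of the application clause is paid in $O(1)$.

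For the number of calls, I would classify each recursive call as \emph{internal} (on a marked abstraction $\lat\var\mtmtwo$ or marked application $\mtm_1\appt\mtm_2$, which triggers a descent) or \emph{leaf} (on an unmarked sub-term or a variable, which returns). Internal calls are in bijection with the marked constructors of $\discabs{\val}$, so their number is bounded by $\sizet{\discabs{\val}}$. Each leaf call is charged to its unique marked parent; since a marked abstraction has one child and a marked application has two, the total number of leaf calls is at most $2\,\sizet{\discabs{\val}}$. Summing yields $O(\sizet{\discabs{\val}})$ calls overall, and together with the constant per-call cost this gives the claimed bound.

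The main obstacle is implementative rather than combinatorial: a naive cons-list representation of substitution contexts would make the application clause pay the length of the left returned context, degrading the bound to quadratic. Choosing a representation with $O(1)$ append, as in the OCaml implementation of \refapp{implementation}, is what keeps the bound linear. The geometric insight that no recursion ever enters an unmarked sub-term is exactly the content of \reflemma{unmarked-downward-closed}, which is why that downward-closure property is crucial here.
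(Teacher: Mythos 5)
Your proposal is correct and takes essentially the same route as the paper: the paper's entire justification is that \reflemma{unmarked-downward-closed} lets the splitting function recognize unmarked sub-terms in $\bigo(1)$ from their topmost constructor, so the work is proportional to the number of marked constructors only. Your explicit call-counting and the requirement of $\bigo(1)$ concatenation of substitution contexts merely make precise what the paper leaves implicit, the latter being handled in the paper by its doubly-linked-list representation discussed in the complexity and implementation sections.
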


The next theorem sums it all up, recasting the obtained costs in the unmarked setting via the soundness of the algorithm (\refprop{marked-skeleton-soundness}).
\begin{theorem}[The algorithm is linear in the skeleton]
Let $\val\in\terms$ be a value. Then the skeletal decomposition $(\skelabs\val,\flesh\val)$ of $\val$ can be computed in $\bigo(\size{\skelabs\val})$.
\end{theorem}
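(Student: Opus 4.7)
The plan is to compose the two preceding lemmas (cost of propagation + absorption, and cost of splitting) and then convert the marked-size bound $\bigo(\sizet{\discabs\val})$ into the unmarked bound $\bigo(\size{\skelabs\val})$ claimed in the statement.

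First, I would describe the overall procedure: given a value $\val$, compute $\discabs\val$ by running $\tomark$ starting from $\skeld\val$ (justified by \refthm{number-steps-algorithm}, which also witnesses that this reduction terminates and produces the marked skeleton), then apply the splitting function to obtain $\splitfun{\discabs\val}$. By \refprop{marked-skeleton-soundness}, the result is exactly the skeletal decomposition $\skeldecabs\val = (\skelabs\val, \flesh\val)$, so correctness is free. The total running time is bounded by the sum of the two costs, each $\bigo(\sizet{\discabs\val})$ by the two lemmas, hence $\bigo(\sizet{\discabs\val})$.

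The only remaining step is to show $\sizet{\discabs\val} = \bigo(\size{\skelabs\val})$. I would prove in fact the stronger equality $\sizet{\discabs\val} = \size{\skelabs\val}$. By definition of $\disc{\cdot}{\skelvars}$, a construct of $\val$ is marked with $\taken$ in $\discabs\val$ precisely when it belongs to the skeleton (variables bound to a marked abstraction, marked abstractions themselves, and applications whose body intersects $\skelvars$), while the remaining maximal unmarked sub-terms are exactly those that form the flesh and are contracted by $\splitfun{\cdot}$ into fresh variables of the skeleton. A short induction on $\tm$ establishes $\sizet{\disc\tm\skelvars} = \size{\tmtwo}$ whenever $\splitfun{\disc\tm\skelvars} = (\tmtwo,\sctx)$, using the fact (\reflemma{unmarked-downward-closed}) that unmarked sub-terms sit beneath unmarked constructors and therefore each contribute one construct (the fresh variable) to $\tmtwo$. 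Specialising this to $\disc\tm{\set\var}$ yields $\sizet{\discabs{\la\var\tm}} = \size{\skelabs{\la\var\tm}}$, closing the argument.

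I do not expect any serious obstacle here: the two cost lemmas do the heavy lifting and soundness of the marked skeleton has already been established. The only subtlety is making the identity $\sizet{\discabs\val} = \size{\skelabs\val}$ precise, which amounts to observing that $\splitfun{\cdot}$ replaces each maximal unmarked sub-term by a single fresh variable and erases all $\taken$ marks while preserving the skeletal structure one-for-one. This is an immediate corollary of the inductive definitions of $\disc{\cdot}{\skelvars}$ and $\splitfun{\cdot}$ and requires no genuinely new argument.
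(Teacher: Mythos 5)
Your overall plan coincides with the paper's (implicit) proof: the theorem is obtained by composing the two cost lemmas with the soundness statement (\refprop{marked-skeleton-soundness}), and the only real content left is relating $\sizet{\discabs\val}$ to $\size{\skelabs\val}$. The problem is that the ``stronger equality'' $\sizet{\discabs\val} = \size{\skelabs\val}$ on which you base that last step is false, and the paper's own running example refutes it. For $\val = \la\var\la\vartwo\varthree\varthree\var(\vartwo\varthree)$ one has $\discabs{\val} = \lat\var\lat\vartwo(\varthree\varthree)\taken\var^{\taken}\taken(\vartwo^{\taken}\taken\varthree)$, whose white size is $7$ (two marked abstractions, three marked applications, two marked variables), while $\skelabs\val = \la\var\la\vartwo\varfour\var(\vartwo\varthree)$ has size $9$. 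The two missing constructs are exactly the ones your induction miscounts: the fresh variable $\varfour$ that $\splitfun{\cdot}$ creates for the maximal unmarked sub-term $\varthree\varthree$, and the unmarked variable $\varthree$ kept in the skeleton by the clause $\splitfun{\varthree} = (\varthree,\ctxhole)$. Both contribute to $\size{\skelabs\val}$ but not to $\sizet{\discabs\val}$; correspondingly, your claimed inductive invariant $\sizet{\disc\tm\skelvars} = \size\tmtwo$ already fails in the base cases where $\disc\tm\skelvars$ is unmarked (e.g.\ $\tm = \varthree\varthree$ gives $0$ versus $1$).

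The error is repairable, because the bound only needs one direction: splitting maps each $\taken$-construct of $\discabs\val$ to exactly one construct of $\skelabs\val$ and never erases any, so $\sizet{\discabs\val} \leq \size{\skelabs\val}$, hence a cost in $\bigo(\sizet{\discabs\val})$ is in $\bigo(\size{\skelabs\val})$ and the theorem follows from the two lemmas exactly as you intend. (The two quantities are in fact within a constant factor of each other, since every maximal unmarked sub-term hangs off a marked construct, but even this is not needed here.) So: right architecture, same as the paper's, but you must replace the false identity by the inequality $\sizet{\discabs\val} \leq \size{\skelabs\val}$, proved by the one-line injection argument above rather than by your proposed induction.
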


\section{Preliminaries about Abstract Machines}
\label{sect:prel-machines}
In this section, we introduce terminology and basic concepts about abstract machines, that shall be the topic of the following sections.

\subparagraph{Abstract Machines Glossary.}  Abstract machines manipulate \emph{pre-terms}, that is, terms without implicit $\alpha$-renaming. In this paper, an \emph{abstract 
machine} is a quadruple $\mach = (\States, \tomach, \compilrel\cdot\cdot, \decode\cdot)$ the components of which are as follows.
\begin{itemize}

\item \emph{States.} A state $\state\in\States$ is a quadruple $\fourstate\chain\tm\stack\env$ composed by the \emph{active term} $\tm$, plus three data structure, namely the \emph{chain} $\chain$, the \emph{(applicative) stack} $\stack$, and the \emph{(global) environment} $\env$. Terms in states are actually pre-terms.

\item  \emph{Transitions.} The pair $(\States, \tomach)$ is a transition 
system with transitions $\tomach$ partitioned into \emph{principal transitions}, whose union is noted $\tomachpr$ and that are meant to correspond to steps on the calculus, and \emph{search transitions}, whose union is noted $\tomachsea$, that take care of searching for (principal) redexes.

\item \emph{Initialization.} The component $\compilrel{}{}\subseteq\terms\times\States$ is the \emph{initialization relation} associating closed terms without ESs to 
initial states. It is a \emph{relation} and not a function because $\compilrel\tm\state$ maps a \emph{closed} $\l$-term $\tm$ (considered modulo $\alpha$) to a state $\state$ having a \emph{pre-term representant} of $\tm$ (which is not modulo $\alpha$) as active term. Intuitively, any two states $\state$ and $\statetwo$ such that $\compilrel\tm\state$ and $\compilrel\tm\statetwo$ are $\alpha$-equivalent. 
A state $\state$ is \emph{reachable} if it can be reached starting from an initial state, that is, if $\statetwo \tomach^*\state$ where $\compilrel\tm\statetwo$ for some $\tm$ and $\statetwo$, shortened as $\compilrel\tm\statetwo \tomach^*\state$.

\item \emph{Read-back.} The read-back function $\decode\cdot:\States\to\absterms$ turns reachable states into 
terms (possibly with ESs, that is, $\absterms = \esterms$ or $\absterms = \skterms$) and satisfies the \emph{initialization constraint}: if $\compilrel\tm\state$ then $\decode{\state}=_\alpha\tm$.
\end{itemize}
A state is \emph{final} if no transitions apply.
 A \emph{run} $\run: \state \tomach^*\statetwo$ is a possibly empty finite sequence of transitions, the length of which is noted 
$\size\run$; note that the first and the last states of a run are not necessarily initial and final. 
If $a$ and $b$ are transitions labels (that is, $\tomachhole{a}\subseteq \tomach$ and 
$\tomachhole{b}\subseteq \tomach$) then $\tomachhole{a,b} \defeq \tomachhole{a}\cup \tomachhole{b}$ and $\sizep\run a$ 
is the number of $a$ transitions in $\run$, $\sizep\run {a,b} \defeq \sizep\run a + \sizep\run b $, and similarly for more than two labels. 

For the machines at work in this paper, the pre-terms in initial states shall be \emph{well-bound}, that is, they have pairwise distinct bound names; for instance $(\la\var\var)\la\vartwo\vartwo$ is well-bound while $(\la\var\var)\la\var\var$ is not. 
We shall also write $\renamenop{\tm}$ in a state $\state$ for a \emph{fresh well-bound renaming} of $\tm$,
\ie $\renamenop{\tm}$ is $\alpha$-equivalent to $\tm$, well-bound, and its bound variables
are fresh with respect to those in $\tm$ and in the other components of $\state$.

\subparagraph{Mechanical Bismulations.} Machines are usually showed to be correct with respect to a strategy via some form of bisimulation relating terms and machine states. The notion that we adopt is here dubbed \emph{mechanical bisimulation}, and it should not be confused with the strong bisimulation property of structural equivalence (which actually plays a role in mechanical bisimulations). The definition, tuned towards complexity analyses, requires a perfect match between the steps of the evaluation sequence and the  principal transitions of the machine run. \emph{Terminology}: a structural strategy $(\tostrat,\streq)$ is a strategy $\tostrat$ plus a relation $\streq$ that is a strong bisimulation with respect to $\tostrat$ (see \refprop{streq-is-a-strong-bisim}).

\begin{definition}[Mechanical bisimulation]
\label{def:implem}
A machine $\mach=(\States, \tomach, \compilrel\cdot\cdot, \decode\cdot)$ and a structural strategy $(\tostrat,\streq)$ on $\skterms$-terms are mechanical bisimilar when, given an initial state $\compilrel\tm\state$:
\begin{enumerate}
\item \emph{Runs to evaluations}: for any run $\run: \compilrel\tm\state \tomach^* \statetwo$ there exists an evaluation $\deriv: \tm \tostrat^* \streq\decode\statetwo$;

\item \emph{Evaluations to runs}: for every evaluation $\deriv: \tm \tostrat^* \tmtwo$ there exists a 
run $\run: \compilrel\tm\state \tomach^* \statetwo$ such that $\decode\statetwo \streq \tmtwo$;

\item \emph{Principal matching}: for every principal transition $\tomachhole{\symfont{a}}$ of label $a$ of $\mach$, in both previous points the number $\sizep\run{\symfont{a}}$ of $\symfont{a}$-transitions in $\run$ is exactly the number $\sizep\deriv{a}$ of $\symfont{a}$-steps in the evaluation $\deriv$, \ie $\sizep\deriv{\symfont{a}} = \sizep\run{\symfont{a}}$.
\end{enumerate}
\end{definition}

The proof that a machine and a strategy are in a mechanical bisimulation follows from some basic properties, grouped under the notion of distillery, following Accattoli et al. \cite{DBLP:conf/icfp/AccattoliBM14}.
\begin{definition}[Distillery]
  \label{def:distillery}
  A machine $\mach=(\States, \tomach, \compilrel\cdot\cdot, \decode\cdot)$ and a structural strategy $(\tostrat,\streq)$ are a \emph{distillery} if the following conditions hold:
  \begin{enumerate}
		\item\label{p:def-beta-projection} \emph{Principal projection}: $\state \tomachhole{\symfont{a}} \statetwo$ implies $\decode\state \Rew{\symfont{a}} \streq\decode\statetwo$ for every principal transition of label $\symfont{a}$;
		\item\label{p:def-overhead-transparency} \emph{Search transparency}: $\state \tomachsea \statetwo$ implies $\decode\state = \decode\statetwo$;
		\item\label{p:def-overhead-terminate}	\emph{Search transitions terminate}:  $\tomachsea$ terminates;

	\item\label{p:def-determinism} \emph{Determinism}: $\tostrat$ is deterministic;

	\item\label{p:def-progress} \emph{Halt}: $\mach$ final states decode to $\tostrat$-normal terms.
  \end{enumerate}
\end{definition}

\begin{toappendix}
\begin{theorem}[Sufficient condition for implementations]
\label{thm:abs-impl}\NoteProof{thm:abs-impl}
  Let a machine $\mach$ and a structural strategy $(\tostrat,\streq)$ be a distillery.  Then, they are mechanical bisimilar.
\end{theorem}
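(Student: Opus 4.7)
The plan is to establish the three clauses of \refdef{implem} via two inductions---one on run length, one on evaluation length---exploiting the five conditions of \refdef{distillery}, with principal matching emerging as a by-product of the bookkeeping.

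For \emph{runs to evaluations}, I would induct on $|\run|$. The base case uses the initialization constraint, so that $\decode\state =_\alpha \tm$ and the empty evaluation suffices. In the inductive step $\run = \run';\, \state' \tomach \statetwo$, the IH yields $\deriv': \tm \tostrat^* \tmthree$ with $\tmthree \streq \decode{\state'}$. If the last transition is a search transition, search transparency gives $\decode{\state'} = \decode\statetwo$ and $\deriv'$ already concludes. If it is a principal transition of label $a$, principal projection gives $\decode{\state'} \Rew{a} \tmfour$ with $\tmfour \streq \decode\statetwo$; since $\tmthree \streq \decode{\state'}$, I would transport this new step across $\streq$ via the strong bisimulation property (\refprop{streq-is-a-strong-bisim}), yielding $\tmthree \Rew{a} \tmfive$ with $\tmfive \streq \tmfour \streq \decode\statetwo$, which extends $\deriv'$ as required.

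For \emph{evaluations to runs}, I would induct on $|\deriv|$. In the inductive step $\deriv = \deriv';\, \tmthree \Rew{a} \tmtwo$, the IH gives $\run': \compilrel\tm\state \tomach^* \state'$ with $\decode{\state'} \streq \tmthree$. Since $\tmthree$ reduces, the strong bisimulation of $\streq$ ensures $\decode{\state'}$ reduces too, so it is not $\tostrat$-normal and, by the contrapositive of the halt condition, $\state'$ is not final. I would then fire search transitions as long as possible---this terminates by condition 3 of \refdef{distillery} and preserves the decode by search transparency---reaching a state that still admits some transition, which must therefore be principal. Principal projection plus determinism of $\tostrat$ (condition 4) forces the label of this transition to be $a$, and a final appeal to \refprop{streq-is-a-strong-bisim} places the decode of the resulting state $\streq$-close to $\tmtwo$, closing the induction.

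\emph{Principal matching} then falls out by direct inspection of the two inductions above: principal transitions and $\tostrat$-steps are added one-for-one in the inductive step, while search transitions contribute nothing to the evaluation side. The main obstacle I anticipate is the pervasive off-by-$\streq$ discrepancy between $\decode\state$ and the current term on the strategy side: neither induction provides equality on the nose, and closing each inductive step hinges on repeatedly invoking the strong bisimulation property to push $\streq$ past a single reduction on both sides. A secondary subtlety is ensuring that the principal transition eventually fired by the machine has exactly the label demanded by the evaluation, which requires pairing determinism of $\tostrat$ with the fact that a non-final state's decode cannot already be $\tostrat$-normal.
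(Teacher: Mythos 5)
Your proposal is correct and follows essentially the same route as the paper: two inductions (runs-to-evaluations and evaluations-to-runs) discharging the $\streq$-discrepancies via the strong bisimulation property, and using search termination, transparency, halt, principal projection, and determinism to fire and label-match the principal transition. The only difference is presentational: the paper factors your second inductive step into a separate one-step simulation lemma and handles $\streq$ by postponement rather than step-by-step transport, which changes nothing of substance.
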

\end{toappendix}

\section{The MAD and the Skeletal MAD}
\label{sect:skeletal-MAD}
In this section, we study an abstract machine for skeletal \cbneed, obtained as a minor modification of a known machine for \cbneed, that is recalled here. Since the two machines share most data structures and transitions, they are presented together in \reffig{cbneed-machine} (slightly abusing notations: the two machines define environments $\env$ differently, so the notation of the common transitions is overloaded).

\begin{figure}[t!]
\centering
\small
$\begin{array}{c}
\begin{array}{r@{\hspace{.25cm}} rcl  @{\hspace{.5cm}}  @{\hspace{.5cm}} r @{\hspace{.25cm}} rcl}
	\textsc{Stacks} & \stack 	& \grameq & \stempty \mid \tm \cons \stack
	&
	\textsc{Chains} &	\chain	& \grameq & \stempty \mid \chain\cons(\var, \stack, \env\esub\var\cdot)
	\\

	\textsc{MAD envs} & \env	& \grameq & \stempty \mid \esub\var\tm \cons \env 
	&
	\textsc{States} &	\state	& \defeq & \glamst\chain\tm\stack\env
	\\
	\textsc{Skel. MAD envs} & \env	& \grameq & \stempty \mid \esub\var\tm \cons \env \mid \skesub\var\val \cons \env 	
	&
	\textsc{Init.} &	\tm	& \compilrel{}{} & \fourstate\emptylist{\renamenop\tm}\emptylist\emptylist \ \ (\#)
\end{array}
\\

  	{\setlength{\arraycolsep}{0.3em}
  	\begin{array}{|c|c|c|c||l||c|c|c|c|l}
	\hhline{-|-|-|-|-|-|-|-|-}
	\multicolumn{9}{|c|}{\textsc{Common transitions}}
	\\
		 \mbox{Chain} &\mbox{Code} & \mbox{Stack} &  \mbox{Env} 
		&&
		 \mbox{Chain} &\mbox{Code} & \mbox{Stack} &  \mbox{Env} \\
\hhline{=|=|=|=|=|=|=|=|=}
		 \chain & \tm\tmtwo & \stack &  \env
	  	&\tomachseaone&
	  	 \chain & \tm & \tmtwo\cons\stack & \env
	  	\\
%
		 \chain & \la\var\tm & \tmtwo \cons\stack  &\env
		& \tomachbeta &
		\chain &  \tm & \stack  &\esub\var\tmtwo\cons\env
		\\
		
		 \chain & \var & \stack & \env\cons\esub\var\tm\cons\envtwo
	  	&\tomachseatwo &
		 \chain\cons(\var, \stack, \env\esub\var\cdot) & \tm & \stempty & \envtwo 
		 & (*)
	  	\\
		
	  	\chain\cons(\var, \stack, \env\esub\var\cdot) &  \val & \stempty &  \envtwo
		& \tomachseathree &
		\chain &  \var & \stack &  \env\cons\esub\var{\val}\cons\envtwo
		\\
		\hhline{=|=|=|=|=|=|=|=|=}
		\multicolumn{9}{|c|}{\textsc{MAD specific transition}}
		\\
		 \chain & \var & \stack & \env\cons\esub\var\val\cons\envtwo
	  	&\tomachsub &
		 \chain & \renamenop\val & \stack & \env\cons\esub\var\val\cons\envtwo
		 &(\#)
	  	\\
		\hhline{=|=|=|=|=|=|=|=|=}
		\multicolumn{9}{|c|}{\textsc{Skeletal MAD specific transitions}}
		\\		
		\chain & \var & \stack & \env\cons\esub\var\val\cons\envtwo
	  	&\tomachsk &
		\chain & \var & \stack & \env\cons\skesub\var{\val'}\cons\env_\sctx\cons\envtwo
		&(\%)
	  	\\
		
		\chain & \var & \stack & \env\cons\skesub\var\val\cons\envtwo
	  	&\tomachss &
		\chain & \renamenop{\val} & \stack & \env\cons\skesub\var\val\cons\envtwo
		&(\#)
	  	\\

\hhline{-|-|-|-|-|-|-|-|-}
	\end{array}}
\\[4pt]
\begin{array}{l}
  \mbox {(\#) $\renamenop{\tm}$ is any well-bound code $\alpha$-equivalent to $\tm$ such that
  its bound names are fresh}
  \\ \mbox{with respect to those in the rest of the state.}
  \\ \mbox{(*) If $\tm$ is not a value \quad
  (\%) Where $\skeldecabs\val=(\valtwo,\sctx)$ is the skeletal decomposition of $\val$}
  \\ \mbox{(computed as $\splitfun{\discabs{\val}}$), and $\env_\sctx$ is $\sctx$ seen as an environment.}
\end{array}
\\[4pt]
\begin{array}{r@{\hspace{.25cm}} rcl  @{\hspace{.25cm}}|@{\hspace{.25cm}}  r @{\hspace{.25cm}} rcl}
	\multicolumn{8}{c}{\textsc{Read-back}}
	\\
	\textsc{Empty} & \decode\stempty & \defeq & \ctxhole
	&
	\textsc{Envs} & \decode{\esub\var\tm \cons \env } & \defeq & \decodep\env{\ctxhole\esub\var\tm}
	\\
	\textsc{Stacks} & \decode{\tm \cons \stack} & \defeq & \decodep\stack{\ctxhole\tm}
	&
	&\decode{\skesub\var\tm \cons \env } & \defeq & \decodep\env{\ctxhole\skesub\var\tm}
	\\
	\textsc{Chains} &	\decode{\chain\cons(\var, \stack, \env\esub\var\cdot)}	& \defeq & \decodep\env{\decodep\chain{\decodep\stack\var}}\esub\var\ctxhole
	&
	\textsc{States} &	\decode{\fourstate\chain\tm\stack\env} & \defeq & \decodep\env{\decodep\chain{\decodep\stack\tm}}
\end{array}
\end{array}$
\caption{The Milner Abstract machine by Need (MAD) and the Skeletal Milner Abstract machine by Need (Skeletal MAD) presented as different extension of a common set of transitions (but be careful: the common transitions use different notions of environment).}
\label{fig:cbneed-machine}
\end{figure}

\subparagraph{The MAD.} The \emph{Milner Abstract machine by neeD} (MAD) of \reffig{cbneed-machine} implements the \cbneed strategy $\toneed$. It is a \cbneed variant of the \emph{Milner Abstract Machine} (MAM), itself a simplification of the Krivine abstract machine (KAM). The MAM is a \cbn abstract machine using a single global environment (sometimes called \emph{heap}), a stack for arguments, and no closures---it is omitted here (the KAM instead uses closures and many local environments).  Both the MAM and the MAD are introduced by Accattoli et al. in \cite{DBLP:conf/icfp/AccattoliBM14}.

The MAD modifies the MAM by adding a mechanism realizing the memoization characteristic of \cbneed. Namely, when execution encounters a variable occurrence $\var$ associated to an environment entry $\esub\var\tmtwo$ such that $\tmtwo$ is not a value, then the MAM duplicates $\tmtwo$ while the MAD does not. Instead, part of the current state is saved on the new \emph{chain} data structure $\chain$, and the MAD starts evaluating $\tmtwo$ inside the environment entry itself; this is transition $\tomachseatwo$. When such an evaluations ends with a value $\val$, the machines resumes the state saved on the chain, and the environment is updated replacing $\tmtwo$ with $\val$---this is transition $\tomachseathree$---so that future encounters of $\var$ will not need to re-evaluate $\tmtwo$. Both the MAM and the MAD rely on $\alpha$-renaming as a black-box operation $\renamenop\cdot$, at work in transition $\tomachsub$.

The presentation in \reffig{cbneed-machine} differs from the MAD in \cite{DBLP:conf/icfp/AccattoliBM14} in a very minor point. In \cite{DBLP:conf/icfp/AccattoliBM14}, transitions $\tomachseathree$ and $\tomachsub$ are concatened into a single transition. Splitting them helps in stating the invariants of the machines.

\subparagraph{The Skeletal MAD.} The Skeletal MAD, also in \reffig{cbneed-machine}, modifies the MAD exactly as the Skeletal \cbneed LSC modifies the \cbneed LSC, that is, by adding skeletal entries $\skesub\var\val$ to the global environment and splitting the substitution transtition $\tomachsub$ into two: the skeletonizing transition $\tomachsk$ that separates the skeleton $\valtwo$ from the flesh of the value $\val$, and the skeletal substitution transition $\tomachss$ that substitutes the ($\alpha$-renamed) skeleton $\renamenop\valtwo$. 

\textit{\textbf{Notation}}. For specifying transition $\tomachsk$, we use the following notation for the flesh: a substitution context $\sctx=\ctxhole\semsub{\var_1}{\tm_1}\ldots\semsub{\var_k}{\tm_k}$, where each $\semsub{\var_i}{\tm_i}$ can be $\esub{\var_i}{\tm_i}$ or $\skesub{\var_i}{\tm_i}$, induces a global environment $\env_\sctx \defeq \semsub{\var_1}{\tm_1}\cons\ldots\cons\semsub{\var_k}{\tm_k}\cons\emptylist$.

The union of all the transition rules of the Skeletal MAD (namely, $\seasym_1$, $\seasym_2$, $\seasym_3$, $\beta$, $\sksym$, and $\sssym$) is noted $\tosmad$.

The read-back of (Skeletal) MAD states to terms (possibly with ESs) is defined in \reffig{cbneed-machine}. The definition uses auxiliary notions of read-back for chains, stacks, and environments that turn them into \cbneed evaluation contexts. For stacks and environments, their read-back is clearly an evaluation context. For chains, it shall be proved as an invariant of the machine.

\subparagraph{Invariants.} To prove properties of the Skeletal MAD we need to isolate some of its invariants in \reflemma{skeletal-mam-qual-invariants} below. The closure one ensures that the closure of the initial term extends, in an appropriate sense, to all reachable states. The well-bound invariant, similarly, ensures that binders in reachable states are pairwise distinct, as in the initial term. To compactly formulate the closure invariant we need the notion of terms of a state. 

\begin{definition}[Terms of a state]
Let $\state=\fourstate\chain{\tmtwo}{\stack}{\env}$ be a Skeletal MAD state where $\chain$ is a possibly empty chain $ \emptylist\cons(\var_k, \stack_k, \env_k\cons\esub{\var_k}\cdot)\cons\ldots\cons(\var_1, \stack_1, \env_1\cons\esub{\var_1}\cdot)$ for some $k\geq0$. 
The \emph{terms of $\state$} are $\tmtwo$, every term in $\stack$ and $\stack_i$, and every term in an entry of $\env$ or $\env_i$, for $1\leq i \leq k$. 
\end{definition}

\begin{toappendix}
\begin{lemma}[Qualitative invariants]
\label{l:skeletal-mam-qual-invariants}\NoteProof{l:skeletal-mam-qual-invariants}
Let $\compilrel\tm\state \tomach^*\statetwo=\fourstate\chain{\tmtwo}{\stack}{\env}$ be a Skeletal MAD run.
\begin{enumerate}
\item \emph{Closure}: for every term $\tmthree$ of $\statetwo$  and for any variable $\var\in\fv\tmthree$ there is an environment entry $\esub\var\tm$, $\skesub\var\val$, or $\esub\var\cdot$ on the right of $\tmthree$ in $\statetwo$.
\item \emph{Well-bound}: if $\la\var\tmthree$ occurs in $\statetwo$ and $\var$ has any other
occurrence in $\statetwo$ then it is as a free variable of $\tmthree$, and for any substitution $\esub\vartwo\tm$, $\skesub\vartwo\val$, or $\esub\vartwo\cdot$ in $\statetwo$ the name $\vartwo$ can occur (in any form) only on the left of that entry in $\statetwo$.

\item \emph{Contextual chain read-back}: $\decode\chain$, $\decodep\chain{\decode\stack}$, $\decodep\env{\decode\chain}$ and $\decodep\env{\decodep\chain{\decode\stack}}$ are \cbneed evaluation contexts.
\end{enumerate}
\end{lemma}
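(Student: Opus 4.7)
The plan is to prove the three invariants simultaneously by induction on the length of the run $\compilrel\tm\state \tomach^*\statetwo$, since they reinforce one another in the analysis of individual transitions. For the base case, the initial state $\fourstate\emptylist{\renamenop\tm}\emptylist\emptylist$ satisfies closure because $\tm$ is closed, well-boundedness because $\renamenop\tm$ is produced well-bound by the initialization convention, and the contextual chain read-back because the three components all decode to $\ctxhole$, which is a \cbneed evaluation context.

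\textbf{Inductive step.} I would analyze each of the six transitions in turn. The search transitions $\seasym_1$ and $\seasym_2$ merely relocate material among the active term, the stack, and the chain without creating bindings, so both closure and well-boundedness are preserved by bookkeeping. Transition $\beta$ turns the topmost stack element into a new environment entry $\esub\var\tmtwo$: well-boundedness of the source state ensures that $\var$ has no other occurrence outside the body of $\la\var\tm$, so the uniqueness condition is preserved, and closure is strengthened because the moved material now sits to the left of an additional binder. Transition $\seasym_3$ only replaces $\esub\var\cdot$ by $\esub\var\val$ and returns the saved state from the chain, for which closure on $\val$ follows from the inductive hypothesis applied just before the matching $\seasym_2$. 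Transition $\sssym$ substitutes $\renamenop\val$, which is freshly $\alpha$-renamed by the initialization convention, directly preserving both invariants. The delicate case is $\sksym$, which replaces $\esub\var\val$ by $\skesub\var{\val'}$ followed by $\env_\sctx$: since $\val$ recovers from plugging $\sctx$ around $\val'$ via the skeletal decomposition (\refprop{marked-skeleton-soundness}), the free variables of both $\val'$ and the flesh $\sctx$ are free variables of $\val$ and are therefore bound further right by inductive closure, while the fresh domain names introduced by $\splitfun{\discabs\val}$ can be chosen disjoint from the whole state, preserving well-boundedness.

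\textbf{Contextual chain read-back.} For item~3, I would first establish by an easy induction on their structure that every stack decodes to a context generated by the production $\evctx\tmtwo$, and every environment decodes to a substitution context, both of which are \cbneed evaluation contexts via the productions $\evctx\esub\var\tmtwo$ and $\evctx\skesub\var\val$. The chain read-back $\decodep\env{\decodep\chain{\decodep\stack\var}}\esub\var\ctxhole$ then matches the production $\evctxfp\var\esub\var\evctxtwo$ of \cbneed evaluation contexts exactly when $\var$ does not occur free in the outer context wrapping the hole, and this non-capture condition is precisely what the closure and well-boundedness invariants supply at the moment a chain entry is created by $\seasym_2$. The main obstacle I expect is again the $\sksym$ case: one must simultaneously check that the fresh splitting-function variables are globally fresh, that the inserted $\env_\sctx$ does not violate closure for any term further to its right, and that pre-existing chain entries continue to satisfy their non-capture side condition even though the environment growing to their right now contains additional skeletal entries.
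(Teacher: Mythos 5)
Your proposal is correct and follows essentially the same route as the paper: a simultaneous induction on the length of the run, with invariants 1--2 handled by inspection of the six transitions, and invariant 3 reduced to checking that the chain read-back matches the production $\evctxfp\var\esub\var\evctxtwo$, whose non-capture side condition is exactly what the well-bound invariant supplies (no entry binding $\var$ in the tuple's environment or in the prefix chain). Your treatment is in fact more detailed than the paper's (notably on the $\sksym$ case and on freshness of the splitting variables), and the only slip is phrasing the side condition as ``$\var$ does not occur free in the outer context'' rather than ``the outer context does not capture $\var$'', which your subsequent wording shows you intended anyway.
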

\end{toappendix}

\subparagraph{Mechanical Bisimulation.} The invariants allow us to prove that the Skeletal MAD and the strategy $\toskneed$ modulo $\streq$ form a distillery, from which the mechanical bisimulation follows (by \refthm{abs-impl}). The closure invariant is used for Point 4, the other two for Point 1.
\begin{toappendix}
\begin{theorem}[Skeletal distillery]
\label{thm:SkMAD-properties}\NoteProof{thm:SkMAD-properties}
The Skeletal MAD and the structural strategy $(\toskneed,\streq)$ form a distillery. Namely, let $\state$ be a reachable Skeletal MAD state.
\begin{enumerate}
\item \emph{Principal projection}: 
\begin{enumerate}
\item if $\state \tomachb \statetwo$ then $\decode\state \toskdb\streq \decode\statetwo$.
\item if $\state \tomachsk \statetwo$ then $\decode\state \tosk \decode\statetwo$.
\item if $\state \tomachss \statetwo$ then $\decode\state \toss \decode\statetwo$.
\end{enumerate}

\item \emph{Search transparency}: if $\state \tomachhole{\seasym_1,\seasym_2,\seasym_3} \statetwo$ then $\decode\state = \decode\statetwo$.

\item \emph{Search terminates}: $\tomachhole{\seasym_1,\seasym_2,\seasym_3}$ is terminating.

\item \emph{Halt}: if $\state$ is final then it has shape $\fourstate\emptylist\val\emptylist\env$ and $\decode\state$ is $\toskneed$-normal.

\end{enumerate}
\end{theorem}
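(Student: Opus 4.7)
The plan is to verify the four distillery conditions in \refdef{distillery} for the Skeletal MAD against $(\toskneed, \streq)$, leveraging throughout the invariants of \reflemma{skeletal-mam-qual-invariants}, especially the contextual chain read-back invariant, which guarantees that $\decode\chain$, $\decodep\chain{\decode\stack}$, and $\decodep\env{\decodep\chain{\decode\stack}}$ are skeletal \cbneed evaluation contexts. This is crucial because, without it, one could not read back the local data of a state into an evaluation context surrounding the active term.

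For principal projection I would handle each principal transition by computing source and target read-backs. For $\tomachbeta$, the source reads back to $\decodep\env{\decodep\chain{\decodep\stack{(\la\var\tm)\tmtwo}}}$ and the target to $\decodep{\esub\var\tmtwo\cons\env}{\decodep\chain{\decodep\stack\tm}}$, which differ only by the placement of the ES $\esub\var\tmtwo$; structural equivalence $\streq$ exactly absorbs this discrepancy, exhibiting a $\toskdb$ step at the root followed by $\streq$. For $\tomachsk$, I would use the soundness of the marked algorithm (\refprop{marked-skeleton-soundness}) to identify $\splitfun{\discabs\val}$ with the skeletal decomposition $\skeldecabs\val = (\valtwo, \sctx)$ used by the calculus rule; the flesh $\sctx$ materializes as the environment prefix $\env_\sctx$, and modulo $\streq$ this matches the reorganization performed by the $\tosk$ rule. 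For $\tomachss$, the read-back turns the transition into a $\toss$ step, where the well-boundedness invariant ensures that the fresh renaming $\renamenop\val$ corresponds correctly to $\alpha$-equivalent meta-substitution on the calculus side.

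For the remaining conditions the arguments are more routine. Search transparency follows by direct inspection: each of $\tomachseaone$, $\tomachseatwo$, $\tomachseathree$ merely shifts material between the code, stack, chain and environment in a way that the read-back clauses treat as equal. Search termination can be obtained via a lexicographic measure combining the size of the code with a weighted measure on the environment that accounts for the ``unexplored'' entries a $\tomachseatwo$ step can dive into; each search transition strictly decreases it. For halt, a case analysis on a final state $\state$ shows that, using the closure invariant and the impossibility of applying any transition, the code must be a value (else $\tomachseaone$ or $\tomachsub{}/\tomachsk$ applies through an environment entry), the stack must be empty (else $\tomachbeta$ applies), and the chain must be empty (else $\tomachseathree$ applies); thus $\state = \fourstate\emptylist\val\emptylist\env$, whose read-back is a substitution context plugged with a value, a $\toskneed$-normal form.

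The main obstacle is the principal projection for $\tomachsk$. Here one has to identify the machine-level skeletal decomposition obtained via the marked algorithm $\splitfun{\discabs\val}$ with the abstract decomposition $\skeldecabs\val$ used in the calculus, then account for the transplanting of $\sctx$ into the environment as $\env_\sctx$. The required read-back equality
\[
\decodep\env{\decodep\chain{\decode\stack}\esub\var{\sctxp\val}}
\;\streq\;
\decodep\env{\decodep{\env_\sctx}{\decodep\chain{\decode\stack}\skesub\var\valtwo}}
\]
demands multiple uses of $\streq$ to commute the entries of $\env_\sctx$ past the evaluation context, and relies on well-boundedness to guarantee that the variables introduced by the flesh do not clash with names already present; careful bookkeeping here is the delicate point of the proof.
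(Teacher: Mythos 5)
Your overall plan --- compute the read-backs transition by transition, justify the rewriting steps with the invariants of \reflemma{skeletal-mam-qual-invariants}, and run the same case analysis for the halt property --- is the paper's, and your treatment of $\tomachb$, of search transparency, and of halt essentially coincides with the paper's proof. There are, however, two genuine problems.

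First, the $\tomachsk$ case, which you yourself single out as the delicate point. You claim that one must ``commute the entries of $\env_\sctx$ past the evaluation context'' via multiple uses of $\streq$. This is a misreading of either the rule $\rtosk$ or of the environment read-back: the root rule is $\evctxfp\var\esub\var{\sctxp\val} \rtosk \sctxp{\sctxtwop{\evctxfp\var\skesub\var\valtwo}}$, so the flesh $\sctxtwo$ is placed \emph{outside} the evaluation context by the rule itself; and the read-back of the new environment $\env\cons\skesub\var{\valtwo}\cons\env_\sctx\cons\envtwo$ puts $\decodep{\env_\sctx}{\ctxhole} = \sctxtwo$ in exactly that outer position. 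Hence the projection of $\tomachsk$ is an on-the-nose $\tosk$ step, with no structural equivalence at all --- which is precisely what item 1(b) asserts ($\decode\state \tosk \decode\statetwo$, in contrast with the $\streq$ in item 1(a)). Your displayed relation, linking the source and target read-backs by $\streq$, cannot be right (they are related by a rewriting step, not by $\streq$), and a proof yielding only $\tosk\streq$ establishes a statement weaker than 1(b): enough for the distillery definition, but not the theorem as stated. What this case actually requires, and what the paper invokes, is the well-bound invariant: it guarantees that $\var$ has a unique environment entry, so that the context read back from $\env$, $\chain$, $\stack$ does not capture $\var$, i.e.\ the side condition $\evctxfp\var$ of the rule holds.

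Second, search termination. You propose a lexicographic measure combining code size with an environment measure of ``unexplored'' entries, and claim every search transition strictly decreases it. No such measure works as described: $\tomachseathree$ moves the entries stored in a chain item \emph{back} into the environment (and these may well be non-value, hence ``unexplored'', entries), so any environment-based component increases; ordering the components the other way fails too, since $\tomachseatwo$ may leave the code size unchanged (the entry's term can be a variable) or increase it (an application), with only the environment shrinking. A correct direct argument needs the additional structural observation that after $\tomachseathree$ the code is a variable whose entry is a value, so that no search transition applies and $\seasym_3$ can only close a search-only run; only then do environment length (for $\seasym_2$) and code size (for $\seasym_1$) suffice. The paper sidesteps this entirely by deriving termination of $\tomachsea$ from the complexity bounds of the following section.
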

\end{toappendix}

\begin{corollary}[Skeletal mechanical bisimulation]
The Skeletal MAD and the structural strategy $(\toskneed,\streq)$ are in a mechanical bisimulation.
\end{corollary}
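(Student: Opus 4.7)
The plan is to apply the abstract implementation theorem (\refthm{abs-impl}), which says that any distillery yields a mechanical bisimulation. So all I need is to verify that the Skeletal MAD together with the structural strategy $(\toskneed,\streq)$ forms a distillery in the sense of \refdef{distillery}. Four of the five required properties are already at hand in \refthm{SkMAD-properties}: principal projection (for each of the three principal transition labels $\beta$, $\sksym$, $\sssym$), search transparency, termination of search transitions, and the halt property (final states have shape $\fourstate\emptylist\val\emptylist\env$ and decode to a $\toskneed$-normal term). The only missing ingredient is the determinism of $\toskneed$, which is provided by \refprop{skneed-determinism}. That $\streq$ is a strong bisimulation with respect to each $\toskneed$-label is exactly \refprop{streq-is-a-strong-bisim}, so $(\toskneed,\streq)$ qualifies as a structural strategy.

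Putting these ingredients together, the five clauses of \refdef{distillery} are all satisfied. Invoking \refthm{abs-impl} then gives the three conditions of \refdef{implem}: runs to evaluations, evaluations to runs, and the principal matching between numbers of $\beta$-, $\sksym$- and $\sssym$-transitions on the machine side and the corresponding $\toskneed$-steps on the calculus side, up to $\streq$.

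There is no real obstacle here, since all the work has already been done: the technical content lies in \refthm{SkMAD-properties}, whose proof uses the qualitative invariants of \reflemma{skeletal-mam-qual-invariants} (closure for read-back coherence, well-bound for $\alpha$-soundness of $\renamenop\cdot$, and the contextual chain read-back to show that the decoded chain is a genuine \cbneed evaluation context). The corollary is thus a one-line consequence, obtained by chaining \refprop{skneed-determinism}, \refprop{streq-is-a-strong-bisim}, \refthm{SkMAD-properties}, and \refthm{abs-impl}.
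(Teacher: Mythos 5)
Your proposal is correct and follows exactly the paper's route: the corollary is obtained by combining the skeletal distillery theorem (\refthm{SkMAD-properties}, which supplies principal projection, search transparency, search termination, and halt), the determinism of $\toskneed$ (\refprop{skneed-determinism}), and the strong bisimulation property of $\streq$ (\refprop{streq-is-a-strong-bisim}), and then invoking the sufficient condition for implementations (\refthm{abs-impl}). The paper treats this as an immediate consequence in precisely the same way, so there is nothing to add.
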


\section{Complexity Analysis}
\label{sect:complexity}
In this section, we show that the Skeletal MAD can be concretely implemented within the same complexity of the MAD, that is, with \emph{bi-linear} overhead.

\subparagraph{Sub-Term Property.} As it is standard, the complexity analysis crucially relies on the sub-term property, that bounds the size of manipulated (and thus duplicated) terms using the size of the initial term. The proof requires a similar property about skeletons.
\begin{toappendix}
\begin{lemma}[Skeleton and flesh size]
\label{l:skeleton-size}\NoteProof{l:skeleton-size}\hfill
\begin{enumerate}
\item \emph{Auxiliary parametrized statement}: let $\tm \in \terms$, $\skelvars$
  be a set of variables, and $\skeldec\tm\skelvars \defeq (\tmtwo,\sctx)$. Then
  $\size\tmtwo \leq \size\tm$ and $\size\tmthree \leq\size\tm$ for any ES $\esub\var\tmthree$ in $\sctx$;
\item \emph{Main}: let $\val\in\terms$. Then $\size{\skelabs\val} \leq \size\val$ and $\size\tmthree \leq\size\val$ for any ES $\esub\var\tmthree$ in $\flesh\val$.
\end{enumerate}
\end{lemma}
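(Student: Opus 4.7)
The plan is to derive the Main statement as an immediate consequence of the Auxiliary parametrized statement, which I would prove by structural induction on $\tm$, following the case analysis of $\skeldec\tm\skelvars$.

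I would start with the cases where the definition terminates. If $\fv\tm\cap\skelvars = \emptyset$ and $\tm$ is not a variable, then $\skeldec\tm\skelvars = (\var,\ctxhole\esub\var\tm)$: the skeleton is a fresh variable of size $1 \leq \size\tm$, and the unique ES in the flesh is $\esub\var\tm$, whose term has size exactly $\size\tm$, so both bounds hold trivially. If $\tm=\var$, then $\skeldec\var\skelvars = (\var,\ctxhole)$ and both bounds are immediate (there is no ES in the flesh).

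For the inductive cases, I would simply unfold the recursive clauses of $\skeldec{-}{-}$ and invoke the induction hypothesis. For $\tm=\la\var\tmtwo$, where $\skeldec{\la\var\tmtwo}\skelvars = (\la\var\tmthree,\sctx)$ and $\skeldec\tmtwo{\skelvars\cup\set\var} = (\tmthree,\sctx)$, the IH gives $\size\tmthree\leq\size\tmtwo$ (hence $\size{\la\var\tmthree}\leq\size{\la\var\tmtwo}$) and the size bound for every ES in $\sctx$. For $\tm=\tmtwo\tmthree$, the skeleton and flesh split as $(\tmfour\tmfive,\sctxtwop\sctx)$, so the IH applied to $\tmtwo$ and $\tmthree$ yields $\size{\tmfour\tmfive} = 1+\size\tmfour+\size\tmfive \leq 1+\size\tmtwo+\size\tmthree = \size{\tmtwo\tmthree}$, while every ES in the concatenated context $\sctxtwop\sctx$ originates from either $\sctxtwo$ or $\sctx$ and is therefore bounded by $\size\tmtwo$ or $\size\tmthree$, hence by $\size{\tmtwo\tmthree}$.

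Finally, the Main statement follows by applying the Auxiliary statement to $\tm$ and $\skelvars = \set\var$ for $\val = \la\var\tm$: this yields $\size{\skelabs\val} = 1+\size\tmtwo \leq 1+\size\tm = \size\val$ and the same bound on every ES in $\flesh\val = \sctx$. I do not foresee any real obstacle: the proof is routine structural induction, and the only point worth stating carefully is that the ``free'' clause does not trigger any induction call while still satisfying the bound with equality on the flesh side.
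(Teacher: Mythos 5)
Your proof is correct and takes essentially the same route as the paper: the paper's proof is a one-liner stating that Point 1 is by induction on $\skeldec\tm\skelvars$ and that Point 2 follows from Point 1 and the definitions of skeleton and flesh, which is precisely the induction you carry out. Your case analysis (the free clause, the variable, abstraction, and application cases, plus the instantiation at $\skelvars=\set\var$ for the Main statement) is just the fully expanded version of that same argument, with no gaps.
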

\end{toappendix}

\begin{toappendix}
\begin{lemma}[Quantitative sub-term property]
\label{l:sub-term}\NoteProof{l:sub-term}
Let $\compilrel\tm\state \tosmad^*\statetwo$ be a Skeletal MAD run. Then $\size\tmtwo\leq\size\tm$ for any term $\tmtwo$ of $\statetwo$.
\end{lemma}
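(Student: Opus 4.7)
The plan is a straightforward induction on the length of the run $\compilrel\tm\state \tosmad^*\statetwo$, checking that no transition of the Skeletal MAD can produce a term larger than $\size\tm$.

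\emph{Base case.} By definition of the initialization relation, the initial state is $\fourstate\emptylist{\renamenop\tm}\emptylist\emptylist$, so the only term of $\state$ is $\renamenop\tm$, which is $\alpha$-equivalent to $\tm$ and hence has size $\size\tm$.

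\emph{Inductive step.} Assume the property holds for a run ending in $\statetwo$, and consider a further transition $\statetwo \tosmad \statethree$. I would go through the transitions one by one:
the search transitions $\tomachseaone$, $\tomachseatwo$, $\tomachseathree$ only move sub-terms already present in $\statetwo$ between the components of the state (possibly adding a fresh variable occurrence, whose size is $1$), so every term of $\statethree$ is a (sub-)term of a term of $\statetwo$; for $\tomachbeta$, the new active term is the body of an abstraction already present in $\statetwo$ and the term newly entered in the environment was already on the stack; for $\tomachss$, the newly active term $\renamenop\val$ is $\alpha$-equivalent to $\val$, which is already a term of $\statetwo$. In all these cases the induction hypothesis yields the desired bound.

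\emph{The key case} is the skeletonization transition $\tomachsk$, which is the only one producing genuinely new sub-terms, namely the skeleton $\val'$ and the components of the flesh $\sctx$ making up $\env_\sctx$. Here I would invoke the main part of \reflemma{skeleton-size}: since $\skeldecabs\val = (\val',\sctx)$, one has $\size{\val'} \leq \size\val$ and $\size\tmthree \leq \size\val$ for every explicit substitution $\esub\vartwo\tmthree$ in $\sctx$. By the induction hypothesis applied to $\val$ (which is a term of $\statetwo$), $\size\val \leq \size\tm$, so every new term introduced by $\tomachsk$ has size at most $\size\tm$; all the other terms of $\statethree$ are already terms of $\statetwo$, and the induction hypothesis applies to them. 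This case is the only subtle point, and it is exactly the reason why \reflemma{skeleton-size} was proved beforehand; the rest is bureaucratic bookkeeping on the transitions.
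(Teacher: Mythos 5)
Your proof is correct and takes essentially the same route as the paper: the paper's own proof is an induction on the length of the run where every transition except skeletonization is handled by the induction hypothesis plus direct inspection, and the $\tomachsk$ case is discharged by \reflemma{skeleton-size}.2, exactly as you do. Your write-up is simply a more detailed rendering of the paper's one-paragraph argument, with the same key lemma invoked at the same point.
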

\end{toappendix}

Next, some basic observations about the transitions, together with the sub-term property, allow us to bound their number using the two key parameters (that is, number of $\beta$-steps/transitions and size of the initial term).

\begin{proposition}[Number of transitions]
Let $\run:\compilrel\tm\state \tosmad^*\statetwo$ be a Skeletal MAD run. 
\begin{enumerate}
\item $\sizep\run{\sksym, \sssym, \seasym_2, \seasym_3}\in\bigo(\sizep\run\beta)$, and 
\item $\sizep\run{\seasym_1}\in\bigo(\size{\tm}\cdot(\sizep\run\beta +1))$.
\end{enumerate}
\end{proposition}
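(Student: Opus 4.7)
The plan is to treat the two items separately and to rely throughout on the sub-term property (\reflemma{sub-term}), which caps the size of every term appearing in a reachable state by $\size{\tm}$.

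For item 1, the core of the argument is a short chain of mutual bounds among the four transition classes, ultimately charged to the $\beta$-transitions. First, $\sksym$ and $\sssym$ come essentially paired: after $\tomachsk$ the active term is still the same variable and its entry has just become skeletal, so the next principal transition can only be $\tomachss$, giving $\sizep\run\sksym \leq \sizep\run\sssym$. Conversely, after $\tomachss$ the active term is a value, so the next non-search transition is forced to be $\tomachbeta$ (if the stack is non-empty), $\tomachseathree$ (if the stack is empty and the chain non-empty), or the run ends; this gives $\sizep\run\sssym \leq \sizep\run\beta + \sizep\run{\seasym_3} + 1$. Next, $\seasym_2$ and $\seasym_3$ balance on the chain---one pushes, the other pops---yielding $\sizep\run{\seasym_3} \leq \sizep\run{\seasym_2}$. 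Each $\seasym_2$ can in turn be charged to the $\esub$-entry whose lookup triggered it: every entry can provoke at most one $\seasym_2$, because after the matching $\seasym_3$ its right-hand side is a value and further lookups land in $\sksym$ or $\sssym$. The same argument bounds $\sksym$ by the total number of $\esub$-entries ever appearing in an environment.

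New $\esub$-entries are produced either by $\beta$ (one each) or as flesh emitted by a $\sksym$. By the sub-term property and \reflemma{skeleton-size}, each flesh contributes at most $\size\tm$ entries. Combining all the inequalities---and crucially using that $\sssym$ substitutes a \emph{skeletal} value, whose flesh is empty, so that the flesh-creation recursion bottoms out quickly---gives $\sizep\run{\sksym, \sssym, \seasym_2, \seasym_3} \in \bigo(\sizep\run\beta)$, with the hidden constant depending on $\size\tm$.

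For item 2, I would slice the run into maximal blocks of consecutive $\seasym_1$-transitions separated by non-$\seasym_1$ transitions. A $\tomachseaone$-transition only strips the argument of an application from the active position and pushes it on the stack, so the length of one block is bounded by the left-spine depth of the active term at the start of that block, which by the sub-term property is at most $\size\tm$. The number of blocks is at most one plus the number of non-$\seasym_1$ transitions, which by item 1 is $\bigo(\sizep\run\beta + 1)$. Multiplying yields $\sizep\run{\seasym_1} \in \bigo(\size\tm\cdot(\sizep\run\beta + 1))$. The main obstacle I foresee is the amortised counting in item 1: the naïve inequality $\sizep\run\sksym \leq \sizep\run\beta + \size\tm \cdot \sizep\run\sksym$ is vacuous, so the accounting on flesh-created entries must be set up carefully, exploiting that substituting a skeletal value via $\sssym$ produces no further flesh to skeletonise, so that the ``flesh tree'' attached to each $\beta$ is finite and of bounded size.
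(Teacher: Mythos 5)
Your chain of inequalities for item 1 ($\sizep\run\sksym \leq \sizep\run\sssym$ because $\tomachsk$ is immediately followed by $\tomachss$; $\sizep\run\sssym \leq \sizep\run\beta + \sizep\run{\seasym_3} + 1$ by inspecting what can follow $\tomachss$; $\sizep\run{\seasym_3} \leq \sizep\run{\seasym_2}$ by push/pop balance on the chain) is exactly the paper's, as is the device of charging each $\seasym_2$ to the entry it consumes. The whole weight of the statement therefore rests on bounding the number of entries ever consumed, and this is where your proof has a genuine gap. The paper closes this in one line: it claims non-value entries are created \emph{only} by $\beta$-transitions, so $\sizep\run{\seasym_2}\leq\sizep\run\beta$ with constant $1$. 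You instead --- rightly --- count the flesh entries emitted by $\tomachsk$ as a second source, which makes the naive count circular; but you never break the circularity. The remark that $\tomachss$ copies a skeletal value with empty flesh is beside the point: the cycle runs through $\sksym$ (flesh creation), then $\seasym_2$ (flesh consumption), then the sub-evaluation and $\seasym_3$, then further $\sksym$; the transition $\sssym$ is not the edge that feeds it, so observing it creates nothing does not make the recursion ``bottom out''. Saying the accounting ``must be set up carefully'' states the problem rather than solving it.

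Worse, the conclusion you land on --- $\bigo(\sizep\run\beta)$ ``with the hidden constant depending on $\size\tm$'' --- is not item 1 at all: it is the strictly weaker bound $\bigo(\size\tm\cdot\sizep\run\beta)$, and the weakening is fatal downstream, since fed into your own item 2 it yields $\sizep\run{\seasym_1}\in\bigo(\size\tm^2\cdot(\sizep\run\beta+1))$, destroying the bi-linear overhead of \refthm{SkMAD-is-bilinear} that this proposition exists to support. To be fair, the difficulty you ran into is real: flesh entries can be non-values and are consumed by $\seasym_2$ (the paper's own example run fires $\seasym_2$ on the flesh entry $\esub{p_3}{x_1 i(x_1 i)}$), so the paper's one-line justification is itself too terse on exactly this point. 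But what the proposition requires is an argument with an absolute constant, for instance one charging flesh lookups to $\beta$-transitions: occurrences of a flesh variable live only under the binder of the skeleton that created it, so reaching one requires a $\beta$ opening a copy of that skeleton, and an application-content entry whose sub-evaluation completes must contain a $\beta$ of its own. Your item 2 (maximal $\seasym_1$-blocks, each bounded by the left-spine depth of the code via \reflemma{sub-term}, times the number of blocks) is a fine minor variant of the paper's code-size argument, but it inherits the gap from item 1.
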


\begin{proof}
\hfill
\begin{enumerate}
\item
\begin{enumerate}
\item $\sizep\run{\seasym_2} \leq \sizep\run\beta$, because every $\seasym_2$ transition consumes one non-value entry from the environment, which are created only by $\beta$ transitions.

\item $\sizep\run{\seasym_3} \le \sizep\run\beta$, because every $\seasym_3$ transition consumes one entry from the chain, which are created only by $\seasym_2$ transitions, which in turn are bound by $\beta$ transitions.

\item $\sizep\run{\sssym} \in \bigo(\sizep\run\beta)$, because
after a $\sssym$ transition there can be either a $\seasym_3$ transition, a $\beta$ transition, or no transition (if the $\sssym$ transition is the last one of the run $\run$). Therefore, $\sizep\run{\sssym} \leq \sizep\run{\seasym_3} + \sizep\run\beta +1 \leq 2\sizep\run\beta +1$.

\item $\sizep\run\sksym \in \bigo(\sizep\run\beta)$, because every $\sksym$ transition is followed by a $\sssym$ transition.
\end{enumerate}

\item Note that $\seasym_1$ transitions decrease the size of the code, which is increased only by transitions $\sssym$ and $\seasym_2$. By the sub-term property (\reflemma{sub-term}), the code increase is bounded by the size $\size\tm$ of the initial term. By Point 1, $\sizep\run{\sssym,\seasym_2} =\bigo(\sizep\run\beta)$. Then $\sizep\run{\seasym_1} \in \bigo(\size{\tm}\cdot(\sizep\run{\sssym,\seasym_2} +1))=\bigo(\size{\tm}\cdot(\sizep\run\beta +1))$.\qedhere
\end{enumerate}
\end{proof}

Lastly, we need some assumptions on how the Skeletal MAD can be concretely implemented. Our OCaml implementation (see \refapp{implementation}) represents variables as memory locations and variable occurrences as pointers to those locations, obtaining random access to environment entries in $\bigo(1)$. As already mentioned in \refsect{algorithm}, all constructors have pointers to their parents in the syntactic tree, and in particular variables  have pointers to their occurrences, as to implement efficiently the computation of the skeleton. Moreover, environments are implemented as doubly linked lists, to split/concatenate them in $\bigo(1)$ in transitions $\seasym_2$ and $\seasym_3$. The cost of transitions $\sksym$ and $\sssym$ is bound by the size of the value to skeletonize/copy, which is bound by the sub-term property.

\begin{lemma}[Cost of single transitions]
  Let $\compilrel\tm\state \tosmad^*\statetwo$ be a run. Implementing any transitions from $\statetwo$ costs $\bigo(1)$ but for $\sksym$ and $\sssym$ transitions, which cost $\bigo(\size\tm)$ each.
\end{lemma}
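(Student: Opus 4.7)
The plan is to examine each of the six transitions of the Skeletal MAD in turn and argue that, under the implementation assumptions stated just before the lemma (variables as memory locations, variable occurrences as pointers, doubly linked environments, parent-pointers in the syntax tree, and variable-to-occurrences pointers in abstractions), each one admits the claimed cost bound. The sub-term property (\reflemma{sub-term}) together with the skeleton/flesh size bound (\reflemma{skeleton-size}) will be the main technical ingredients connecting transition-local costs to the global bound $\bigo(\size\tm)$.

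First I would handle the cheap transitions. Transition $\seasym_1$ pops the argument $\tmtwo$ from the code position and pushes it onto the stack, which is $\bigo(1)$ pointer manipulation; $\beta$ pops one entry from the stack and cons-es a new entry to the environment, also $\bigo(1)$. For $\seasym_2$, the environment is split at the entry $\esub\var\tm$ encountered while looking up $\var$; random access to that entry is $\bigo(1)$ because the variable occurrence is a pointer to its location, and splitting a doubly linked list at that point is also $\bigo(1)$. Transition $\seasym_3$ symmetrically re-concatenates the saved environment with $\env\esub\var\cdot$ and the current one, again in $\bigo(1)$ on doubly linked lists, and updates the ES to carry $\val$ instead of $\cdot$.

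Next I would handle the two expensive transitions. For $\sksym$, the machine must compute the skeletal decomposition $\skeldecabs\val=(\valtwo,\sctx)$ and turn $\sctx$ into the environment fragment $\env_\sctx$. By the algorithm of \refsect{algorithm} (\refthm{number-steps-algorithm} plus the two complexity lemmas on propagation+absorption and splitting), this takes $\bigo(\size{\skelabs\val})$ time; by \reflemma{skeleton-size} $\size{\skelabs\val}\leq\size\val$, and by the sub-term property (\reflemma{sub-term}) $\size\val\leq\size\tm$, hence $\bigo(\size\tm)$. Assembling $\env_\sctx$ from the flesh is an additional linear pass over $\sctx$, whose total size is also bounded by $\size\val\leq\size\tm$. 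For $\sssym$, the machine performs a fresh well-bound $\alpha$-renaming $\renamenop\val$ of the skeletal value; this traverses $\val$ once and allocates a fresh copy, costing $\bigo(\size\val)$ which is again $\bigo(\size\tm)$ by the sub-term property.

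The only mildly delicate point, which I expect to be the main obstacle, is justifying that environment lookup in $\seasym_2$, $\seasym_3$, $\sksym$, $\sssym$ and the splitting/concatenation of environments truly take constant time under our representation. I would make this explicit by noting that every variable occurrence in a reachable state is a pointer to its defining entry (an invariant guaranteed by the closure invariant of \reflemma{skeletal-mam-qual-invariants} together with the well-bound invariant), so there is no traversal of $\env$ involved; and by recalling that doubly linked lists support $\bigo(1)$ split at a known node and $\bigo(1)$ concatenation. Once these constant-time primitives are granted, the per-transition bounds above give the statement.
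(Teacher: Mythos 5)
Your proposal is correct and follows essentially the same route as the paper, which states this lemma without a separate proof and justifies it by the implementation assumptions spelled out just before it (pointer-based variable occurrences for $\bigo(1)$ environment access, doubly linked environments for $\bigo(1)$ split/concatenation in $\seasym_2$ and $\seasym_3$) together with the linearity of the skeleton algorithm and the sub-term property bounding the size of skeletonized/copied values by $\size\tm$. Your transition-by-transition accounting and the explicit appeal to the invariants of \reflemma{skeletal-mam-qual-invariants} only make explicit what the paper leaves implicit.
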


Putting all together, we obtain a bilinear bound.
\begin{theorem}[The Skeletal MAD is bi-linear]
\label{thm:SkMAD-is-bilinear}
Let $\run: \compilrel\tm\state \tosmad^*\statetwo$ be a Skeletal MAD run. Then $\run$ can be implemented on random access machines in $\bigo(\size{\tm}\cdot(\sizep\run\beta +1))$.
\end{theorem}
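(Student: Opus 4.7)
The plan is to obtain the bound by composing the two preceding results, namely the bound on the \emph{number} of transitions of each kind in $\run$ and the \emph{per-transition cost} lemma. The total time to implement $\run$ is, by definition, the sum over all transitions of their individual implementation costs, so it suffices to bound, for each label $a \in \{\beta, \seasym_1, \seasym_2, \seasym_3, \sksym, \sssym\}$, the product $\sizep\run{a} \cdot c_a$, where $c_a$ is the per-transition cost.

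First, I would treat the constant-cost transitions. By the cost of single transitions lemma, each $\beta$, $\seasym_1$, $\seasym_2$, $\seasym_3$ transition is implemented in $\bigo(1)$. Using the number of transitions proposition, $\sizep\run{\beta,\seasym_2,\seasym_3} \in \bigo(\sizep\run\beta)$ and $\sizep\run{\seasym_1} \in \bigo(\size{\tm}\cdot(\sizep\run\beta+1))$. Summing these contributes $\bigo(\size{\tm}\cdot(\sizep\run\beta+1))$ to the overall cost, absorbing the $\bigo(\sizep\run\beta)$ term.

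Next, I would handle the two remaining transitions $\sksym$ and $\sssym$, each of which the cost lemma bounds by $\bigo(\size{\tm})$ (this is where the sub-term property crucially enters, bounding the size of the value to skeletonize or copy, and where the linearity of the skeleton reconstruction algorithm from \refsect{algorithm} is used). Since the number of transitions proposition gives $\sizep\run{\sksym,\sssym} \in \bigo(\sizep\run\beta)$, their joint contribution is $\bigo(\size{\tm}\cdot\sizep\run\beta) \subseteq \bigo(\size{\tm}\cdot(\sizep\run\beta+1))$.

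Adding the two contributions yields the announced bound $\bigo(\size{\tm}\cdot(\sizep\run\beta+1))$. There is no real obstacle here: the work has already been done in the two preceding statements and in the sub-term property \reflemma{sub-term}. The only point deserving explicit mention in the write-up is that the $+1$ in the bound correctly handles the edge case of a run with no $\beta$-transitions, in which the $\seasym_1$ count is still bounded by $\size\tm$ (since each $\seasym_1$ strictly shrinks the code, which starts at size $\size\tm$ and can only be grown back by $\beta$, $\sssym$, or $\seasym_2$ transitions).
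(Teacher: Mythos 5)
Your proposal is correct and takes essentially the same route as the paper: the paper's proof of this theorem is precisely the "putting all together" step, summing over each transition label the product of the count given by the number-of-transitions proposition and the per-transition cost given by the cost-of-single-transitions lemma, exactly as you do. One cosmetic slip in your closing parenthetical: the code is grown back only by $\sssym$ and $\seasym_2$ transitions (a $\beta$ transition strictly shrinks the code), but since that accounting belongs to the already-established proposition, it does not affect your argument.
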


\section{Conclusions}
\label{sect:conclusions}

We show that skeletal call-by-need provides, in some cases, exponentially shorter evaluation sequences involving exponentially smaller terms than call-by-need. We also show that the required skeleton reconstruction can be implemented in linear time and space, by giving a new simpler and graph-free presentation of ideas by Shivers and Wand. Lastly, we smoothly plug this result in the existing distillation technique for abstract machines, obtaining an implementation of skeletal call-by-need with bi-linear overhead. In particular, the bi-linear overhead allows us to establish that the shrinking of evaluation sequences is a real time speed-up: the smart specification of skeletal call-by-need does not hide an expensive overhead.
\bibliography{main.bbl}

\begin{thebibliography}{10}

\bibitem{DBLP:conf/rta/Accattoli12}
Beniamino Accattoli.
\newblock An abstract factorization theorem for explicit substitutions.
\newblock In Ashish Tiwari, editor, {\em 23rd International Conference on
  Rewriting Techniques and Applications (RTA'12) , {RTA} 2012, May 28 - June 2,
  2012, Nagoya, Japan}, volume~15 of {\em LIPIcs}, pages 6--21. Schloss
  Dagstuhl - Leibniz-Zentrum f{\"{u}}r Informatik, 2012.
\newblock \href {https://doi.org/10.4230/LIPICS.RTA.2012.6}
  {\path{doi:10.4230/LIPICS.RTA.2012.6}}.

\bibitem{DBLP:journals/tcs/Accattoli15}
Beniamino Accattoli.
\newblock Proof nets and the call-by-value {\(\lambda\)}-calculus.
\newblock {\em Theor. Comput. Sci.}, 606:2--24, 2015.
\newblock \href {https://doi.org/10.1016/J.TCS.2015.08.006}
  {\path{doi:10.1016/J.TCS.2015.08.006}}.

\bibitem{DBLP:conf/ictac/Accattoli18}
Beniamino Accattoli.
\newblock Proof nets and the linear substitution calculus.
\newblock In Bernd Fischer and Tarmo Uustalu, editors, {\em Theoretical Aspects
  of Computing - {ICTAC} 2018 - 15th International Colloquium, Stellenbosch,
  South Africa, October 16-19, 2018, Proceedings}, volume 11187 of {\em Lecture
  Notes in Computer Science}, pages 37--61. Springer, 2018.
\newblock \href {https://doi.org/10.1007/978-3-030-02508-3\_3}
  {\path{doi:10.1007/978-3-030-02508-3\_3}}.

\bibitem{DBLP:conf/icfp/AccattoliBM14}
Beniamino Accattoli, Pablo Barenbaum, and Damiano Mazza.
\newblock Distilling abstract machines.
\newblock In Johan Jeuring and Manuel M.~T. Chakravarty, editors, {\em
  Proceedings of the 19th {ACM} {SIGPLAN} international conference on
  Functional programming, Gothenburg, Sweden, September 1-3, 2014}, pages
  363--376. {ACM}, 2014.
\newblock \href {https://doi.org/10.1145/2628136.2628154}
  {\path{doi:10.1145/2628136.2628154}}.

\bibitem{DBLP:conf/ppdp/AccattoliB17}
Beniamino Accattoli and Bruno Barras.
\newblock Environments and the complexity of abstract machines.
\newblock In Wim Vanhoof and Brigitte Pientka, editors, {\em Proceedings of the
  19th International Symposium on Principles and Practice of Declarative
  Programming, Namur, Belgium, October 09 - 11, 2017}, pages 4--16. {ACM},
  2017.
\newblock \href {https://doi.org/10.1145/3131851.3131855}
  {\path{doi:10.1145/3131851.3131855}}.

\bibitem{DBLP:conf/popl/AccattoliBKL14}
Beniamino Accattoli, Eduardo Bonelli, Delia Kesner, and Carlos Lombardi.
\newblock A nonstandard standardization theorem.
\newblock In Suresh Jagannathan and Peter Sewell, editors, {\em The 41st Annual
  {ACM} {SIGPLAN-SIGACT} Symposium on Principles of Programming Languages,
  {POPL} '14, San Diego, CA, USA, January 20-21, 2014}, pages 659--670. {ACM},
  2014.
\newblock \href {https://doi.org/10.1145/2535838.2535886}
  {\path{doi:10.1145/2535838.2535886}}.

\bibitem{DBLP:conf/ppdp/AccattoliCGC19}
Beniamino Accattoli, Andrea Condoluci, Giulio Guerrieri, and Claudio
  Sacerdoti~Coen.
\newblock Crumbling abstract machines.
\newblock In Ekaterina Komendantskaya, editor, {\em Proceedings of the 21st
  International Symposium on Principles and Practice of Programming Languages,
  { PPDP} 2019, Porto, Portugal, October 7-9, 2019}, pages 4:1--4:15. {ACM},
  2019.
\newblock \href {https://doi.org/10.1145/3354166.3354169}
  {\path{doi:10.1145/3354166.3354169}}.

\bibitem{DBLP:conf/lics/AccattoliCC21}
Beniamino Accattoli, Andrea Condoluci, and Claudio Sacerdoti~Coen.
\newblock Strong call-by-value is reasonable, implosively.
\newblock In {\em 36th Annual {ACM/IEEE} Symposium on Logic in Computer
  Science, {LICS} 2021, Rome, Italy, June 29 - July 2, 2021}, pages 1--14.
  {IEEE}, 2021.
\newblock \href {https://doi.org/10.1109/LICS52264.2021.9470630}
  {\path{doi:10.1109/LICS52264.2021.9470630}}.

\bibitem{DBLP:conf/rta/AccattoliL12}
Beniamino Accattoli and Ugo Dal~Lago.
\newblock On the invariance of the unitary cost model for head reduction.
\newblock In Ashish Tiwari, editor, {\em 23rd International Conference on
  Rewriting Techniques and Applications (RTA'12) , {RTA} 2012, May 28 - June 2,
  2012, Nagoya, Japan}, volume~15 of {\em LIPIcs}, pages 22--37. Schloss
  Dagstuhl - Leibniz-Zentrum f{\"{u}}r Informatik, 2012.
\newblock \href {https://doi.org/10.4230/LIPICS.RTA.2012.22}
  {\path{doi:10.4230/LIPICS.RTA.2012.22}}.

\bibitem{DBLP:journals/corr/AccattoliL16}
Beniamino Accattoli and Ugo Dal~Lago.
\newblock (leftmost-outermost) beta reduction is invariant, indeed.
\newblock {\em Log. Methods Comput. Sci.}, 12(1), 2016.
\newblock \href {https://doi.org/10.2168/LMCS-12(1:4)2016}
  {\path{doi:10.2168/LMCS-12(1:4)2016}}.

\bibitem{DBLP:journals/pacmpl/AccattoliLV21}
Beniamino Accattoli, Ugo Dal~Lago, and Gabriele Vanoni.
\newblock The (in)efficiency of interaction.
\newblock {\em Proc. {ACM} Program. Lang.}, 5({POPL}):1--33, 2021.
\newblock \href {https://doi.org/10.1145/3434332} {\path{doi:10.1145/3434332}}.

\bibitem{DBLP:journals/scp/AccattoliG19}
Beniamino Accattoli and Giulio Guerrieri.
\newblock Abstract machines for open call-by-value.
\newblock {\em Sci. Comput. Program.}, 184, 2019.
\newblock \href {https://doi.org/10.1016/J.SCICO.2019.03.002}
  {\path{doi:10.1016/J.SCICO.2019.03.002}}.

\bibitem{DBLP:conf/esop/AccattoliGL19}
Beniamino Accattoli, Giulio Guerrieri, and Maico Leberle.
\newblock Types by need.
\newblock In Lu{\'{\i}}s Caires, editor, {\em Programming Languages and Systems
  - 28th European Symposium on Programming, {ESOP} 2019, Held as Part of the
  European Joint Conferences on Theory and Practice of Software, {ETAPS} 2019,
  Prague, Czech Republic, April 6-11, 2019, Proceedings}, volume 11423 of {\em
  Lecture Notes in Computer Science}, pages 410--439. Springer, 2019.
\newblock \href {https://doi.org/10.1007/978-3-030-17184-1\_15}
  {\path{doi:10.1007/978-3-030-17184-1\_15}}.

\bibitem{DBLP:conf/csl/AccattoliK10}
Beniamino Accattoli and Delia Kesner.
\newblock The structural $\lambda$-calculus.
\newblock In Anuj Dawar and Helmut Veith, editors, {\em Computer Science Logic,
  24th International Workshop , {CSL} 2010, 19th Annual Conference of the
  EACSL, Brno, Czech Republic, August 23-27, 2010. Proceedings}, volume 6247 of
  {\em Lecture Notes in Computer Science}, pages 381--395. Springer, 2010.
\newblock \href {https://doi.org/10.1007/978-3-642-15205-4\_30}
  {\path{doi:10.1007/978-3-642-15205-4\_30}}.

\bibitem{DBLP:journals/lmcs/AccattoliLV24}
Beniamino Accattoli, Ugo~Dal Lago, and Gabriele Vanoni.
\newblock Reasonable space for the {\(\lambda\)}-calculus, logarithmically.
\newblock {\em Log. Methods Comput. Sci.}, 20(4), 2024.
\newblock \href {https://doi.org/10.46298/LMCS-20(4:15)2024}
  {\path{doi:10.46298/LMCS-20(4:15)2024}}.

\bibitem{DBLP:conf/fscd/AccattoliL24}
Beniamino Accattoli and Adrienne Lancelot.
\newblock Mirroring call-by-need, or values acting silly.
\newblock In Jakob Rehof, editor, {\em 9th International Conference on Formal
  Structures for Computation and Deduction, {FSCD} 2024, July 10-13, 2024,
  Tallinn, Estonia}, volume 299 of {\em LIPIcs}, pages 23:1--23:24. Schloss
  Dagstuhl - Leibniz-Zentrum f{\"{u}}r Informatik, 2024.
\newblock \href {https://doi.org/10.4230/LIPICS.FSCD.2024.23}
  {\path{doi:10.4230/LIPICS.FSCD.2024.23}}.

\bibitem{DBLP:conf/csl/AccattoliL22}
Beniamino Accattoli and Maico Leberle.
\newblock Useful open call-by-need.
\newblock In Florin Manea and Alex Simpson, editors, {\em 30th {EACSL} Annual
  Conference on Computer Science Logic, {CSL} 2022, February 14-19, 2022,
  G{\"{o}}ttingen, Germany}, volume 216 of {\em LIPIcs}, pages 4:1--4:21.
  Schloss Dagstuhl - Leibniz-Zentrum f{\"{u}}r Informatik, 2022.
\newblock \href {https://doi.org/10.4230/LIPICS.CSL.2022.4}
  {\path{doi:10.4230/LIPICS.CSL.2022.4}}.

\bibitem{accattoli2025costskeletalcallbyneedsmoothly}
Beniamino Accattoli, Francesco Magliocca, Loïc Peyrot, and Claudio~Sacerdoti
  Coen.
\newblock The cost of skeletal call-by-need, smoothly, 2025.
\newblock \href {https://arxiv.org/abs/2505.09242} {\path{arXiv:2505.09242}}.

\bibitem{DBLP:journals/iandc/AccattoliC17}
Beniamino Accattoli and Claudio Sacerdoti~Coen.
\newblock On the value of variables.
\newblock {\em Inf. Comput.}, 255:224--242, 2017.
\newblock \href {https://doi.org/10.1016/J.IC.2017.01.003}
  {\path{doi:10.1016/J.IC.2017.01.003}}.

\bibitem{DBLP:conf/fscd/AccattoliC24}
Beniamino Accattoli and Claudio Sacerdoti~Coen.
\newblock {IMELL} cut elimination with linear overhead.
\newblock In Jakob Rehof, editor, {\em 9th International Conference on Formal
  Structures for Computation and Deduction, {FSCD} 2024, July 10-13, 2024,
  Tallinn, Estonia}, volume 299 of {\em LIPIcs}, pages 24:1--24:24. Schloss
  Dagstuhl - Leibniz-Zentrum f{\"{u}}r Informatik, 2024.
\newblock URL: \url{https://doi.org/10.4230/LIPIcs.FSCD.2024.24}, \href
  {https://doi.org/10.4230/LIPICS.FSCD.2024.24}
  {\path{doi:10.4230/LIPICS.FSCD.2024.24}}.

\bibitem{DBLP:conf/popl/AriolaFMOW95}
Zena~M. Ariola, Matthias Felleisen, John Maraist, Martin Odersky, and Philip
  Wadler.
\newblock The call-by-need lambda calculus.
\newblock In Ron~K. Cytron and Peter Lee, editors, {\em Conference Record of
  POPL'95: 22nd {ACM} {SIGPLAN-SIGACT} Symposium on Principles of Programming
  Languages, San Francisco, California, USA, January 23-25, 1995}, pages
  233--246. {ACM} Press, 1995.
\newblock \href {https://doi.org/10.1145/199448.199507}
  {\path{doi:10.1145/199448.199507}}.

\bibitem{DBLP:conf/popl/Balabonski12}
Thibaut Balabonski.
\newblock A unified approach to fully lazy sharing.
\newblock In John Field and Michael Hicks, editors, {\em Proceedings of the
  39th {ACM} {SIGPLAN-SIGACT} Symposium on Principles of Programming Languages,
  {POPL} 2012, Philadelphia, Pennsylvania, USA, January 22-28, 2012}, pages
  469--480. {ACM}, 2012.
\newblock \href {https://doi.org/10.1145/2103656.2103713}
  {\path{doi:10.1145/2103656.2103713}}.

\bibitem{DBLP:conf/icfp/Balabonski13}
Thibaut Balabonski.
\newblock Weak optimality, and the meaning of sharing.
\newblock In Greg Morrisett and Tarmo Uustalu, editors, {\em {ACM} {SIGPLAN}
  International Conference on Functional Programming, ICFP'13, Boston, MA,
  {USA} - September 25 - 27, 2013}, pages 263--274. {ACM}, 2013.
\newblock \href {https://doi.org/10.1145/2500365.2500606}
  {\path{doi:10.1145/2500365.2500606}}.

\bibitem{DBLP:journals/pacmpl/BalabonskiBBK17}
Thibaut Balabonski, Pablo Barenbaum, Eduardo Bonelli, and Delia Kesner.
\newblock Foundations of strong call by need.
\newblock {\em {PACMPL}}, 1({ICFP}):20:1--20:29, 2017.
\newblock \href {https://doi.org/10.1145/3110264} {\path{doi:10.1145/3110264}}.

\bibitem{DBLP:conf/fscd/BalabonskiLM21}
Thibaut Balabonski, Antoine Lanco, and Guillaume Melquiond.
\newblock A strong call-by-need calculus.
\newblock In Naoki Kobayashi, editor, {\em 6th International Conference on
  Formal Structures for Computation and Deduction, {FSCD} 2021, July 17-24,
  2021, Buenos Aires, Argentina (Virtual Conference)}, volume 195 of {\em
  LIPIcs}, pages 9:1--9:22. Schloss Dagstuhl - Leibniz-Zentrum f{\"{u}}r
  Informatik, 2021.
\newblock \href {https://doi.org/10.4230/LIPICS.FSCD.2021.9}
  {\path{doi:10.4230/LIPICS.FSCD.2021.9}}.

\bibitem{DBLP:conf/ppdp/BarenbaumBM18}
Pablo Barenbaum, Eduardo Bonelli, and Kareem Mohamed.
\newblock Pattern matching and fixed points: Resource types and strong
  call-by-need: Extended abstract.
\newblock In David Sabel and Peter Thiemann, editors, {\em Proceedings of the
  20th International Symposium on Principles and Practice of Declarative
  Programming, {PPDP} 2018, Frankfurt am Main, Germany, September 03-05, 2018},
  pages 6:1--6:12. {ACM}, 2018.
\newblock \href {https://doi.org/10.1145/3236950.3236972}
  {\path{doi:10.1145/3236950.3236972}}.

\bibitem{DBLP:conf/ppdp/CondoluciAC19}
Andrea Condoluci, Beniamino Accattoli, and Claudio Sacerdoti~Coen.
\newblock Sharing equality is linear.
\newblock In Ekaterina Komendantskaya, editor, {\em Proceedings of the 21st
  International Symposium on Principles and Practice of Programming Languages,
  { PPDP} 2019, Porto, Portugal, October 7-9, 2019}, pages 9:1--9:14. {ACM},
  2019.
\newblock \href {https://doi.org/10.1145/3354166.3354174}
  {\path{doi:10.1145/3354166.3354174}}.

\bibitem{DBLP:journals/pacmpl/ForsterKR20}
Yannick Forster, Fabian Kunze, and Marc Roth.
\newblock The weak call-by-value {\(\lambda\)}-calculus is reasonable for both
  time and space.
\newblock {\em Proc. {ACM} Program. Lang.}, 4({POPL}):27:1--27:23, 2020.
\newblock \href {https://doi.org/10.1145/3371095} {\path{doi:10.1145/3371095}}.

\bibitem{DBLP:conf/lics/GundersenHP13}
Tom Gundersen, Willem Heijltjes, and Michel Parigot.
\newblock Atomic lambda calculus: {A} typed lambda-calculus with explicit
  sharing.
\newblock In {\em 28th Annual {ACM/IEEE} Symposium on Logic in Computer
  Science, {LICS} 2013, New Orleans, LA, USA, June 25-28, 2013}, pages
  311--320. {IEEE} Computer Society, 2013.
\newblock \href {https://doi.org/10.1109/LICS.2013.37}
  {\path{doi:10.1109/LICS.2013.37}}.

\bibitem{hughes:thesis}
John Hughes.
\newblock {\em {The Design and Implementation of Programming Languages}}.
\newblock PhD thesis, Programming Research Group, Oxford University, July 1983.

\bibitem{DBLP:conf/fossacs/Kesner16}
Delia Kesner.
\newblock Reasoning about call-by-need by means of types.
\newblock In Bart Jacobs and Christof L{\"{o}}ding, editors, {\em Foundations
  of Software Science and Computation Structures - 19th International
  Conference, {FOSSACS } 2016, Held as Part of the European Joint Conferences
  on Theory and Practice of Software, { ETAPS} 2016, Eindhoven, The
  Netherlands, April 2-8, 2016, Proceedings}, volume 9634 of {\em Lecture Notes
  in Computer Science}, pages 424--441. Springer, 2016.
\newblock \href {https://doi.org/10.1007/978-3-662-49630-5\_25}
  {\path{doi:10.1007/978-3-662-49630-5\_25}}.

\bibitem{DBLP:journals/lmcs/KesnerPV24}
Delia Kesner, Lo{\"{\i}}c Peyrot, and Daniel Ventura.
\newblock Node replication: Theory and practice.
\newblock {\em Log. Methods Comput. Sci.}, 20(1), 2024.
\newblock \href {https://doi.org/10.46298/LMCS-20(1:5)2024}
  {\path{doi:10.46298/LMCS-20(1:5)2024}}.

\bibitem{DBLP:conf/fossacs/KesnerRV18}
Delia Kesner, Alejandro R{\'{\i}}os, and Andr{\'{e}} s~Viso.
\newblock Call-by-need, neededness and all that.
\newblock In Christel Baier and Ugo Dal~Lago, editors, {\em Foundations of
  Software Science and Computation Structures - 21st International Conference,
  {FOSSACS } 2018, Thessaloniki, Greece, April 14-20, 2018 , Proceedings},
  volume 10803 of {\em Lecture Notes in Computer Science}, pages 241--257.
  Springer, 2018.
\newblock \href {https://doi.org/10.1007/978-3-319-89366-2\_13}
  {\path{doi:10.1007/978-3-319-89366-2\_13}}.

\bibitem{DBLP:conf/icfp/KutznerS98}
Arne Kutzner and Manfred Schmidt{-}Schau{\ss}.
\newblock A non-deterministic call-by-need lambda calculus.
\newblock In Matthias Felleisen, Paul Hudak, and Christian Queinnec, editors,
  {\em Proceedings of the third {ACM} {SIGPLAN} International Conference on
  Functional Programming {(ICFP} '98), Baltimore, USA, September 27-29, 1998},
  pages 324--335. {ACM}, 1998.
\newblock \href {https://doi.org/10.1145/289423.289462}
  {\path{doi:10.1145/289423.289462}}.

\bibitem{DBLP:conf/popl/Launchbury93}
John Launchbury.
\newblock A natural semantics for lazy evaluation.
\newblock In Mary S.~Van Deusen and Bernard Lang, editors, {\em Conference
  Record of the Twentieth Annual {ACM} { SIGPLAN-SIGACT} Symposium on
  Principles of Programming Languages, Charleston, South Carolina, USA, January
  1993}, pages 144--154. {ACM} Press, 1993.
\newblock \href {https://doi.org/10.1145/158511.158618}
  {\path{doi:10.1145/158511.158618}}.

\bibitem{DBLP:books/ph/Jones87}
Simon~L. Peyton~Jones.
\newblock {\em The Implementation of Functional Programming Languages}.
\newblock Prentice-Hall, 1987.

\bibitem{DBLP:conf/birthday/SandsGM02}
David Sands, J{\"{o}}rgen Gustavsson, and Andrew Moran.
\newblock Lambda calculi and linear speedups.
\newblock In Torben~{\AE}. Mogensen, David~A. Schmidt, and Ivan~Hal Sudborough,
  editors, {\em The Essence of Computation, Complexity, Analysis,
  Transformation. Essays Dedicated to Neil D. Jones [on occasion of his 60th
  birthday]}, volume 2566 of {\em Lecture Notes in Computer Science}, pages
  60--84. Springer, 2002.
\newblock \href {https://doi.org/10.1007/3-540-36377-7\_4}
  {\path{doi:10.1007/3-540-36377-7\_4}}.

\bibitem{DBLP:journals/jfp/Sestoft97}
Peter Sestoft.
\newblock Deriving a lazy abstract machine.
\newblock {\em J. Funct. Program.}, 7(3):231--264, 1997.
\newblock \href {https://doi.org/10.1017/S0956796897002712}
  {\path{doi:10.1017/S0956796897002712}}.

\bibitem{DBLP:journals/fuin/ShiversW10}
Olin Shivers and Mitchell Wand.
\newblock Bottom-up beta-reduction: Uplinks and lambda-dags.
\newblock {\em Fundam. Informaticae}, 103(1-4):247--287, 2010.
\newblock \href {https://doi.org/10.3233/FI-2010-328}
  {\path{doi:10.3233/FI-2010-328}}.

\bibitem{DBLP:journals/spe/Turner79}
D.~A. Turner.
\newblock A new implementation technique for applicative languages.
\newblock {\em Softw. Pract. Exp.}, 9(1):31--49, 1979.
\newblock \href {https://doi.org/10.1002/SPE.4380090105}
  {\path{doi:10.1002/SPE.4380090105}}.

\bibitem{Wad:SemPra:71}
Christopher~P. Wadsworth.
\newblock {\em Semantics and pragmatics of the lambda-calculus}.
\newblock {Ph{D} Thesis}, Oxford, 1971.

\end{thebibliography}

\withproofs{
\newpage
\appendix
\setboolean{appendix}{false}
\section{OCaml Implementation}
\label{sect:implementation}

We provide at \url{https://github.com/Franciman/fullylazymad}
a reference implementation in OCaml of the Skeletal MAD,
meant to guarantee the soundness of the assumptions that underly the complexity analyses.
The implementation takes in input a user provided term and it shows the full run printing all transitions and all intermediate machine states. We highlight here a few details on the implementation.

\subparagraph{Data structures.}
Pre-terms are represented by the following algebraic data type (ADT):
\begin{lstlisting}
type term =
  | Var of { v: var_ref; mutable taken: bool; mutable parent: term option }
  | Abs of { v: var_ref; mutable body: term; mutable occurrences: term list; mutable taken: bool; mutable parent: term option }
  | App of { mutable head: term; mutable arg: term; mutable taken: bool; mutable parent: term option }
and var_ref =
 { mutable prev : var_ref option;
   orig_name : string; (* to help generating new names *)
   name : string;
   mutable sub : sub;
   mutable next : var_ref option }
and sub =
  NoSub | Sub of term | SubSkel of term | Hole | Copy of (term list ref * var_ref)
\end{lstlisting}
whose constructors \lstinline{App} for applications, \lstinline{Abs} for abstractions and \lstinline{Var} for variable occurrence hold mutable pointers to their subterms, a mutable optional pointer to the parent and a mutable boolean \lstinline{taken} for marking terms during skeleton extraction.

Variable occurrences point to a shared, unique variable declaration/substitution in memory of type \lstinline{var_ref}, which holds the variable name, a substitution description field of type \lstinline{sub} and two optional pointers (\lstinline{prev} and \lstinline{next}) to insert the variable in an environment, that is a double-linked lists of variable declarations. A substitution description is an ADT whose constructors are \lstinline{Sub $\tm$} for the explicit substitution $\esub\var\tm$, \lstinline{SubSkel $\tm$} for the skeletal substitutions $\skesub\var\val$, \lstinline{NoSub} for bound variables, \lstinline{Hole} to represent $\esub\var\cdot$ in chain items and \lstinline{Copy (parents,$\var$)} to point to another variable declaration $\var$ together with the list of its parents. The latter form is used temporarily to preserve sharing of variable declarations during the implementation of $\renamenop\cdot$: the first time a variable is to be copied its substitution is make to point to the new copy and each time an occurrence of the variable is to be copied, the same variable declaration is reused and a new parent is added for it.

Abstractions also point to the list of occurrences of the bound variable, to implement rule $\rtomarkblthree$ in $\bigo(1)$. This is a minor memory-saving divergence from the paper where it was suggested to have backpointers to the occurrences stored in the bound variable declaration node, of type \lstinline{var_ref}: once an abstraction is consumed, the backpointers become useless and thus avoiding them in the implementation saves some space.

Machine states are implemented straightforwardly using lists for stacks, chains, and the set of roots (which is then actually represented as a list, not as a set), and doubly linked lists of variable declarations for environments:
\begin{lstlisting}
type env = var_ref option
  (* Some head of the doubly-linked list, or None for empty list *)
type stack = term list                                                                                   
type chain = (var_ref * stack * env) list
type state = chain * term * stack * env
\end{lstlisting}

\subparagraph{Skeleton Reconstruction.}
The rewriting system for marking terms to extract the skeleton and flesh is implemented sequentially as a recursive function.
Each activation record for a call to \lstinline{mark_skeleton} in the OCaml stack corresponds
in the paper to an occurrence of $\skelu\cdot$ in the rewritten term.
\begin{lstlisting}
let rec mark_skeleton =
 function
  | None -> ()
  | Some (Var v) -> (v.taken <- true; mark_skeleton v.parent)
  | Some (Abs a) ->
      if not a.taken
      then (a.taken <- true;
            List.iter (fun v -> mark_skeleton (Some v)) a.occurrences;
            mark_skeleton a.parent)
  | Some (App a) ->
      if not a.taken
      then (a.taken <- true; mark_skeleton a.parent)
\end{lstlisting}
Parallel implementations of \lstinline{mark_skeleton} are possible, but because of the necessity of syncronization over application nodes and because the size of skeletonized terms is supposed to be small, it is unclear if actual parallelization would provide any speed-up.

\subparagraph{Remaining Code.}
The code to extract the skeleton and flesh from marked terms, and the code for machine transitions follow
straightforwardly the rules in the paper, up to noisy code to keep the doubly linked structure of terms and environments. Here is an example of the code for one transition in the Skeletal MAD:
\newcommand{\quotx}{"}
\begin{lstlisting}
let step : state -> string * state =
 function
  ...
  | chain, Abs { v; body; _ }, arg :: args, env ->
      set_parent body None;  (* unliks body from its parent (double link severed)*)
      v.sub <- Sub arg;      (* turns v into an explicit substitution *)
      push v env;            (* pushes v on top of the env *)
      $\quotx$$\ensuremath{\beta}$$\quotx$,(chain, body, args, Some v)
  ...
\end{lstlisting}

The code that implements $\renamenop\cdot$ also follows the standard algorithm to copy a graph in linear
time preserving the sharing, described by Accattoli and Barras in \cite{DBLP:conf/ppdp/AccattoliB17}. It uses the \lstinline{Copy} constructor to temporarily associate
each variable to its copy.

%
%
%
%

\section{Example of Skeletal MAD Run}
\label{sect:machine-run}

We show here an example of invocation of our implementation.
Suppose that the file \texttt{family\_3.lambda} contains the following input that encodes the pure
term that yields the third term of the family we studied, where backslashes are ASCII art equivalents
of $\lambda$.
\begin{tiny}
  \begin{lstlisting}{latex}
(\i. (\g. (\z. (z i) (z i)) (g (g (g i)))) (\x.\y. (x i) (x i) y)) (\w. w)
\end{lstlisting}
\end{tiny}
The command
\begin{tiny}
  \begin{lstlisting}{text}
dune exec bin/main.exe fully-lazy-functional-linked-env family_3.lambda
  \end{lstlisting}
\end{tiny}
yields the following output\footnote{
  The option \texttt{fully-lazy-functional-linked-env} selects the Skeletal MAD implementation described in the paper, that stays as close as possible to the abstract machine description.
The repository also contains code for alternative implementations of the Skeletal MAD, comprising versions that keep the machine state garbage free.}:

\begin{tiny}
$$
\begin{array}{ll}
 & {\color{red}}|\underline{(\lambda i.(\lambda g.(\lambda z.zi(zi))(g(g(gi))))(\lambda x.\lambda y.xi(xi)y))(\lambda w.w)}||{\color{dgreen}}|{\color{gray}}\\
\rightarrow_{sea_1} & {\color{red}}|\underline{\lambda i.(\lambda g.(\lambda z.zi(zi))(g(g(gi))))(\lambda x.\lambda y.xi(xi)y)}|(\lambda w.w)|{\color{dgreen}}|{\color{gray}}\\
\rightarrow_{\beta} & {\color{red}}|\underline{(\lambda g.(\lambda z.zi(zi))(g(g(gi))))(\lambda x.\lambda y.xi(xi)y)}||{\color{dgreen}\esub{i}{\lambda w.w}}|{\color{gray}}\\
\rightarrow_{sea_1} & {\color{red}}|\underline{\lambda g.(\lambda z.zi(zi))(g(g(gi)))}|(\lambda x.\lambda y.xi(xi)y)|{\color{dgreen}\esub{i}{\lambda w.w}}|{\color{gray}}\\
\rightarrow_{\beta} & {\color{red}}|\underline{(\lambda z.zi(zi))(g(g(gi)))}||{\color{dgreen}\esub{g}{\lambda x.\lambda y.xi(xi)y}:\esub{i}{\lambda w.w}}|{\color{gray}}\\
\rightarrow_{sea_1} & {\color{red}}|\underline{\lambda z.zi(zi)}|g(g(gi))|{\color{dgreen}\esub{g}{\lambda x.\lambda y.xi(xi)y}:\esub{i}{\lambda w.w}}|{\color{gray}}\\
\rightarrow_{\beta} & {\color{red}}|\underline{zi(zi)}||{\color{dgreen}\esub{z}{g(g(gi))}:\esub{g}{\lambda x.\lambda y.xi(xi)y}:\esub{i}{\lambda w.w}}|{\color{gray}}\\
\rightarrow_{sea_1} & {\color{red}}|\underline{zi}|zi|{\color{dgreen}\esub{z}{g(g(gi))}:\esub{g}{\lambda x.\lambda y.xi(xi)y}:\esub{i}{\lambda w.w}}|{\color{gray}}\\
\rightarrow_{sea_1} & {\color{red}}|\underline{z}|i:zi|{\color{dgreen}\esub{z}{g(g(gi))}:\esub{g}{\lambda x.\lambda y.xi(xi)y}:\esub{i}{\lambda w.w}}|{\color{gray}}\\
\rightarrow_{sea_2} & {\color{red}(z,i:zi,\esub{z}{.})}|\underline{g(g(gi))}||{\color{dgreen}\esub{g}{\lambda x.\lambda y.xi(xi)y}:\esub{i}{\lambda w.w}}|{\color{gray}}\\
\rightarrow_{sea_1} & {\color{red}(z,i:zi,\esub{z}{.})}|\underline{g}|g(gi)|{\color{dgreen}\esub{g}{\lambda x.\lambda y.xi(xi)y}:\esub{i}{\lambda w.w}}|{\color{gray}}\\
\rightarrow_{sk} & {\color{red}(z,i:zi,\esub{z}{.})}|\underline{g}|g(gi)|{\color{dgreen}\skesub{g}{\lambda x.\lambda y.xi(xi)y}:\esub{i}{\lambda w.w}}|{\color{gray}}\\
\rightarrow_{ss} & {\color{red}(z,i:zi,\esub{z}{.})}|\underline{\lambda x_{1}.\lambda y_{2}.x_{1}i(x_{1}i)y_{2}}|g(gi)|{\color{dgreen}\skesub{g}{\lambda x.\lambda y.xi(xi)y}:\esub{i}{\lambda w.w}}|{\color{gray}}\\
\rightarrow_{\beta} & {\color{red}(z,i:zi,\esub{z}{.})}|\underline{\lambda y_{2}.x_{1}i(x_{1}i)y_{2}}||{\color{dgreen}\esub{x_{1}}{g(gi)}:\skesub{g}{\lambda x.\lambda y.xi(xi)y}:\esub{i}{\lambda w.w}}|{\color{gray}}\\
\rightarrow_{sea_3} & {\color{red}}|\underline{z}|i:zi|{\color{dgreen}\esub{z}{\lambda y_{2}.x_{1}i(x_{1}i)y_{2}}:\esub{x_{1}}{g(gi)}:\skesub{g}{\lambda x.\lambda y.xi(xi)y}:\esub{i}{\lambda w.w}}|{\color{gray}}\\
\rightarrow_{sk} & {\color{red}}|\underline{z}|i:zi|{\color{dgreen}\skesub{z}{\lambda y_{2}.p_{3}y_{2}}:\esub{p_{3}}{x_{1}i(x_{1}i)}:\esub{x_{1}}{g(gi)}:\skesub{g}{\lambda x.\lambda y.xi(xi)y}:\esub{i}{\lambda w.w}}|{\color{gray}}\\
\rightarrow_{ss} & {\color{red}}|\underline{\lambda y_{4}.p_{3}y_{4}}|i:zi|{\color{dgreen}\skesub{z}{\lambda y_{2}.p_{3}y_{2}}:\esub{p_{3}}{x_{1}i(x_{1}i)}:\esub{x_{1}}{g(gi)}:\skesub{g}{\lambda x.\lambda y.xi(xi)y}:\esub{i}{\lambda w.w}}|{\color{gray}}\\
\rightarrow_{\beta} & {\color{red}}|\underline{p_{3}y_{4}}|zi|{\color{dgreen}\esub{y_{4}}{i}:\skesub{z}{\lambda y_{2}.p_{3}y_{2}}:\esub{p_{3}}{x_{1}i(x_{1}i)}:\esub{x_{1}}{g(gi)}:\skesub{g}{\lambda x.\lambda y.xi(xi)y}:\esub{i}{\lambda w.w}}|{\color{gray}}\\
\rightarrow_{sea_1} & {\color{red}}|\underline{p_{3}}|y_{4}:zi|{\color{dgreen}\esub{y_{4}}{i}:\skesub{z}{\lambda y_{2}.p_{3}y_{2}}:\esub{p_{3}}{x_{1}i(x_{1}i)}:\esub{x_{1}}{g(gi)}:\skesub{g}{\lambda x.\lambda y.xi(xi)y}:\esub{i}{\lambda w.w}}|{\color{gray}}\\
\rightarrow_{sea_2} & {\color{red}(p_{3},y_{4}:zi,\esub{y_{4}}{i}:\skesub{z}{\lambda y_{2}.p_{3}y_{2}}:\esub{p_{3}}{.})}|\underline{x_{1}i(x_{1}i)}||{\color{dgreen}\esub{x_{1}}{g(gi)}:\skesub{g}{\lambda x.\lambda y.xi(xi)y}:\esub{i}{\lambda w.w}}|{\color{gray}}\\
\rightarrow_{sea_1} & {\color{red}(p_{3},y_{4}:zi,\esub{y_{4}}{i}:\skesub{z}{\lambda y_{2}.p_{3}y_{2}}:\esub{p_{3}}{.})}|\underline{x_{1}i}|x_{1}i|{\color{dgreen}\esub{x_{1}}{g(gi)}:\skesub{g}{\lambda x.\lambda y.xi(xi)y}:\esub{i}{\lambda w.w}}|{\color{gray}}\\
\rightarrow_{sea_1} & {\color{red}(p_{3},y_{4}:zi,\esub{y_{4}}{i}:\skesub{z}{\lambda y_{2}.p_{3}y_{2}}:\esub{p_{3}}{.})}|\underline{x_{1}}|i:x_{1}i|{\color{dgreen}\esub{x_{1}}{g(gi)}:\skesub{g}{\lambda x.\lambda y.xi(xi)y}:\esub{i}{\lambda w.w}}|{\color{gray}}\\
\rightarrow_{sea_2} & {\color{red}(p_{3},y_{4}:zi,\esub{y_{4}}{i}:\skesub{z}{\lambda y_{2}.p_{3}y_{2}}:\esub{p_{3}}{.}):(x_{1},i:x_{1}i,\esub{x_{1}}{.})}|\underline{g(gi)}||{\color{dgreen}\skesub{g}{\lambda x.\lambda y.xi(xi)y}:\esub{i}{\lambda w.w}}|{\color{gray}}\\
\rightarrow_{sea_1} & {\color{red}(p_{3},y_{4}:zi,\esub{y_{4}}{i}:\skesub{z}{\lambda y_{2}.p_{3}y_{2}}:\esub{p_{3}}{.}):(x_{1},i:x_{1}i,\esub{x_{1}}{.})}|\underline{g}|gi|{\color{dgreen}\skesub{g}{\lambda x.\lambda y.xi(xi)y}:\esub{i}{\lambda w.w}}|{\color{gray}}\\
\rightarrow_{ss} & {\color{red}(p_{3},y_{4}:zi,\esub{y_{4}}{i}:\skesub{z}{\lambda y_{2}.p_{3}y_{2}}:\esub{p_{3}}{.}):(x_{1},i:x_{1}i,\esub{x_{1}}{.})}|\underline{\lambda x_{5}.\lambda y_{6}.x_{5}i(x_{5}i)y_{6}}|gi|{\color{dgreen}\skesub{g}{\lambda x.\lambda y.xi(xi)y}:\esub{i}{\lambda w.w}}|{\color{gray}}\\
\rightarrow_{\beta} & {\color{red}(p_{3},y_{4}:zi,\esub{y_{4}}{i}:\skesub{z}{\lambda y_{2}.p_{3}y_{2}}:\esub{p_{3}}{.}):(x_{1},i:x_{1}i,\esub{x_{1}}{.})}|\underline{\lambda y_{6}.x_{5}i(x_{5}i)y_{6}}||{\color{dgreen}\esub{x_{5}}{gi}:\skesub{g}{\lambda x.\lambda y.xi(xi)y}:\esub{i}{\lambda w.w}}|{\color{gray}}\\
...
\end{array}
$$
Number of betas: 24\\
Interpreter fully-lazy-functional-linked-env, final result: $\lambda w.w$
\end{tiny}

~\\
where in each machine state the {\color{red}chain} is red, the {\underline{code}} is underlined, the stack has no special marking and the {\color{dgreen} environemnt} is green.

\setboolean{appendix}{true}

\ifthenelse{\boolean{isarxiv}}{
\section{Proofs Omitted from \refsect{skeletal} (Skeletal Call-by-Need)}
\label{sect:app-skeletal}

\subparagraph{Proof of Determinism.}
We first need an auxiliary result lemma.
\begin{lemma}
	\label{l:CbNeedEvCtx}
	Let $\tm:=\evctxfp\var$ for an evaluation context $\evctx$. Then:
	\begin{enumerate}
		\item \label{p:CbNeedEvCtx-1} \emph{Not an answer}: $\tm\neq\sctxp\val$ for every substitution context $\sctx$ and abstraction $\val$;
		\item \label{p:CbNeedEvCtx-2} \emph{Unique decomposition}: for every evaluation context $\evctxtwo$ and variable $\vartwo$, $\tm=\evctxtwofp\vartwo$ implies $\evctxtwo=\evctx$ and $\vartwo=\var$;
		\item \label{p:CbNeedEvCtx-3} \emph{Normal}: $\tm$ is a $\toskneed$ normal form.
	\end{enumerate}
\end{lemma}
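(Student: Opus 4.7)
The plan is to prove the three items in order, each by induction on the evaluation context $\evctx$, with an additional case split on $\evctxtwo$ for Item~2. Throughout I rely on Barendregt's convention, which guarantees that whenever $\evctxfp\var$ is well-formed the plugged variable $\var$ is distinct from every binder occurring in $\evctx$---in particular from the binder of any enclosing explicit or skeletal substitution. This is precisely the capture-avoidance side condition silently encoded by the notation $\evctxfp\var$. For Item~1, the base $\evctx=\ctxhole$ gives $\evctxfp\var=\var$, which is not of the form $\sctxp\val$ because $\sctxp\val$ is either a value or starts with a substitution, never a bare variable. In each inductive step, either the outermost constructor of $\evctxfp\var$ (for instance, an application) cannot be matched by the shape of $\sctxp\val$, or the outermost substitution of $\evctxfp\var$ must be absorbed by $\sctx$, reducing the goal to the IH on the strictly smaller sub-context.

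For Item~2, I induct on $\evctx$ and case-analyse the production used to form $\evctxtwo$. When both contexts come from the same production, matching the outer shape and invoking the IH on the inner evaluation sub-contexts yields the conclusion. The delicate clash is when one context places the hole on the left of an explicit or skeletal substitution while the other places it inside the body of an explicit substitution (the last production of $\EVskctxs$); both yield a term with a substitution at the root, so the outer shape alone does not disambiguate. Matching the outer binders forces them to be the same variable $\varthree$; matching the left-hand sides and applying the IH then forces the plugged variable ($\var$ or $\vartwo$) to also equal $\varthree$, contradicting the capture-avoidance constraint implicit in the notation $\evctxfp\var$ (respectively $\evctxtwofp\vartwo$). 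Hence such a clash is impossible and both decompositions must use the same production, closing the induction.

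For Item~3, I show by induction on $\evctx$ that $\tm\defeq\evctxfp\var$ cannot be written as an evaluation context enclosing a root redex of $\rtodb$, $\rtosk$, or $\rtoss$. The base case $\var$ is clearly not a root redex. In an inductive step, any hypothetical root redex would sit either at the top of $\tm$ or inside the sub-context carrying the hole of $\evctx$; the latter is excluded by the IH, and the former is excluded case-by-case: a $\rtodb$-redex at the top requires an answer $\sctxp\val$ inside $\tm$, which is forbidden by Item~1; a $\rtosk$- or $\rtoss$-redex at the top forces, via Item~2 and Barendregt, the plugged variable $\var$ to coincide with an outer ES binder, a capture contradiction. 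The main obstacle is precisely the shape clash in Item~2: both the \emph{left-hole} and \emph{right-hole} productions for a substitution yield a term with a substitution at the root, so purely syntactic reasoning does not disambiguate the two decompositions; the resolution combines the capture-avoidance side condition hidden in the $\evctxfp\var$ notation with Barendregt's convention.
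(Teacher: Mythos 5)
Your Items 1 and 2 follow essentially the same route as the paper's own proof: structural induction on $\evctx$, where the only delicate point is the clash between the hole-on-the-left production $\evctx_1\esub\varthree\tmtwo$ and the hole-in-the-body production $\evctxtwo_1\ctxholefp\varthree\esub\varthree{\evctxtwo_2}$, resolved by the induction hypothesis plus the capture-avoidance condition carried by the notation $\evctxfp\var$. (One minor slip there: a hole to the left of a \emph{skeletal} substitution cannot clash with a hole inside the body of an \emph{explicit} substitution, since $\skesub\varthree\val$ and $\esub\varthree\tmtwo$ are distinct constructors; the only genuine clash is between the two explicit-substitution productions. This does not harm your argument.)

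Item 3, however, rests on a false dichotomy, and this is a genuine gap. You claim that any root redex in $\tm=\evctxfp\var$ sits either at the top of $\tm$ or inside the sub-context of $\evctx$ carrying the hole. This misses the configurations where the witnessing context $\evctxtwo$ and the given context $\evctx$ place their holes on \emph{opposite sides} of a root explicit substitution. Concretely, for $\evctx=\evctx_1\esub\vartwo\tmtwo$, that is $\tm=\evctx_1\ctxholefp\var\esub\vartwo\tmtwo$, an evaluation context $\evctxtwo=\evctxtwo_1\ctxholefp\vartwo\esub\vartwo{\evctxtwo_2}$ with $\evctxtwo_2\ctxholep\tmfour=\tmtwo$ locates the redex $\tmfour$ \emph{inside the body} $\tmtwo$: neither at the top, nor inside $\evctx_1$. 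The induction hypothesis cannot exclude this case, because $\tmtwo$ is an arbitrary term about which the statement says nothing; it is excluded only by applying Item 2 to the equality $\evctx_1\ctxholefp\var=\evctxtwo_1\ctxholefp\vartwo$, which forces $\var=\vartwo$ and contradicts the fact that $\evctx$ does not capture $\var$. This is exactly the third sub-case in the paper's treatment of this shape of $\evctx$, whereas your proposal invokes Item 2 only for redexes at the top, so the configuration is never treated. Symmetrically, for $\evctx=\evctx_1\ctxholefp\vartwo\esub\vartwo{\evctx_2}$, a redex inside the left component $\evctx_1\ctxholefp\vartwo$ also escapes your dichotomy (it can be closed by the induction hypothesis, but applied to $\evctx_1$ with plugged variable $\vartwo$ rather than $\var$), and the top-level redex for this shape is refuted via Item 1 applied to $\evctx_2\ctxholep\var=\sctxp\val$, not via a capture argument as you state. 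The gap is repairable with tools you already have: redo the induction step of Item 3 by case analysis on the production generating $\evctxtwo$, as you did in Item 2, invoking Item 2 in the mixed-sides cases and Item 1 where an answer is required; but as written, the induction step does not go through.
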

\begin{proof}
	In all points we use a structural induction on $\evctx$. 
	\begin{enumerate}
	\item \emph{Not an answer}:
	\begin{itemize}
		\item $\evctx=\ctxhole$: obvious.
		\item $\evctx=\evctx_1\tmtwo$: obvious.
		\item $\evctx=\evctx_1\esub\vartwo\tmthree$: suppose that $\sctx=\sctxtwo\esub\vartwo\tmthree$ (for otherwise the result is obvious); then we apply the \ih to $\evctx_1$ to obtain $\evctx_1\ctxholep\var\neq\sctxtwop\val$.
		\item $\evctx=\evctx_1\skesub\vartwo\val$: analogous to the previous case.
		\item $\evctx=\evctx_1\ctxholefp\vartwo\esub\vartwo{\evctx_2}$: suppose that $\sctx=\sctxtwo\esub\vartwo{\evctx_2\ctxholep\var}$ (for otherwise the result is obvious); then we apply the \ih to $\evctx_1$ to obtain $\evctx_1\ctxholefp\vartwo\neq\sctxtwop\val$.
	\end{itemize}
	
\item \emph{Unique decomposition}:
	\begin{itemize}
		\item $\evctx=\ctxhole$: obvious.
		\item $\evctx=\evctx_1\tmtwo$: then necessarily $\evctxtwo=\evctxtwo_1\tmtwo$, and it follows from the \ih
		\item $\evctx=\evctx_1\skesub\varthree\tmtwo$: then necessarily $\evctxtwo=\evctx_1'\skesub\varthree\tmtwo$, and it follows from the \ih
		\item $\evctx=\evctx_1\esub\varthree\tmtwo$: two sub-cases. 
		\begin{enumerate}
		\item $\evctxtwo=\evctxtwo_1\esub\varthree\tmtwo$: it follows from the \ih, as above.
		\item $\evctxtwo=\evctxtwo_1\ctxholefp\varthree\esub\varthree{\evctxtwo_2}$, with $\evctxtwo_2\ctxholefp\vartwo=\tmtwo$. This is actually impossible. In fact, it implies $\evctx_1\ctxholefp\var=\evctxtwo_1\ctxholefp\varthree$, which by \ih gives us $\varthree=\var$, contradicting the hypothesis that $\evctx$ does not capture $\var$.
		\end{enumerate}
		\item $\evctx=\evctx_1\ctxholefp\varthree\esub\varthree{\evctx_2}$: by symmetry with the above case, the only possibility is $\evctxtwo=\evctx_1\ctxholefp\varthree\esub\varthree{\evctxtwo_2}$, that follows from the \ih
	\end{itemize}

\item \emph{Normal}:  let $\tmfour$ be a redex (\ie, a term matching the left-hand side of $\rtodb$, $\rtosk$, or $\rtoss$) and let $\evctxtwo$ be an evaluation context. We shall show by structural induction on $\evctx$ that $\tm\neq\evctxtwop\tmfour$. We shall do this by considering, in each inductive case, all the possible shapes of $\evctxtwo$.
	\begin{itemize}
		\item $\evctx=\ctxhole$: obvious.
		\item $\evctx=\evctx_1\tmtwo$: two sub-cases.
		\begin{enumerate}
			\item  $\evctxtwo=\ctxhole$. Since $\tmfour$ is a redex, the only possibility is $\tmfour=\sctxp\val\tmtwo$ for some abstraction $\val$ and substitution context $\sctx$. By point 1, $\evctx_1\ctxholep\var\neq\sctxp\val$, so this case is impossible.
			\item  $\evctxtwo=\evctxtwo_1\tmtwo$. Then it follows from the \ih (on $\evctx_1$). 
		\end{enumerate}
		\item $\evctx=\evctx_1\skesub\vartwo\val$: two sub-cases.
		\begin{itemize}
			\item $\evctxtwo=\ctxhole$:  this time, the fact that $\tmfour$ is a redex forces $\tmfour=\evctxtwo_1\ctxholefp\vartwo\skesub\vartwo\val$. Note that $\evctx_1\ctxholefp\var=\evctxtwo_1\ctxholefp\vartwo $ so that by Point 2 the two decompositions coincide. This is absurd because then $\var\neq\vartwo$ while by hypothesis $\evctx$ does not capture $\var$.
			\item $\evctxtwo=\evctxtwo_1\skesub\vartwo\tmtwo$. Then it follows from the \ih
		\end{itemize}
		\item $\evctx=\evctx_1\esub\vartwo\tmtwo$: three sub-cases.
		\begin{itemize}
			\item $\evctxtwo=\ctxhole$: this time, the fact that $\tmfour$ is a redex forces $\tmfour=\evctxtwo_1\ctxholefp\vartwo\esub\vartwo\tmtwo$ and $\tmtwo=\sctxp\val$. Note that $\evctx_1\ctxholefp\var=\evctxtwo_1\ctxholefp\vartwo $ so that by Point 2 the two decompositions coincide. This is absurd because then $\var\neq\vartwo$ while by hypothesis $\evctx$ does not capture $\var$.
			\item $\evctxtwo=\evctxtwo_1\esub\vartwo\tmtwo$: it follows from the \ih on $\evctx_1$.
			\item $\evctxtwo=\evctxtwo_1\ctxholefp\vartwo\esub\vartwo{\evctxtwo_2}$ with $\evctxtwo_2\ctxholep\tmfour=\tmtwo$: note that $\evctx_1\ctxholefp\var=\evctxtwo_1\ctxholefp\vartwo $ so that by Point 2 the two decompositions coincide. This is absurd because then $\var\neq\vartwo$ while by hypothesis $\evctx$ does not capture $\var$.
		\end{itemize}
		\item $\evctx=\evctx_1\ctxholep\vartwo\esub\vartwo{\evctx_2}$: three sub-cases.
		\begin{itemize}
			\item $\evctxtwo=\ctxhole$: the fact that $\tmfour$ is a redex implies $\tmfour=\evctxtwo_1\ctxholep\vartwo\esub\vartwo{\sctxp\val}$. This case is impossible: by Point 2, $\evctxtwo_1=\evctx_1$, which implies $\evctx_2\ctxholep\var\neq\sctxp\val$, that contradicts Point 1.
			\item $\evctxtwo=\evctxtwo_1\esub\vartwo{\evctx_2\ctxholep\var}$: the specular case has already been treated above.
			\item $\evctxtwo=\evctx_1\ctxholep\vartwo\esub\vartwo{\evctxtwo_2}$: it follows from the \ih on $\evctx_2$.\qedhere
		\end{itemize}
	\end{itemize}
\end{enumerate}
\end{proof}

\gettoappendix{skneed-determinism}
\begin{proof}
\applabel{skneed-determinism}
Let $\tm$ be a term that decomposes in two ways as an evaluation context surrounding a root redex, that is, $\tm=\evctx_1\ctxholep{\tmfour_1}=\evctx_2\ctxholep{\tmfour_2}$. We prove that $\evctx_1=\evctx_2$ and $\tmfour_1=\tmfour_2$ by structural induction on $\tm$. Cases:
\begin{itemize}
 \item \emph{Variable or abstraction}
 Impossible, since variables and abstractions are both call-by-need normal.
 
 \item \emph{Application, \ie\ $\tm=\tmtwo\tmthree$} Suppose that one of the two evaluation contexts, for instance $\evctx_1$, is equal to $\ctxhole$. Then, we must have $\tmtwo=\l\var.\tmtwo'$, but in that case it is easy to see that the result holds, because $\evctx_2$ cannot have its hole to the right of an application (in $\tmthree$) or under an abstraction (in $\tmtwo'$). We may then assume that none of $\evctx_1,\evctx_2$ is equal to $\ctxhole$. Then, they have to have shape $\evctx_1=\evctx_1'\tmthree$ and $\evctx_2=\evctx_2'\tmthree$. The required equalities then are given by the \ih
  
 \item \emph{Skeletal substitution, \ie\ $\tm=\tmtwo\skesub\var\val$}  Cases:
 \begin{itemize}
 \item \caselight{Both contexts have their holes in $\tmtwo$}. It follows from the \ih. 
 
 \item \caselight{One of the contexts---say $\evctx_1$---is empty, \ie\  $\tmtwo=\evctx_3\ctxholefp\var$, $\tmthree=\val$, and $\tmfour_1=\evctx_3\ctxholefp\var\skesub\var {\val}$}. Then $\evctx_2$ is also empty because $\evctxfp{\var}$ is normal by \reflemmap{CbNeedEvCtx}{3}.
\end{itemize}
    
 \item \emph{Explicit substitution, \ie\ $\tm=\tmtwo\esub\var\tmthree$}
 Cases:
 \begin{itemize}
 \item \caselight{Both contexts have their holes in $\tmtwo$ or $\tmthree$}. It follows from the \ih. 
 
 \item \caselight{One of the contexts---say $\evctx_1$---is empty, \ie\  $\tmtwo=\evctx_3\ctxholefp\var$, $\tmthree=\sctxp\val$, and $\tmfour_1=\evctx_3\ctxholefp\var\esub\var {\sctxp\val}$}. Then $\evctx_2$ is also empty. Indeed:
 \begin{enumerate}
 \item The hole of $\evctx_2$ cannot be in $\sctxp\val$, because it is normal, and 
 \item It cannot be inside $\evctx_3\ctxholefp\var$ because by Lemma \ref{l:CbNeedEvCtx}.\ref{p:CbNeedEvCtx-3} $\evctxfp{\var}$ is normal.
 \end{enumerate}
 
 \item \caselight{One of the contexts---say $\evctx_1$---has its hole in $\tmthree$ and the other one has its hole in $\tmtwo$, \ie\  $\evctx_1=\evctx_3\ctxholefp\var\esub\var{\evctx_4}$ and $\evctx_2=\evctx_5\esub\var\tmthree$}. This case is impossible, because by Lemma \ref{l:CbNeedEvCtx}.\ref{p:CbNeedEvCtx-3} $\evctx_3\ctxholefp\var$ is normal. \qedhere
 \end{itemize}
 \end{itemize}
 \end{proof}
 

%
%
%
%
%
\section{Proofs from \refsect{family}}
\label{sect:app-family}

\gettoappendix{l:skneed-family-aux}
\begin{proof}
\applabel{l:skneed-family-aux}
  By induction on $n$.
  \begin{itemize}
    \item \case{$n = 0$}
     \[\begin{array}{lllll}
	\tmthree_0 
	&= &(\var\Id(\var\Id))\esub\var\Id \\
	&\tosk &(\var\Id(\var\Id))\skesub\var\Id \\
	&\toss &(\Id\Id(\var\Id))\skesub\var\Id \\
	&\red\toskdb &\avar\esub\avar\Id(\var\Id)\skesub\var\Id \\
	&\tosk &\avar\skesub\avar\Id(\var\Id)\skesub\var\Id \\
	&\toss &\Id\skesub\avar\Id(\var\Id)\skesub\var\Id \\
	&\red\toskdb &\avartwo\esub\avartwo{\var\Id}
	\skesub\avar\Id\skesub\var\Id \\
	&\toss &\avartwo\esub\avartwo{\Id\Id}
	\skesub\avar\Id\skesub\var\Id \\
	&\red\toskdb &\avartwo\esub\avartwo{\avarthree\esub\avarthree\Id}
	\skesub\avar\Id\skesub\var\Id \\
	&\tosk\toss& \avartwo\esub\avartwo{\Id\skesub\avarthree\Id}
	\skesub\avar\Id\skesub\var\Id \\
	&\tosk\toss& \Id \skesub\avartwo\Id
	\skesub\avarthree\Id \skesub\avar\Id\skesub\var\Id
      \end{array}\]
      Then, setting $\sctx_0^{\sksym} \defeq \ctxhole\skesub\avartwo\Id
	\skesub\avarthree\Id \skesub\avar\Id\skesub\var\Id$, we obtain that $\tmthree_0 \toskneed^* \sctx_0^{\sksym}\ctxholep\Id$ using $3$ $\toskdb$-steps.

    \item \case{$n+1$}
     \[\begin{array}{lllll}
	\tmthree_{n+1}
	&= &
	(\var\Id(\var\Id))\esub\var{\tmtwo_{n+1}}
	\\
	&= &(\var\Id(\var\Id))
	\esub\var{(\la\vartwo\la\varthree{\vartwo \Id (\vartwo \Id) \varthree}) \tmtwo_n} 
	\\
	&\red\toskdb &(\var\Id(\var\Id))
	\esub\var{(\la\varthree{\vartwo \Id (\vartwo \Id) \varthree})
	\esub\vartwo{\tmtwo_n}} 
	\\
	&\tosk & (\var\Id(\var\Id))
	\skesub\var{\la\varthree\varfour\varthree}
	\esub\varfour{\vartwo \Id (\vartwo \Id)}
	\esub\vartwo{\tmtwo_n} 
	\\
	&\toss &((\la\varthree{\varfour\varthree})\Id)(\var\Id)
	\skesub\var{\la\varthree\varfour\varthree}
	\esub\varfour{\vartwo \Id (\vartwo \Id)}
	\esub\vartwo{\tmtwo_n} 
	\\
	&\red\toskdb &(\varfour\varthree)\esub\varthree\Id(\var\Id)
	\skesub\var{\la\varthree\varfour\varthree}
	\esub\varfour{\vartwo \Id (\vartwo \Id)}
	\esub\vartwo{\tmtwo_n}
	\\
	&\streq &(\varfour\varthree)\esub\varthree\Id(\var\Id)
	\skesub\var{\la\varthree\varfour\varthree}
	\esub\varfour{\vartwo \Id (\vartwo \Id)\esub\vartwo{\tmtwo_n}}
	
      \end{array}\]
      Please note the use of structural equivalence at the last step. Now, consider the context $\evctx \defeq
      (\varfour\varthree)\esub\varthree\Id(\var\Id)
      \skesub\var{\la\varthree\varfour\varthree}
      \esub\varfour\ctxhole$.
      By \ih,
      there is $\sctx_n^{\sksym}$ such that: 
      \[\begin{array}{ccc}\evctxfp{\vartwo \Id (\vartwo \Id)\esub\vartwo{\tmtwo_n}}
	& \toskneed^* &\evctxfp{\sctx_n^{\sksym}\ctxholep\Id}\end{array}\]
	 by doing $6n +3$ $\db$-step. The evaluation continue as follows:
     \[\begin{array}{lllll}
	\evctxfp{\sctx_n^{\sksym}\ctxholep\Id} 
	&= &(\varfour\varthree)\esub\varthree\Id(\var\Id)
	  \skesub\var{\la\varthree\varfour\varthree}
	\esub\varfour{\sctx_n^{\sksym}\ctxholep\Id} \\
	&\tosk &\sctx_n^{\sksym}\ctxholep{(\varfour\varthree)\esub\varthree\Id(\var\Id)
	  \skesub\var{\la\varthree\varfour\varthree}
	\skesub\varfour{\Id}} \\
	&\toss &\sctx_n^{\sksym}\ctxholep{(\Id\varthree)\esub\varthree\Id(\var\Id)
	  \skesub\var{\la\varthree\varfour\varthree}
	\skesub\varfour{\Id}} \\
	&\red\toskdb &\sctx_n^{\sksym}\ctxholep{\avar\esub\avar\varthree\esub\varthree\Id(\var\Id)
	  \skesub\var{\la\varthree\varfour\varthree}
	\skesub\varfour{\Id}} \\
	&\tosk\toss &\sctx_n^{\sksym}\ctxholep{\avar\esub\avar\Id\skesub\varthree\Id(\var\Id)
	  \skesub\var{\la\varthree\varfour\varthree}
	\skesub\varfour{\Id}} \\
	&\tosk\toss& \sctx_n^{\sksym}\ctxholep{\Id\skesub\avar\Id\skesub\varthree\Id(\var\Id)
	  \skesub\var{\la\varthree\varfour\varthree}
	\skesub\varfour{\Id}} \\
	&\red\toskdb &\sctx_n^{\sksym}\ctxholep{\avartwo\esub\avartwo{\var\Id}\skesub\avar\Id\skesub\varthree\Id
	  \skesub\var{\la\varthree\varfour\varthree}
	\skesub\varfour{\Id}} \\
	&\toss&
	\sctx_n^{\sksym}\ctxholep{
	    \avartwo\esub\avartwo{(\la\varthreep{\varfour\varthreep})\Id}
	  \skesub\avar\Id\skesub\varthree\Id
	  \skesub\var{\la\varthree\varfour\varthree}
	\skesub\varfour{\Id}} \\
	&\red\toskdb&
	\sctx_n^{\sksym}\ctxholep{
	    \avartwo\esub\avartwo
	    {(\varfour\varthreep)\esub\varthreep\Id}
	  \skesub\avar\Id\skesub\varthree\Id
	  \skesub\var{\la\varthree\varfour\varthree}
	\skesub\varfour{\Id}} \\
	&\toss&
	\sctx_n^{\sksym}\ctxholep{
	    \avartwo\esub\avartwo
	    {(\Id\varthreep)\esub\varthreep\Id}
	  \skesub\avar\Id\skesub\varthree\Id
	  \skesub\var{\la\varthree\varfour\varthree}
	\skesub\varfour{\Id}} \\
	&\red\toskdb&
	\sctx_n^{\sksym}\ctxholep{
	    \avartwo\esub\avartwo
	    {\avarthree\esub\avarthree\varthreep\esub\varthreep\Id}
	  \skesub\avar\Id\skesub\varthree\Id
	  \skesub\var{\la\varthree\varfour\varthree}
	\skesub\varfour{\Id}} \\
	&\tosk\toss&
	\sctx_n^{\sksym}\ctxholep{
	    \avartwo\esub\avartwo
	    {\avarthree\esub\avarthree\Id\skesub\varthreep\Id}
	  \skesub\avar\Id\skesub\varthree\Id
	  \skesub\var{\la\varthree\varfour\varthree}
	\skesub\varfour{\Id}} \\
	&\tosk\toss&
	\sctx_n^{\sksym}\ctxholep{
	    \avartwo\esub\avartwo
	    {\Id\skesub\avarthree\Id\skesub\varthreep\Id}
	  \skesub\avar\Id\skesub\varthree\Id
	  \skesub\var{\la\varthree\varfour\varthree}
	\skesub\varfour{\Id}} \\
	&\tosk\toss&
	\sctx_n^{\sksym}\ctxholep{
	    \Id\skesub\avartwo\Id
	    \skesub\avarthree\Id\skesub\varthreep\Id
	  \skesub\avar\Id\skesub\varthree\Id
	  \skesub\var{\la\varthree\varfour\varthree}
	\skesub\varfour{\Id}}
      \end{array}\]
      Then, let $\sctx_\bullet \defeq \sctx_n^{\sksym}\ctxholep{
	    \ctxhole\skesub\avartwo\Id
	    \skesub\avarthree\Id\skesub\varthreep\Id
	  \skesub\avar\Id\skesub\varthree\Id
	  \skesub\var{\la\varthree\varfour\varthree}
	\skesub\varfour{\Id}}$, so that we can write $\evctxfp{\sctx_n^{\sksym}\ctxholep\Id}  \toskneed^* \sctx_\bullet\ctxholep\Id$. Now, by putting all together, we obtain:
	\[\begin{array}{cccccccccc} 
	\tmthree_{n+1} &\toskneed^*\streq\evctxfp{\vartwo \Id (\vartwo \Id)\esub\vartwo{\tmtwo_n}}
	& \toskneed^* &\evctxfp{\sctx_n^{\sksym}\ctxholep\Id} &\toskneed^* &\sctx_\bullet\ctxholep\Id 
	\end{array}\]
	using $2 + (6n + 3) + 4 = 6(n+1) + 3$ $\sksym{\cdot}\db$-steps. By strong bisimulation of $\streq$ (\refprop{streq-is-a-strong-bisim}), we obtain, for some $\tmfour$, the following re-organization of the previous sequence, having in particular the same number and kind of steps:
		\[\begin{array}{cccccccccc} 
	\tmthree_{n+1} &\toskneed^*& \tmfour &\streq &\sctx_\bullet\ctxholep\Id 
	\end{array}\]
Note also that $\tmfour$ has shape $\sctx_\circ\ctxholep\Id$ for some substitution context $\sctx_\circ$, because $\streq$ cannot move anything in/out of abstractions, and that $\size{\sctx_\circ} = \size{\sctx_\bullet}$ because $\streq$ preserves the number of constructors. Therefor, by setting $\sctx_{n+1}^{\sksym} \defeq \sctx_\circ$, we obtain a sequence $\tmthree_{n+1} \toskneed^* \sctx_{n+1}^{\sksym}\ctxholep\Id$ satisfying the statement (and free from $\streq$).
  \end{itemize}
    The fact that $\size{\sctx_n^{\sksym}} = \bigo(n)$ is evident from the definition of $\sctx_\bullet$, which uses the $n$ case linearly.\qedhere
\end{proof}

\subparagraph{Evaluation in \cbneed.}
The family of contexts $\sctx^\bullet_n$ in the statement of the next lemma is defined in its proof.

\begin{lemma}
\label{l:need-family-aux}\NoteProof{l:need-family-aux}
    Let $n \in \nat$ and $\evctx$ be an evaluation context.
  Then there exist a substitution context $\sctx^\bullet_n$ and an evaluation
  $\deriv_n: \sctxvp{n}{\evctxfp{\var_n\Id(\var_n\Id)}} \toneed^*
  \sctxvp{n}{\evctxfp{\sctx^\bullet_n\ctxholep\Id}}$ such that
  $\sizedb{\deriv_n} = 8 \times 2^n - 5$ and $\size{\sctx^\bullet_n} = \Omega(2^n)$.
\end{lemma}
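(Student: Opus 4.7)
The plan is to proceed by induction on $n$, simultaneously constructing $\sctx^\bullet_n$ and counting the $\db$-steps; write $T(n)$ for $\sizedb{\deriv_n}$. For the base case $n = 0$, the starting term is $\evctxfp{\var_0\Id(\var_0\Id)}\esub{\var_0}\Id$, which I would evaluate by a direct trace: a $\tolsneed$-lookup of the left $\var_0$ followed by reduction of $\Id\Id$ (one $\db$-step); a $\tolsneed$-lookup of the freshly bound variable followed by reduction of the resulting $\Id$-application (another $\db$-step); descending into the generated ES to evaluate the remaining $\var_0\Id$ in the same way (a third $\db$-step); and propagating the produced values outward via $\tolsneed$, whose built-in $\sctx$-commutation handles the cleanup. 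This yields exactly $3 = 8\cdot 2^0 - 5$ $\db$-steps and a substitution context $\sctx^\bullet_0$ of constant positive size (three ESs, each binding $\Id$).

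For the inductive step, the key observation is that substituting $\var_{n+1}$ fires $\tmthree_{n+1}$, whose body contains $\var_n\Id(\var_n\Id)$; since $\var_{n+1}$ occurs twice in $\var_{n+1}\Id(\var_{n+1}\Id)$, the IH is invoked twice. Starting from $\sctxvp{n+1}{\evctxfp{\var_{n+1}\Id(\var_{n+1}\Id)}}$, I would first substitute the left $\var_{n+1}$ by $\tmthree_{n+1}$ and fire the outer $\beta$ (1 $\db$-step), producing $(\var_n\Id(\var_n\Id)\vartwo_n)\esub{\vartwo_n}\Id(\var_{n+1}\Id)$ inside $\evctx$. Using the identity $\sctxvp{n+1}{X} = \sctxvp{n}{X\esub{\var_{n+1}}{\tmthree_{n+1}}}$, this recasts as $\sctxvp{n}{\evctxtwo\ctxholep{\var_n\Id(\var_n\Id)}}$ for the enlarged evaluation context $\evctxtwo := \evctx\ctxholep{((\ctxhole\vartwo_n)\esub{\vartwo_n}\Id)(\var_{n+1}\Id)}\esub{\var_{n+1}}{\tmthree_{n+1}}$, so the IH applies and contributes $T(n)$ $\db$-steps, replacing $\var_n\Id(\var_n\Id)$ by $\sctx^\bullet_n\ctxholep\Id$. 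Two further $\db$-steps follow: one reduces the application of the just-produced $\sctx^\bullet_n\ctxholep\Id$ to $\vartwo_n$, and---after $\tolsneed$-steps that look up $\vartwo_n$ through the outer $\esub{\vartwo_n}\Id$ and then resolve the fresh ES---one reduces the resulting value-under-substitutions applied to $(\var_{n+1}\Id)$. The right branch proceeds symmetrically: a $\tolsneed$-lookup of $\var_{n+1}$ plus one $\db$-step expose a second $(\var_n\Id(\var_n\Id)\vartwo_n)\esub{\vartwo_n}\Id$ inside the ES just generated, a second IH invocation consumes $T(n)$ more $\db$-steps, and one final $\db$-step applies the result to $\vartwo_n$. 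Summing, $T(n+1) = 1 + T(n) + 2 + 1 + T(n) + 1 = 2T(n) + 5$, and with $T(0) = 3$ this solves to $T(n) = 8\cdot 2^n - 5$; a closing sequence of $\tolsneed$-steps (leaving $T(n)$ unchanged) consolidates all accumulated ESs into a single $\sctx^\bullet_{n+1}$ around $\Id$.

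Since $\sctx^\bullet_{n+1}$ embeds two independent copies of $\sctx^\bullet_n$---one per IH invocation---one obtains $|\sctx^\bullet_{n+1}| \geq 2|\sctx^\bullet_n|$, and combined with $|\sctx^\bullet_0| > 0$ this yields $|\sctx^\bullet_n| = \Omega(2^n)$. The main obstacle is precise bookkeeping: at each inductive step I must isolate the correct enlarged evaluation context into which the IH plugs and verify it satisfies the \cbneed grammar (in particular, that the enter-ES production $\evctxfp\var\esub\var\evctxtwo$ is not triggered prematurely by free occurrences of $\var_{n+1}$ or $\vartwo_n$ within the extended context), and then audit how $\tolsneed$'s substitution-context commutation propagates each newly produced $\Id$-value through the nested ES chain without perturbing the $\db$-count. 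Additional care is needed on the right branch, where the second IH is applied inside an ES created by the first half of the step, to ensure that the final $\sctx^\bullet_{n+1}$ faithfully records both copies of $\sctx^\bullet_n$ plus the constant-size glue from the five extra $\db$-steps.
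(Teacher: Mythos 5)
Your proposal is correct and follows essentially the same route as the paper's proof: induction on $n$ exploiting the parametrization by an arbitrary evaluation context to invoke the IH twice (once per occurrence of $\var_{n+1}$) inside suitably enlarged contexts, yielding the recurrence $T(n+1)=2T(n)+5$ with $T(0)=3$, and obtaining the $\Omega(2^n)$ size bound from the fact that $\sctx^\bullet_{n+1}$ contains two copies of $\sctx^\bullet_n$. The step bookkeeping ($1+T(n)+3+T(n)+1$ $\db$-steps, interleaved with the appropriate $\tolsneed$-steps) and the shape of $\sctx^\bullet_0$ match the paper exactly.
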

\begin{proof}
\applabel{l:need-family-aux}
  First of all, we recall the definition of $\sctx_n$ from the body of the paper: 
  \begin{itemize}
  \item $\sctxv{0} \defeq \ctxhole\esub{\var_0}\Id$, and
  \item $\sctxv{n+1} \defeq \sctxv{n}\ctxholep{\ctxhole\esub{\var_{n+1}}{\tmthree_{n+1}}}$
with $\tmthree_{n+1} \defeq \la{\vartwo_n}\var_n\Id(\var_n\Id)\vartwo_n$.
\end{itemize}
By induction on n.
  \begin{itemize}
    \item \case{$n = 0$}
     \[\begin{array}{lll}
	\sctxvp{0}{\evctxfp{\var_0\Id(\var_0\Id)}}
	&= &\evctxfp{\var_0\Id(\var_0\Id)}\esub{\var_0}\Id \\
	&\tolsneed &\evctxfp{\Id\Id(\var_0\Id)}\esub{\var_0}\Id \\
	&\red\todb &\evctxfp{\avar\esub\avar\Id(\var_0\Id)}\esub{\var_0}\Id \\
	&\tolsneed &\evctxfp{\Id\esub\avar\Id(\var_0\Id)}\esub{\var_0}\Id \\
	&\red\todb &\evctxfp{\avartwo\esub\avartwo{\var_0\Id}\esub\avar\Id}\esub{\var_0}\Id \\
	&\tolsneed &\evctxfp{\avartwo\esub\avartwo{\Id\Id}\esub\avar\Id}\esub{\var_0}\Id \\
	&\red\todb &\evctxfp{
	    \avartwo\esub\avartwo{\avarthree\esub\avarthree\Id}
	\esub\avar\Id}\esub{\var_0}\Id \\
	&\tolsneed
	\tolsneed &\evctxfp{
	    \Id\esub\avartwo\Id\esub\avarthree\Id
	\esub\avar\Id}\esub{\var_0}\Id
      \end{array}\]
      Then, setting $\sctx^\bullet_0 \defeq \ctxhole
      \esub\avartwo\Id \esub\avarthree\Id \esub\avar\Id$,
      we obtain that $\sctxvp{0}{\evctxfp{\var_0\Id(\var_0\Id)}}
      \toneed^* \sctxvp{0}{\evctxfp{\sctx^\bullet_0\ctxholep\Id}}$ using $3$ $\todb$-steps.

    \item \case{$n+1$}
      \[ \begin{array}{lll}
	\sctxvp{n+1}{\evctxfp{\var_{n+1}\Id(\var_{n+1}\Id)}}
	&=& \sctxvp{n}{\evctxfp{\var_{n+1}\Id(\var_{n+1}\Id)}
	\esub{\var_{n+1}}{\tmthree_{n+1}}}
	\\
	&\tolsneed& \sctxvp{n}{\evctxfp
	  {(\la{\vartwo_n}\var_n\Id(\var_n\Id)\vartwo_n)\Id(\var_{n+1}\Id)}
	\esub{\var_{n+1}}{\tmthree_{n+1}}}
	\\
	&\red\todb& \sctxvp{n}{\evctxfp
	  {(\var_n\Id(\var_n\Id)\vartwo_n)\esub{\vartwo_n}\Id(\var_{n+1}\Id)}
	\esub{\var_{n+1}}{\tmthree_{n+1}}}
      \end{array} \]
      We have a context
      $\evctxtwo \defeq
      \evctxfp{(\ctxhole \vartwo_n) \esub{\vartwo_n}\Id (\var_{n+1}\Id)}
      \esub{\var_{n+1}}{\tmthree_{n+1}}$.
      By \ih, there is $\sctx^\bullet_n$ such that
      $\sctxvp{n}{\evctxtwofp{\var_n\Id(\var_n\Id)}}
      \toneed^* \sctxvp{n}{\evctxtwofp{\sctx^\bullet_n\ctxholep\Id}}$
      by doing $8 \times 2^n - 5$ $\db$-steps.
      The evaluation continues as follows:
      \[ \begin{array}{lll}
	\sctxvp{n}{\evctxtwofp{\sctx^\bullet_n\ctxholep\Id}}
	&=& \sctxvp{n}{\evctxfp{
	  (\sctx^\bullet_n\ctxholep\Id \vartwo_n) \esub{\vartwo_n}\Id (\var_{n+1}\Id)}
	  \esub{\var_{n+1}}{\tmthree_{n+1}}
	} \\
	&\red\todb& \sctxvp{n}{\evctxfp{\sctx^\bullet_n\ctxholep{
	  \avar\esub\avar{\vartwo_n}} \esub{\vartwo_n}\Id (\var_{n+1}\Id)}
	  \esub{\var_{n+1}}{\tmthree_{n+1}}
	} \\
	&\tolsneed\tolsneed&
	\sctxvp{n}{\evctxfp{\sctx^\bullet_n\ctxholep{
	  \Id\esub\avar\Id \esub{\vartwo_n}\Id} (\var_{n+1}\Id)}
	  \esub{\var_{n+1}}{\tmthree_{n+1}}
	} \\
	&\red\todb& \sctxvp{n}{\evctxfp{\sctx^\bullet_n\ctxholep{
	      \avartwo\esub\avartwo{\var_{n+1}\Id}
	  \esub\avar\Id \esub{\vartwo_n}\Id}}
	  \esub{\var_{n+1}}{\tmthree_{n+1}}
	} \\
	&\tolsneed& \sctxvp{n}{\evctxfp{\sctx^\bullet_n\ctxholep{
	      \avartwo\esub\avartwo{
	      (\la{\vartwo_n}\var_n\Id(\var_n\Id)\vartwo_n)\Id}
	      \esub\avar\Id \esub{\vartwo_n}\Id
	  }}
	  \esub{\var_{n+1}}{\tmthree_{n+1}}
	} \\
	&\red\todb& \sctxvp{n}{\evctxfp{\sctx^\bullet_n\ctxholep{
	      \avartwo\esub\avartwo{
	      (\var_n\Id(\var_n\Id)\vartwo_n)\esub{\vartwo_n}\Id}
	  \esub\avar\Id \esub{\vartwo_n}\Id}}
	\esub{\var_{n+1}}{\tmthree_{n+1}} }
      \end{array} \]
      We have a context $\evctxthree =
      \evctxfp{\sctx^\bullet_n\ctxholep{
	  \avartwo\esub\avartwo{
	  (\ctxhole\vartwo_n)\esub{\vartwo_n}\Id}
      \esub\avar\Id \esub{\vartwo_n}\Id}}
      \esub{\var_{n+1}}{\tmthree_{n+1}}$.
      By \ih, there is $\sctx^\bullet_n$ such that
      $\sctxvp{n}{\evctxthreefp{\var_{n}\Id(\var_{n}\Id)}}
      \toneed^* \sctxvp{n}{\evctxthreefp{\sctx^\bullet_n\ctxholep\Id}}$
      by doing $8 \times 2^n - 5$ $\db$-steps.
      The evaluation continues as follows:
      \[ \begin{array}{lll}
	\sctxvp{n}{\evctxthreefp{\sctx^\bullet_n\ctxholep\Id}}
	&=& \sctxvp{n}{\evctxfp{\sctx^\bullet_n\ctxholep{
	      \avartwo\esub\avartwo{
	      (\sctx^\bullet_n\ctxholep\Id\vartwo_n)\esub{\vartwo_n}\Id}
	\esub\avar\Id \esub{\vartwo_n}\Id}}
	\esub{\var_{n+1}}{\tmthree_{n+1}} } \\
	&\red\todb& \sctxvp{n}{\evctxfp{\sctx^\bullet_n\ctxholep{
	      \avartwo\esub\avartwo{
	      \sctx^\bullet_n\ctxholep{\avarthree\esub\avarthree{\vartwo_n}}
	    \esub{\vartwo_n}\Id}
	\esub\avar\Id \esub{\vartwo_n}\Id}}
	\esub{\var_{n+1}}{\tmthree_{n+1}} } \\
	&\tolsneed\tolsneed& \sctxvp{n}{\evctxfp{\sctx^\bullet_n\ctxholep{
	      \avartwo\esub\avartwo{
	      \sctx^\bullet_n\ctxholep{\Id\esub\avarthree\Id}
	    \esub{\vartwo_n}\Id}
	\esub\avar\Id \esub{\vartwo_n}\Id}}
	\esub{\var_{n+1}}{\tmthree_{n+1}} } \\
	&\tolsneed& \sctxvp{n}{\evctxfp{\sctx^\bullet_n\ctxholep{
	      \sctx^\bullet_n\ctxholep{\Id\esub\avartwo\Id
	      \esub\avarthree\Id}
	      \esub{\vartwo_n}\Id
	\esub\avar\Id \esub{\vartwo_n}\Id}}
	\esub{\var_{n+1}}{\tmthree_{n+1}} }
      \end{array} \]
      Then, setting $\sctx^\bullet_{n+1} \defeq
      \sctx^\bullet_n\ctxholep{\sctx^\bullet_n\ctxholep{\ctxhole\esub\avartwo\Id
	\esub\avarthree\Id} \esub{\vartwo_n}\Id
      \esub\avar\Id \esub{\vartwo_n}\Id}$,
      we obtain that
      $\sctxvp{0}{\evctxfp{\var_0\Id(\var_0\Id)}}
      \toneed^* \sctxvp{n+1}{\evctxfp{\sctxthree^\bullet_n\ctxholep\Id}}$
      using $1 + (8 \times 2^n - 5) + 3 + (8 \times 2^n - 5) + 1
      = 8 \times 2^{n+1} - 5$ $\db$-steps.
  \end{itemize}
  The fact that $\size{\sctx_n^{\bullet}} = \Omega(2^n)$ is evident from the definition of $\sctx_n^{\bullet}$, which for $n+1$ uses twice the $n$ case.\qedhere
\end{proof}

\gettoappendix{l:need-family-aux2}
\begin{proof}
\applabel{l:need-family-aux2}
First of all, we recall the definition of $\sctx_n$  from the body of the paper: 
  \begin{itemize}
  \item $\sctxv{0} \defeq \ctxhole\esub{\var_0}\Id$, and
  \item $\sctxv{n+1} \defeq \sctxv{n}\ctxholep{\ctxhole\esub{\var_{n+1}}{\tmthree_{n+1}}}$
with $\tmthree_{n+1} \defeq \la{\vartwo_n}\var_n\Id(\var_n\Id)\vartwo_n$.
\end{itemize}
  By induction on n.
  \begin{itemize}
    \item \case{$n = 0$}
      \[ \begin{array}{lll}
	\evctxfp{\var_0\Id(\var_0\Id)} \esub{\var_0}{\tmtwo_0}
	&=& \evctxfp{\var_0\Id(\var_0\Id)} \esub{\var_0}{\Id} \\
	&\tolsneed& \evctxfp{\Id\Id(\var_0\Id)} \esub{\var_0}{\Id} \\
	&\red\todb& \evctxfp{\avar\esub\avar\Id(\var_0\Id)}
	\esub{\var_0}{\Id} \\
	&\tolsneed& \evctxfp{
	\Id\esub\avar\Id(\var_0\Id)} \esub{\var_0}{\Id} \\
	&\red\todb& \evctxfp{
	  \avartwo\esub\avartwo{\var_0\Id}\esub\avar\Id}
	\esub{\var_0}{\Id} \\
	&\tolsneed& \evctxfp{
	  \avartwo\esub\avartwo{\Id\Id}\esub\avar\Id}
	\esub{\var_0}{\Id} \\
	&\red\todb& \evctxfp{
	      \avartwo\esub\avartwo{\avarthree\esub\avarthree\Id}
	  \esub\avar\Id}
	\esub{\var_0}{\Id} \\
	&\tolsneed\tolsneed& \evctxfp{
	      \Id\esub\avartwo\Id\esub\avarthree\Id
	  \esub\avar\Id}
	\esub{\var_0}{\Id}
      \end{array} \]
      Then, setting $\sctx^{\ndsym}_0 \defeq
      \ctxhole \esub\avartwo\Id\esub\avarthree\Id \esub\avar\Id$,
      we obtain that
      $\evctxfp{\var_0\Id(\var_0\Id)} \esub{\var_0}{\tmtwo_0}
      \toneed^* \sctxvp{0}{\evctxfp{\sctx^{\ndsym}_0\ctxholep\Id}}$
      using $3$ $\db$-steps.

    \item \case{$n+1$}
      \[ \begin{array}{lll}
	\evctxfp{\var_{n+1}\Id(\var_{n+1}\Id)}
	\esub{\var_{n+1}}{\tmtwo_{n+1}}
	&=& \evctxfp{\var_{n+1}\Id(\var_{n+1}\Id)}
	  \esub{\var_{n+1}}{(\la{\var_n}\la{\vartwo_n}{({\var_n} \Id) ({\var_n}
	\Id) {\vartwo_n}})\tmtwo_n} \\
	&\red\todb& \evctxfp{\var_{n+1}\Id(\var_{n+1}\Id)}
	  \esub{\var_{n+1}}{
	    (\la{\vartwo_n}{{\var_n}\Id({\var_n}\Id) 
	{\vartwo_n}})\esub{\var_n}{\tmtwo_n}} \\
	&\tolsneed& \evctxfp{
	      (\la{\vartwo_n}{{\var_n}\Id({\var_n}\Id) {\vartwo_n}})\Id
	  (\var_{n+1}\Id)}
	  \esub{\var_{n+1}}{\la{\vartwo_n}{{\var_n}\Id({\var_n}\Id) {\vartwo_n}}}
	\esub{\var_n}{\tmtwo_n} \\
	&\red\todb& \evctxfp{
	      ({\var_n}\Id({\var_n}\Id) {\vartwo_n})\esub{\vartwo_n}\Id
	  (\var_{n+1}\Id)}
	  \esub{\var_{n+1}}{\la{\vartwo_n}{{\var_n}\Id({\var_n}\Id) {\vartwo_n}}}
	\esub{\var_n}{\tmtwo_n}
      \end{array} \]
      Consider the context $\evctxtwo \defeq
      \evctxfp{(\ctxhole {\vartwo_n})\esub{\vartwo_n}\Id (\var_{n+1}\Id)}
      \esub{\var_{n+1}}{\la{\vartwo_n}{{\var_n}\Id({\var_n}\Id) {\vartwo_n}}}$.
      By \ih, there is $\sctx^{\ndsym}_n$ such that
      $\evctxtwofp{{\var_n}\Id({\var_n}\Id)} \esub{\var_n}{\tmtwo_n}
      \toneed^* \sctxvp{n}{\evctxtwofp{\sctx^{\ndsym}_n\ctxholep\Id}}$
      by doing $8 \times 2^n - 5$ $\db$-steps.
      The evaluation continues as follows:
      \[ \begin{array}{lll}
	\sctxvp{n}{\evctxtwofp{\sctx^{\ndsym}_n\ctxholep\Id}}
	&=& \sctxvp{n}{
	  \evctxfp{(\sctx^{\ndsym}_n\ctxholep\Id {\vartwo_n})\esub{\vartwo_n}\Id (\var_{n+1}\Id)}
	\esub{\var_{n+1}}{\la{\vartwo_n}{{\var_n}\Id({\var_n}\Id) {\vartwo_n}}}} \\
	&=& \sctxvp{n}{
	  \evctxfp{(\sctx^{\ndsym}_n\ctxholep\Id {\vartwo_n})\esub{\vartwo_n}\Id (\var_{n+1}\Id)}
	\esub{\var_{n+1}}{\tmthree_{n+1}} } \\
	&\red\todb& \sctxvp{n}{
	  \evctxfp{\sctx^{\ndsym}_n\ctxholep{\avar\esub\avar{\vartwo_n}}
	\esub{\vartwo_n}\Id (\var_{n+1}\Id)}
	\esub{\var_{n+1}}{\tmthree_{n+1}} } \\
	&\tolsneed\tolsneed& \sctxvp{n}{
	  \evctxfp{\sctx^{\ndsym}_n\ctxholep{\Id\esub\avar\Id}
	\esub{\vartwo_n}\Id (\var_{n+1}\Id)}
	\esub{\var_{n+1}}{\tmthree_{n+1}} } \\
	&\red\todb& \sctxvp{n}{
	  \evctxfp{\sctx^{\ndsym}_n\ctxholep{\avartwo\esub\avartwo{\var_{n+1}\Id}\esub\avar\Id}
	\esub{\vartwo_n}\Id}
	\esub{\var_{n+1}}{\tmthree_{n+1}} } \\
	&\tolsneed& \sctxvp{n}{
	  \evctxfp{\sctx^{\ndsym}_n\ctxholep{\avartwo\esub\avartwo
	    {(\la{\vartwo_n'}{{\var_n}\Id({\var_n}\Id) {\vartwo_n'}})\Id}
	  \esub\avar\Id}
	\esub{\vartwo_n}\Id}
	\esub{\var_{n+1}}{\tmthree_{n+1}} } \\
	&\red\todb& \sctxvp{n}{
	  \evctxfp{\sctx^{\ndsym}_n\ctxholep{\avartwo\esub\avartwo
	    {({\var_n}\Id({\var_n}\Id)\vartwo_n')\esub{\vartwo_n'}\Id}
	  \esub\avar\Id}
	\esub{\vartwo_n}\Id}
	\esub{\var_{n+1}}{\tmthree_{n+1}}}
      \end{array} \]
      Now, consider the context $\evctxthree \defeq
      \evctxfp{\sctx^{\ndsym}_n\ctxholep{\avartwo\esub\avartwo
	  {(\ctxhole\vartwo_n')\esub{\vartwo_n'}\Id}
	\esub\avar\Id}
      \esub{\vartwo_n}\Id}
      \esub{\var_{n+1}}{\tmthree_{n+1}}$.
      By \reflemma{need-family-aux}, there is $\sctx^\bullet_n$ such that
      $\sctxvp{n}{\evctxthreefp{\var_n\Id(\var_n\Id)}}
      \toneed^* \sctxvp{n}{\evctxthreefp{\sctx^\bullet_n\ctxholep\Id}}$
      by doing $8 \times 2^n -5$ $\db$-steps.
      The evaluation continues as follows:
      \[ \begin{array}{lll}
	\sctxvp{n}{\evctxthreefp{\sctx^\bullet_n\ctxholep\Id}}
	&=& \sctxvp{n}{
	  \evctxfp{\sctx^{\ndsym}_n\ctxholep{\avartwo\esub\avartwo
	      {(\sctx^\bullet_n\ctxholep\Id\vartwo_n')\esub{\vartwo_n'}\Id}
	    \esub\avar\Id}
	  \esub{\vartwo_n}\Id}
	\esub{\var_{n+1}}{\tmthree_{n+1}}} \\
	&=& \sctxvp{n+1}{
	  \evctxfp{\sctx^{\ndsym}_n\ctxholep{\avartwo\esub\avartwo
	      {(\sctx^\bullet_n\ctxholep\Id\vartwo_n')\esub{\vartwo_n'}\Id}
	    \esub\avar\Id}
	  \esub{\vartwo_n}\Id}
	} \\
	&\red\todb& \sctxvp{n+1}{
	  \evctxfp{\sctx^{\ndsym}_n\ctxholep{\avartwo\esub\avartwo
	      {\sctx^\bullet_n\ctxholep{\avarthree\esub\avarthree{\vartwo_n'}}\esub{\vartwo_n'}\Id}
	    \esub\avar\Id}
	  \esub{\vartwo_n}\Id}
	} \\
	&\tolsneed\tolsneed& \sctxvp{n+1}{
	  \evctxfp{\sctx^{\ndsym}_n\ctxholep{\avartwo\esub\avartwo
	      {\sctx^\bullet_n\ctxholep{\Id\esub\avarthree\Id}\esub{\vartwo_n'}\Id}
	    \esub\avar\Id}
	  \esub{\vartwo_n}\Id}
	} \\
	&\tolsneed& \sctxvp{n+1}{
	  \evctxfp{\sctx^{\ndsym}_n\ctxholep{\sctx^\bullet_n\ctxholep{
		\Id \esub\avartwo\Id
	      \esub\avarthree\Id}
	    \esub{\vartwo_n'}\Id
	  \esub\avar\Id}
	  \esub{\vartwo_n}\Id}
	}
	\end{array} \]
      Then, setting
      $\sctx^{\ndsym}_{n+1} \defeq \sctx^{\ndsym}_n\ctxholep{\sctx^\bullet_n\ctxholep {\ctxhole \esub\avartwo\Id \esub\avarthree\Id}
      \esub{\vartwo_n'}\Id \esub\avar\Id}
      \esub{\vartwo_n}\Id$,
      we obtain that:
      \[\begin{array}{ccc}
      \evctxfp{\var_{n+1}\Id(\var_{n+1}\Id)} \esub{\var_{n+1}}{\tmtwo_{n+1}}
      &\toneed^* &\sctxvp{n+1}{\evctxfp{\sctx^{\ndsym}_n\ctxholep\Id}}
      \end{array}\]
      using
      $1 + (8 \times 2^n - 5) + 3 + (8 \times 2^n - 5) + 1
      = 8 \times 2^{n+1} - 10 + 5 +n + 1
      = 8 \times 2^{n+1} + (n + 1) - 5$ $\db$-steps.
  \end{itemize}
    The fact that $\size{\sctx_n^{\ndsym}} = \Omega(2^n)$ follows from the fact that $\size{\sctx_n^{\bullet}} = \Omega(2^n)$ by \reflemma{need-family-aux}.\qedhere
\end{proof}
\section{Proofs about Marked Terms}

\gettoappendix{l:unmarked-downward-closed}
\begin{proof}
\applabel{l:unmarked-downward-closed}
Straightforward induction on $\mctx$.
\end{proof}

\gettoappendix{prop:marked-skeleton-soundness}

\begin{proof}
\applabel{prop:marked-skeleton-soundness}
\hfill
\begin{enumerate}
\item Straightforward induction on $\tm$. We give the details nonetheless. Cases:
\begin{itemize}
\item \case{$\fv{\tm} \cap \skelvars = \emptyset$ and $\tm$ is not a variable}
Then by definition, $\skeldec\tm\skelvars = (\varthree,\ctxhole\esub\varthree\tm)$.
While, $\disc\tm\skelvars = \tm$ and
$\splitfun{\disc\tm\skelvars} = (\varthree,\ctxhole\esub\varthree\tm)$.
\item \case{$\tm = \var$}
Then $\disc\tm\skelvars = \var$ or $\disc\tm\skelvars = \var^\taken$. In both cases, by definition, $\splitfun{\disc\tm\skelvars} = (\var,\ctxhole)$. By definition, $\skeldec\var\skelvars = (\var,\ctxhole)$.
\item \case{$\tm = \la\var\tmtwo$ and $\fv\tm \cap \skelvars \neq \emptyset$}
By definition $\disc\tm\skelvars = \lat\var{\disc\tmtwo{\skelvars \cup \{\var\}}}$ and $\splitfun{\disc\tm\skelvars} = (\la\var\tmthree,\sctx)$ where $\splitfun{\disc\tmtwo{\skelvars \cup \{\var\}}} = (\tmthree,\sctx)$.
Moreover, by definition $\skeldec\tm\skelvars = (\la\var\tmfour,\sctxtwo)$ where $\skeldec\tmtwo{\skelvars \cup \{\var\}}=(\tmfour,\sctxtwo)$.
By \ih, $(\tmthree,\sctx) =
(\tmfour,\sctxtwo)$, so that $(\la\var\tmthree,\sctx)= (\la\var\tmfour,\sctxtwo)$.
\item \case{$\tm = \tmtwo\ap\tmthree$ and $\fv\tm \cap \skelvars \neq \emptyset$}
By definition:
\[\splitfun{\disc\tm\skelvars} = \splitfun{\disc\tmtwo\skelvars\appt\disc\tmthree\skelvars} = (\tmtwo'\ap\tmthree',\sctxptwo\sctx)\] where
$\splitfun{\disc\tmtwo\skelvars} = (\tmtwo',\sctx)$ and $\splitfun{\disc\tmthree\skelvars} = (\tmthree', \sctxtwo)$.
By \ih, we have $\skeldec\tmtwo\skelvars= (\tmtwo',\sctx)$ and $\skeldec\tmthree\skelvars = (\tmthree', \sctxtwo)$.
Therefore, $\skeldec\tm\skelvars = (\tmtwo'\ap\tmthree',\sctxptwo\sctx) = \splitfun{\disc\tm\skelvars}$.
\end{itemize}

\item By definition, $\skeldecabs{\la\var\tm} = (\la\var\tmtwo, \sctx)$ where $\skeldec\tm{\set\var}=(\tmtwo,\sctx)$. By Point 1, we have $\splitfun{\disc\tm{\set\var}}=(\tmtwo,\sctx)$. Thus, $\splitfun{\discabs{\la\var\tm}} = (\la\var\tmtwo,\sctx)=\skeldecabs{\la\var\tm}$.\qedhere
\end{enumerate}
\end{proof}

\subparagraph{Soundness of the Algorithm.} First of all, we need an auxiliary lemma.

\begin{lemma}
\label{l:disc-useless-var}
Let $\tm$ be a term, $\skelvars$ a set of variables, and $\vartwo \notin \fv{\tm}$. Then $\disc\tm\skelvars = \disc\tm{\skelvars \setminus \{\vartwo\}}$.
\end{lemma}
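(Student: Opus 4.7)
The plan is to proceed by structural induction on $\tm$, using as the key observation that since $\vartwo \notin \fv{\tm}$, we have $\fv{\tm} \cap \skelvars = \fv{\tm} \cap (\skelvars \setminus \{\vartwo\})$. Consequently, the top-level guard of $\disc{\cdot}{\cdot}$ (namely, whether $\fv{\tm} \cap \skelvars = \emptyset$) returns the same verdict for both $\skelvars$ and $\skelvars \setminus \{\vartwo\}$, so both sides either fall into the ``free sub-term'' clause (and are equal to $\tm$), or into the case analysis on the shape of $\tm$, where the induction hypothesis can be applied uniformly.

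For the inductive cases, the variable case $\tm = \var$ is immediate because $\vartwo \neq \var$ (since $\vartwo \notin \fv{\var}$), so membership of $\var$ in $\skelvars$ and in $\skelvars \setminus \{\vartwo\}$ coincides, determining the marked result $\var^{\taken}$ identically. The application case $\tm = \tm_1 \appt \tm_2$ is routine: from $\vartwo \notin \fv{\tm_1 \tm_2}$ one gets $\vartwo \notin \fv{\tm_i}$ for $i=1,2$, and the induction hypothesis applied to each subterm yields the equality.

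The only case that requires a minor twist is the abstraction case $\tm = \la\var\tm_1$, which I would handle by splitting on whether $\vartwo = \var$. If $\vartwo = \var$, then $(\skelvars \setminus \{\vartwo\}) \cup \{\var\} = \skelvars \cup \{\var\}$, so the two recursive calls $\disc{\tm_1}{\skelvars \cup \{\var\}}$ and $\disc{\tm_1}{(\skelvars \setminus \{\vartwo\}) \cup \{\var\}}$ are syntactically identical. If instead $\vartwo \neq \var$, then $\vartwo \notin \fv{\tm_1}$ (because $\fv{\tm_1} \subseteq \fv{\la\var\tm_1} \cup \{\var\}$ and $\vartwo$ lies in neither), so the induction hypothesis applies to $\tm_1$ with the enlarged context $\skelvars \cup \{\var\}$, and the set equality $(\skelvars \cup \{\var\}) \setminus \{\vartwo\} = (\skelvars \setminus \{\vartwo\}) \cup \{\var\}$ closes the case. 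No serious obstacle is expected here: the lemma is essentially a sanity check that $\disc{\cdot}{\cdot}$ only depends on those variables of $\skelvars$ that actually appear free in the sub-term being traversed.
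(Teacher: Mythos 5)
Your proof is correct and follows essentially the same route as the paper's: structural induction on $\tm$, with the empty-intersection clause, variable, abstraction, and application cases handled just as in the paper. The only cosmetic difference is in the abstraction case, where the paper assumes $\var \neq \vartwo$ without loss of generality (by $\alpha$-renaming) whereas you treat $\vartwo = \var$ explicitly via the set identity $(\skelvars \setminus \{\var\}) \cup \{\var\} = \skelvars \cup \{\var\}$; both are fine.
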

\begin{proof}
By induction on $\tm$.
\begin{itemize}
\item \case{$\tm$ is such that $\fv{\tm} \cap \skelvars = \emptyset$} Then, we also have $\fv{\tm} \cap (\skelvars \setminus \{\vartwo\}) = \emptyset$, so by definition $\disc\tm\skelvars = \tm = \disc\tm{\skelvars \setminus \{\vartwo\}}$.
\item \case{$\tm = \var$ and $\var \in \skelvars$} By assumption $\var \neq \vartwo$, hence by definition:
$\disc\tm\skelvars = \var^\taken = \disc\tm{\skelvars \setminus \{\vartwo\}}$.
\item \case{$\tm = \la\var\tmtwo$ and $\fv{\tm} \cap \skelvars \neq \emptyset$} Without loss of generality we can assume $\var \neq \vartwo$. Thus, since by assumption $\vartwo \notin \fv{\tm}$, we also have that $\vartwo \notin \fv{\tmtwo}$. Hence by \ih we have:
\begin{align*}
\disc\tm\skelvars &= \lat\var\disc\tmtwo{\skelvars \cup \{\var\}} \\
&=\lat\var\disc\tmtwo{(\skelvars \cup \{\var\}) \setminus \{\vartwo\}} \\
&=\lat\var\disc\tmtwo{(\skelvars \setminus \{\vartwo\}) \cup \{\var\}} \\
&=\disc{\la\var\tmtwo}{\skelvars \setminus \{\vartwo\}} \\
&=\disc{\tm}{\skelvars \setminus \{\vartwo\}}.
\end{align*}
\item \case{$\tm = \tmtwo\ap\tmthree$ and $\fv{\tm} \cap \skelvars \neq \emptyset$}
By assumption $\vartwo \notin \fv{\tm}$, therefore, $\vartwo \notin \fv{\tmtwo}$ and $\vartwo \notin \fv{\tmthree}$. Furthermore, $\fv{\tm} \cap (\skelvars \setminus \{\vartwo\}) \neq \emptyset$, so using the \ih:
\begin{align*}
\disc\tm\skelvars &= \disc\tmtwo\skelvars\appt\disc\tmthree\skelvars \\
&=\disc\tmtwo{\skelvars \setminus \{\vartwo\}}\appt\disc\tmthree\skelvars \\
&=\disc\tmtwo{\skelvars \setminus \{\vartwo\}}\appt\disc\tmthree{\skelvars \setminus \{\vartwo\}} \\
&=\disc\tm{\skelvars \setminus \{\vartwo\}}.\qedhere
\end{align*}
\end{itemize}
\end{proof}

\gettoappendix{l:disc-comp-step}
\begin{proof}
\applabel{l:disc-comp-step} By induction on the structure of $\tm$.
\begin{itemize}
\item \case{$\tm = \var$} By definition, $\mtm=\var$ and $\mtmtwo = \var^\taken$. Note that $\mtm\isub\var{\skelu{\var^\taken}} = \var\isub\var{\skelu{\var^\taken}} = \skelu{\var^\taken}$, as required.
Moreover, $k = 0$ and $\sizet{\mtmtwo} - \sizet{\mtm}-1=1+0-1=0$, as required.
\item \case{$\tm = \tm_l\ap\tm_r$ and $\fv{\tm} \cap \skelvars = \emptyset$} Now, observe that we also have that $\fv{\tm_l} \cap \skelvars = \emptyset$ and $\fv{\tm_r} \cap \skelvars = \emptyset$, therefore $\disc{\tm_l}\skelvars = \tm_l$ and $\disc{\tm_r}\skelvars=\tm_r$. Hence we can write $\mtm = \tm_l \tm_r$.
Concerning $\mtmtwo$, since $\var \in \fv{\tm}$, we have:
\[ \mtmtwo = \disc{\tm_l}{\skelvars\cup\{\var\}}\appt\disc{\tm_r}{\skelvars\cup\{\var\}}.\]
Let $\mtmtwo_l = \disc{\tm_l}{\skelvars \cup \{\var\}}$ and $\mtmtwo_r = \disc{\tm_r}{\skelvars \cup \{\var\}}$. We distinguish three cases:
\begin{itemize}
\item \case{$\var \in \fv{\tm_l} \cap \fv{\tm_r}$} We have that:
\[\begin{array}{cllllll}
\mtm\isub\var{\skelu{\var^\taken}} &= &
  \tm_l\isub\var{\skelu{\var^\taken}} \tm_r\isub\var{\skelu{\var^\taken}} 
  \\
  \mbox{(by \ih)} &\tomark^{k_1} &
  \skelu{\mtmtwo_l} \tm_r\isub\var{\skelu{\var^\taken}} 
  \\
  \mbox{(by \ih)} &\tomark^{k_2} &
  \skelu{\mtmtwo_l} \skelu{\mtmtwo_r} 
  \\
  &\tomarkblone &
  \skelu{(\mtmtwo_l \appt\skelu{\mtmtwo_r})} 
  \\
  &\tomarkwhtwo &
  \skelu{(\mtmtwo_l\appt\mtmtwo_r)} 
  &=& \skelu{\mtmtwo}
\end{array}\]
where $k_1 =_{\ih} \sizet{\mtmtwo_l} - \sizet{\tm_l} -1= \sizet{\mtmtwo_l} -1$ and $k_2 =_{\ih} \sizet{\mtmtwo_r} - \sizet{\tm_r}-1 = \sizet{\mtmtwo_r}-1$ because $\tm$ is unmarked so that $\sizet{\tmtwo_l} = \sizet{\tmtwo_r} = 0$.
Moreover:
\[\begin{array}{lllllll}
k & = &  k_1 + k_2 +2
\\
& =_{\ih} & \sizet{\mtmtwo_l} + \sizet{\mtmtwo_r} -2 +2
\\
& = & \sizet{\mtmtwo_l} + \sizet{\mtmtwo_r}
\\
& = & \sizet{\mtmtwo} -1

& = & \sizet{\mtmtwo} - \sizet{\mtm} -1
\end{array}\]
Then the statement holds, since $\mtm$ starts unmarked.

\item \case{$\var \in \fv{\tm_l}$ and $\var \notin \fv{\tm_r}$}
Observe that $\tm_r\isub\var{\skelu{\var^\taken}} = \tm_r$, so we have that:
\[\begin{array}{cllllll}
\mtm\isub\var{\skelu{\var^\taken}} &= &
  \tm_l\isub\var{\skelu{\var^\taken}} \tm_r\isub\var{\skelu{\var^\taken}} 
  \\
  &= &
  \tm_l\isub\var{\skelu{\var^\taken}} \tm_r
  \\
  \mbox{(by \ih)} &\tomark^{k_1} & 
  \skelu{\mtmtwo_l}  \tm_r 
  \\
  &\tomark&
   \skelu{(\mtmtwo_l \appt \tm_r)} 
   \\
  &=&
  \skelu{(\mtmtwo_l\appt\disc{\tm_r}\skelvars)} 
  \\
  &=_{\reflemmaeq{disc-useless-var}}&
  \skelu{(\mtmtwo_l\appt\disc{\tm_r}{\skelvars \cup \{\var\}})} 
  \\
  &=&\skelu{(\mtmtwo_l\appt\mtmtwo_r)} 
   &= &
   \skelu{\mtmtwo}
\end{array}\]
where $k_1 =_{\ih} \sizet{\mtmtwo_l} - \sizet{\tm_l} -1 = \sizet{\mtmtwo_l} -1$. Moreover:
\[\begin{array}{lllllll}
k & = & k_1 + 1
\\
& =_{\ih} & \sizet{\mtmtwo_l} -1 +1
\\
& = & \sizet{\mtmtwo_l}
\\
& = & \sizet{\mtmtwo_l} + \sizet{\mtmtwo_r}
\\
& = & \sizet{\mtmtwo} -1

& = & \sizet{\mtmtwo} - \sizet{\mtm} - 1
\end{array}\]
Then the statement holds, since $\mtm$ starts unmarked.

\item \case{$\var \notin \fv{\tm_l}$ and $\var \in \fv{\tm_r}$} As in the previous case, simply switching the roles of $\tm_l$ and $\tm_r$.
\end{itemize}

\item \case{$\tm = {\tm_l}\ap{\tm_r}$ and $\fv{\tm} \cap \skelvars \neq \emptyset$} By definition we have that $\mtm = \disc{\tm_l}\skelvars \appt \disc{\tm_r}\skelvars$, that is, $\mtm$ starts with $\taken$. Since $\fv{\tm} \cap \skelvars \cup \{\var\} \neq \emptyset$, we also have that $\mtmtwo = \disc{\tm_l}{\skelvars\cup\{\var\}}\appt\disc{\tm_r}{\skelvars\cup\{\var\}}$. Now, let $\mtm_l = \disc{\tm_l}\skelvars$, $\mtm_r = \disc{\tm_r}\skelvars$, $\mtmtwo_l = \disc{\tm_l}{\skelvars\cup\{\var\}}$, $\mtmtwo_r = \disc{\tm_r}{\skelvars\cup\{\var\}}$. There are three cases.
\begin{itemize}
\item \case{both $\fv{\tm_l}$ and $\fv{\tm_r}$ have non-empty intersection with $\skelvars$} Then both $\mtm_l$ and $\mtm_r$ start with $\taken$. Sub-cases:
\begin{itemize}
\item $\var \in \fv{\tm_l} \cap \fv{\tm_r}$. 
Then:
\[\begin{array}{cllllll}
\mtm\isub\var{\skelu{\var^\taken}}
  &= & 
  \mtm_l\isub\var{\skelu{\var^\taken}}\appt\mtm_r\isub\var{\skelu{\var^\taken}} 
  \\
   \mbox{(by \ih)}  &\tomark^{k_1} &
\mtmtwo_l\appt \mtm_r\isub\var{\skelu{\var^\taken}} 
  \\
    \mbox{(by \ih)}  &\tomark^{k_2} &
  \mtmtwo_l \appt \mtmtwo_r
  &=&
  \mtmtwo
\end{array}\]
where $k_1 =_{\ih} \sizet{\mtmtwo_l} - \sizet{\mtm_l}$ and $k_2 =_{\ih} \sizet{\mtmtwo_r} - \sizet{\mtm_r}$. 
Moreover:
\[\begin{array}{lllllll}
k & = & k_1 + k_2
\\
& =_{\ih} & 
\sizet{\mtmtwo_l} - \sizet{\mtm_l} + \sizet{\mtmtwo_r} - \sizet{\mtm_r}
\\
& = & 
\sizet{\mtmtwo_l} +\sizet{\mtmtwo_r} - \sizet{\mtm_l} - \sizet{\mtm_r}
\\
& = & 
(\sizet{\mtmtwo_l} +\sizet{\mtmtwo_r}+1) - (\sizet{\mtm_l} + \sizet{\mtm_r}+1)
\\
& = & 
\sizet{\mtmtwo} - \sizet{\mtm}
\end{array}\]
Then the statement holds, since $\mtm$ starts with $\taken$.

\item \emph{$\var \in \fv{\tm_l}$ and $\var\notin\fv{\tm_r}$}. By \reflemma{disc-useless-var}, $\var \notin \fv{\tm_r}$ implies $\mtmtwo_r=\disc{\tm_r}{\skelvars \cup \{\var\}} = \disc{\tm_r}{\skelvars} = \mtm_r$.
Then:
\[\begin{array}{cllllll}
\mtm\isub\var{\skelu{\var^\taken}}
  &= & 
  \mtm_l\isub\var{\skelu{\var^\taken}}\appt\mtm_r\isub\var{\skelu{\var^\taken}} 
  \\
   \mbox{(by \ih)}  &\tomark^{k} &
\mtmtwo_l\appt \mtm_r\isub\var{\skelu{\var^\taken}} 
  \\
 &= &
  \mtmtwo_l \appt \mtm_r
  \\
 &=_{\reflemmaeq{disc-useless-var}} &
  \mtmtwo_l \appt \mtmtwo_r
  &=&
  \mtmtwo
\end{array}\]
where $k =_{\ih} \sizet{\mtmtwo_l} - \sizet{\mtm_l}$. Clearly, from $\mtmtwo_r= \mtm_r$ it follows $\sizet{\mtmtwo_r}= \sizet{\mtm_r}$
Moreover:
\[\begin{array}{lllllll}
k & =_{\ih} & 
\sizet{\mtmtwo_l} - \sizet{\mtm_l} 
\\
& = & 
\sizet{\mtmtwo_l} +\sizet{\mtmtwo_r} - \sizet{\mtm_l} - \sizet{\mtm_r}
\\
& = & 
(\sizet{\mtmtwo_l} +\sizet{\mtmtwo_r}+1) - (\sizet{\mtm_l} + \sizet{\mtm_r}+1)
\\
& = & 
\sizet{\mtmtwo} - \sizet{\mtm}
\end{array}\]
Then the statement holds, since $\mtm$ starts with $\taken$.

\item \emph{$\var \notin \fv{\tm_l}$ and $\var\in\fv{\tm_r}$}. Specular to the previous one.
\end{itemize}

\item \case{only $\fv{\tm_l}$ has non-empty intersection with $\skelvars$}
Therefore, $\mtm_l$ starts with $\taken$ and $\mtm_r = \tm_r$ starts unmarked. 
\begin{itemize}
\item $\var \in \fv{\tm_l} \cap \fv{\tm_r}$. Then:
\[\begin{array}{cllllll}
\mtm\isub\var{\skelu{\var^\taken}}
  &= &
  \mtm_l\isub\var{\skelu{\var^\taken}}\appt\tm_r\isub\var{\skelu{\var^\taken}} 
  \\
    \mbox{(by \ih)} &\tomark^{k_1} &
  \mtmtwo_l\appt \tm_r\isub\var{\skelu{\var^\taken}} 
  \\
    \mbox{(by \ih)} &\tomark^{k_2} &
    \mtmtwo_l\appt\skelu{\mtmtwo_r} 
    \\
  &\tomarkwhtwo &
  \mtmtwo_l\appt\mtmtwo_r
  &= &
  \mtmtwo
\end{array}\]
where $k_1 =_{\ih} \sizet{\mtmtwo_l} - \sizet{\mtm_l}$ and $k_2 =_{\ih} \sizet{\mtmtwo_r} - \sizet{\tm_r} -1 = \sizet{\mtmtwo_r}-1$.
Moreover:
\[\begin{array}{cllllll}
k & = & k_1 + k_2 +1
\\
& =_{\ih} & 
\sizet{\mtmtwo_l} - \sizet{\mtm_l} + \sizet{\mtmtwo_r} -1 +1
\\
& = & 
\sizet{\mtmtwo_l} - \sizet{\mtm_l} + \sizet{\mtmtwo_r}
\\
& = & 
(\sizet{\mtmtwo_l} + \sizet{\mtmtwo_r} +1) - (\sizet{\mtm_l} +1)
& = & 
\sizet{\mtmtwo}-\sizet{\mtm}
\end{array}\]
Then the statement holds, since $\mtm$ starts with $\taken$.

\item \emph{$\var \in \fv{\tm_l}$ and $\var\notin\fv{\tm_r}$}. By \reflemma{disc-useless-var}, $\var \notin \fv{\tm_r}$ implies $\mtmtwo_r=\disc{\tm_r}{\skelvars \cup \{\var\}} = \disc{\tm_r}{\skelvars} = \tm_r$. Then:
\[\begin{array}{cllllll}
\mtm\isub\var{\skelu{\var^\taken}}
  &= & 
  \mtm_l\isub\var{\skelu{\var^\taken}}\appt\tm_r\isub\var{\skelu{\var^\taken}} 
  \\
   \mbox{(by \ih)}  &\tomark^{k} &
\mtmtwo_l\appt \tm_r\isub\var{\skelu{\var^\taken}} 
  \\
 &= &
  \mtmtwo_l \appt \tm_r
  \\
 &=_{\reflemmaeq{disc-useless-var}} &
  \mtmtwo_l \appt \mtmtwo_r
  &=&
  \mtmtwo
\end{array}\]
where $k =_{\ih} \sizet{\mtmtwo_l} - \sizet{\mtm_l}$.
 Clearly, from $\mtmtwo_r= \tm_r$ it follows $\sizet{\mtmtwo_r}= \sizet{\tm_r}=0$
Moreover:
\[\begin{array}{lllllll}
k & =_{\ih} & 
\sizet{\mtmtwo_l} - \sizet{\mtm_l} 
\\
& = & 
\sizet{\mtmtwo_l} +\sizet{\mtmtwo_r} - \sizet{\mtm_l} - \sizet{\tm_r}
\\
& = & 
(\sizet{\mtmtwo_l} +\sizet{\mtmtwo_r}+1) - (\sizet{\mtm_l} + \sizet{\tm_r}+1)
\\
& = & 
\sizet{\mtmtwo} - \sizet{\mtm}
\end{array}\]
Then the statement holds, since $\mtm$ starts with $\taken$.

\item \emph{$\var \notin \fv{\tm_l}$ and $\var\in\fv{\tm_r}$}. By \reflemma{disc-useless-var}, $\var \notin \fv{\tm_l}$ implies $\mtmtwo_l=\disc{\tm_l}{\skelvars \cup \{\var\}} = \disc{\tm_l}{\skelvars} = \mtm_l$. Then:
\[\begin{array}{cllllll}
\mtm\isub\var{\skelu{\var^\taken}}
  &= & 
  \mtm_l\isub\var{\skelu{\var^\taken}}\appt\tm_r\isub\var{\skelu{\var^\taken}} 
\\
  &= & 
  \mtm_l\appt\tm_r\isub\var{\skelu{\var^\taken}} 
  \\
 &=_{\reflemmaeq{disc-useless-var}} &
  \mtmtwo_l\appt\tm_r\isub\var{\skelu{\var^\taken}} 
    \\
   \mbox{(by \ih)}  &\tomark^{k_r} &
\mtmtwo_l \appt \skelu{\mtmtwo_r}
  \\
    &\tomarkwhtwo &
\mtmtwo_l\appt \mtmtwo_r
  &=&
  \mtmtwo
\end{array}\]
where $k_r =_{\ih} \sizet{\mtmtwo_r} - \sizet{\tm_r} -1$.
 Clearly, from $\mtmtwo_l= \mtm_l$ it follows $\sizet{\mtmtwo_l}= \sizet{\mtm_l}$.
Moreover:
\[\begin{array}{lllllll}
k &= & k_r +1
\\
& =_{\ih} & 
\sizet{\mtmtwo_r} - \sizet{\tm_r} -1 +1
\\
& = & 
\sizet{\mtmtwo_r} - \sizet{\tm_r}
\\
& = & 
\sizet{\mtmtwo_l} +\sizet{\mtmtwo_r} - \sizet{\mtm_l} - \sizet{\tm_r}
\\
& = & 
(\sizet{\mtmtwo_l} +\sizet{\mtmtwo_r}+1) - (\sizet{\mtm_l} + \sizet{\tm_r}+1)
\\
& = & 
\sizet{\mtmtwo} - \sizet{\mtm}
\end{array}\]
Then the statement holds, since $\mtm$ starts with $\taken$.
\end{itemize}

\item \case{only $\fv{\tm_r}$ has non-empty intersection with $\skelvars$} As in the previous case, simply switching the roles of $\tm_l$ and $\tm_r$.
\end{itemize}

\item \case{$\tm = \la\vartwo\tm_b$ and $\fv{\tm} \cap \skelvars = \emptyset$} Without loss of generality we can assume $\vartwo \notin \skelvars \cup \{\var\}$. By definition,
$\mtm = \la\vartwo\tm_b$. Now, since $\fv{\tm} \cap \skelvars = \emptyset$ and $\vartwo \notin \skelvars$, we also have that $\fv{\tm_b} \cap \skelvars = \emptyset$. Therefore, by definition we have that $\tm_b = \disc{\tm_b}\skelvars$ and so we can write
$\mtm = \la\vartwo\disc{\tm_b}\skelvars$. Let $\mtm_b \defeq \disc{\tm_b}\skelvars$. Observe that $\mtm_b$ starts unmarked and that $\var\in\fv\tm$ implies $\var\in\fv{\tm_b}$, so we have:
\[\begin{array}{cllllll}
\mtm\isub\var{\skelu{\var^\taken}}
&= & 
 \la\vartwo(\mtm_b\isub\var{\skelu{\var^\taken}}) 
\\
\mbox{(by \ih)} &\tomark^{k_1} & 
 \la\vartwo(\skelu{\disc{\tm_b}{\skelvars\cup\set\var}}) 
\\
&\tomarkblthree & \skelu{(\lat\vartwo\disc{\tm_b}{\skelvars\cup\set\var}\isub\vartwo{\skelu{\vartwo^\taken}})}
\end{array}\]
where $k_1 =_{\ih} \sizet{\disc{\tm_b}{\skelvars\cup\set\var}} - \sizet{\mtm_b} -1 = \sizet{\disc{\tm_b}{\skelvars\cup\set\var}}  -1$.
Now, we distinguish two cases.
\begin{itemize}
\item \case{$\vartwo \in \fv{{\tm_b}}$} By definition, $\sizet{\disc{\tm_b}{\skelvars \cup \{\var\}}}$ starts with $\taken$. Therefore, applying the \ih again:
\[\begin{array}{cllllll}
\mtm\isub\var{\skelu{\var^\taken}}
&\tomark^{k_1 + 1} &\skelu{(\lat\vartwo\disc{\tm_b}{\skelvars\cup\set\var}\isub\vartwo{\skelu{\vartwo^\taken}})}
\\
\mbox{(by \ih)} &\tomark^{k_2} &
\skelu{(\lat\vartwo\disc{\tm_b}{\skelvars \cup \{\var\} \cup \{\vartwo\})})} 
\\
&= & 
\skelu{\disc{\la\vartwo{\tm_b}}{\skelvars \cup \{\var\}}} 
&= & 
\skelu{\mtmtwo}
\end{array}\]
where $k_2 =_{\ih} \sizet{\disc{\tm_b}{\skelvars \cup \{\var\} \cup \{\vartwo\})}} - \sizet{\disc{\tm_b}{\skelvars \cup \{\var\}}}$.
Since also $\disc{\tm_b}{\skelvars\cup\{\var\}\cup\{\vartwo\}}$ starts with $\taken$, we have (remember that $\sizet{\mtm}=0$):
\[\begin{array}{lllllll}
k&=&k_1+1+k_2
\\
& =_{\ih} & \sizet{\disc{\tm_b}{\skelvars\cup\set\var}} -1 + 1 + \sizet{\disc{\tm_b}{\skelvars \cup \{\var\} \cup \{\vartwo\})}} - \sizet{\disc{\tm_b}{\skelvars \cup \{\var\}}}
\\
& = & 
\sizet{\disc{\tm_b}{\skelvars \cup \{\var\} \cup \{\vartwo\})}} 
\\
& = & 
\sizet{\mtmtwo} -1
\\
& = & 
\sizet{\mtmtwo} - \sizet{\mtm} -1.
\end{array}\]
Then the statement holds, since $\mtm$ starts unmarked.

\item \case{$\vartwo \notin \fv{{\tm_b}}$} Then, $\disc{\tm_b}{\skelvars \cup \{\var\}}\isub\vartwo{\skelu{\vartwo^\taken}} = \disc{\tm_b}{\skelvars \cup \{\var\}}$.
By \reflemma{disc-useless-var}, since $\vartwo \notin \fv{{\tm_b}}$ we also have that $\disc{\tm_b}{\skelvars \cup \{\var\}} = \disc{\tm_b}{\skelvars \cup \{\var\} \cup \{\vartwo\}}$. Therefore:
\[\begin{array}{lllllll}
\mtm\isub\var{\skelu{\var^\taken}}
&\tomark^{k_1 + 1} &
\skelu{(\lat\vartwo\disc{\tm_b}{\skelvars \cup \{\var\}}\isub\vartwo{\skelu{\vartwo^\taken}})} 
\\
&=&
\skelu{(\lat\vartwo\disc{\tm_b}{\skelvars \cup \{\var\}})} 
\\
&=_{\reflemmaeq{disc-useless-var}}&
\skelu{(\lat\vartwo\disc{\tm_b}{\skelvars \cup \{\var\} \cup \{\vartwo\}})} 
\\
&=&
\skelu{\disc{\la\vartwo{\tm_b}}{\skelvars \cup \{\var\}})} 
&= &
\skelu{\mtmtwo}
\end{array}\]
With:
\[\begin{array}{cllllll}
k & = & k_1+1
\\
& =_{\ih} & \sizet{\disc{\tm_b}{\skelvars\cup\set\var}} -1 + 1
\\
& = & \sizet{\disc{\tm_b}{\skelvars\cup\set\var}}
\\
& = & \sizet{\disc{\tm_b}{\skelvars\cup\set\var\cup\set\vartwo}}
\\
& = & 
\sizet{\lat\vartwo\disc{\tm_b}{\skelvars \cup \{\var\} \cup \{\vartwo\}}}  -1
\\
& = & 
\sizet{\mtmtwo} -1

& = & 
\sizet{\mtmtwo} - \sizet{\mtm} -1.
\end{array}\]
Then the statement holds, since $\mtm$ starts unmarked.
\end{itemize}

\item \case{$\tm = \la\vartwo{\tm_b}$ and $\fv{\tm} \cap \skelvars \neq \emptyset$} Without loss of generality we can assume $\vartwo \notin \skelvars \cup \{\var\}$. By definition, $\mtm = \lat\vartwo\disc{\tm_b}{\skelvars \cup \{\vartwo\}}$, that is, $\mtm$ starts with $\taken$. Let $\mtm_b = \disc{\tm_b}{\skelvars \cup \{\vartwo\}}$. Note that $\fv{\tm} \cap \skelvars \neq \emptyset$ implies $\fv{\tm_b} \cap \skelvars \neq \emptyset$, in turn implying that $\mtm_b$ starts with $\taken$. Then:
\[\begin{array}{cllllll}
\mtm\isub\var{\skelu{\var^\taken}} 
&= &
\lat\vartwo\mtm_b\isub\var{\skelu{\var^\taken}}
\\
\mbox{(by \ih)} &\tomark^{k} &
\lat\vartwo\disc{\tm_b}{\skelvars\cup\{\var\}\cup\{\vartwo\}} 
\\
&=&
\disc{\la\vartwo{\tm_b}}{\skelvars\cup\{\var\}} 
&= & \mtmtwo.
\end{array}\]
Where $k =_{\ih} \sizet{\disc{\tm_b}{\skelvars\cup\{\var\}\cup\{\vartwo\}}} - \sizet{\mtm_b}$.
Then, since both $\mtm$ and $\mtmtwo$ start with $\taken$:
\[\begin{array}{cllllll}
k
& =_{\ih} & \sizet{\disc{\tm_b}{\skelvars\cup\{\var\}\cup\{\vartwo\}}} - \sizet{\mtm_b}
\\
& = & 
\sizet{\disc{\tm_b}{\skelvars\cup\{\var\}\cup\{\vartwo\}}} +1 - \sizet{\mtm_b} -1
& = & 
\sizet{\mtmtwo} - \sizet{\mtm}\qedhere
\end{array}\]
\end{itemize}
\end{proof}

\section{Proofs omitted from \refsect{prel-machines} (Preliminaries about Abstract Machines)}
\label{sect:app-prel-machines}

\begin{lemma}[One-step simulation]
  \label{l:one-step-simulation}
  Let a machine $\mach$ and a structural strategy $(\tostrat,\streq)$ be a distillery.
  For any state $\state$ of $\mach$, if $\decode\state \tostrat \streq\tmtwo$ then there is a state $\statetwo$ of $\mach$ such that $\state \tomachsea^*\tomachpr \statetwo$ and $\decode{\statetwo} \streq \tmtwo$ and such that the label of the principal transition and of the step are the same.
\end{lemma}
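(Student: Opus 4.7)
The plan is to first exhaust all search transitions from $\state$, then invoke principal projection to match the given strategy step. Let $a$ be the label of the step, so that $\decode\state \Rew{a} \tmthree \streq \tmtwo$ for some $\tmthree$.

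First I would use search termination (Point 3 of the distillery) to build a maximal run $\state \tomachsea^{\ast} \state'$ where $\state'$ admits no further search transitions. By iterated search transparency (Point 2) we have $\decode{\state'} = \decode\state$, so $\decode{\state'} \Rew{a} \tmthree$ still holds. Next, $\state'$ cannot be a final state: if it were, the halt property (Point 5) would force $\decode{\state'} = \decode\state$ to be $\tostrat$-normal, contradicting the existence of the step. Since $\state'$ has no search transitions available yet is not final, it must admit some principal transition $\state' \tomachhole{b} \statetwo$ with label $b$, yielding a run $\state \tomachsea^{\ast} \tomachhole{b} \statetwo$.

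It then remains to check that $a = b$ and $\decode\statetwo \streq \tmtwo$. By principal projection (Point 1) there is $\tmfour$ such that $\decode{\state'} \Rew{b} \tmfour \streq \decode\statetwo$. Determinism of $\tostrat$ (Point 4), applied to the two reductions $\decode{\state'} \Rew{a} \tmthree$ and $\decode{\state'} \Rew{b} \tmfour$ from the same term, gives $\tmthree = \tmfour$; moreover, since the labeled sub-relations of $\tostrat$ are pairwise disjoint on their redexes, we also get $a = b$. Therefore $\decode\statetwo \streq \tmfour = \tmthree \streq \tmtwo$, and by transitivity of $\streq$ we conclude $\decode\statetwo \streq \tmtwo$, with matching labels.

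The only subtlety I anticipate is the final appeal to determinism: one must be careful that determinism yields not just the equality of reducts but also of labels. This is ensured by the fact that $\tostrat$ is the disjoint union of its labeled sub-relations, so that the label is a function of the (unique) redex fired from a given term—a property that holds for $\toskneed$ by construction, where each redex is either $\db$, $\sksym$, or $\sssym$ and these cases are syntactically disjoint.
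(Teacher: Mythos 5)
Your proof is correct and follows essentially the same route as the paper's: exhaust the search transitions using their termination, invoke search transparency to preserve the read-back, rule out finality via the halt property, and then conclude with principal projection and determinism of the strategy. You are in fact slightly more careful than the paper on the label-matching requirement, which the paper leaves implicit in determinism; your justification (the labeled sub-relations have syntactically disjoint redex shapes, so the label is a function of the unique redex) is the right way to discharge it.
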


\begin{proof}
  For any state $\state$ of $\mach$, let $\nfo{\state}$ be a normal form of $\state$ with respect to $\tomachsea$: such a state exists because search transitions terminate (\refpoint{def-overhead-terminate} of \refdef{distillery}).
  Since $\tomachsea$ is mapped on identities (by \emph{search transparency}, \ie \refpoint{def-overhead-transparency} of \refdef{distillery}), one has $\decode{\nfo{\state}} = \decode\state$.
  Since $\decode\state$ is not $\tostrat$-normal by hypothesis, the halt property (\refpoint{def-progress}) entails that $\nfo{\state}$ is not final, therefore $\state \tomachsea^* \nfo{\state} \tomachpr \statetwo$ for some state $\statetwo$, and thus $\decode\state = \decode{\nfo{\state}} \tostrat \streq\decode{\statetwo}$ by principal projection (\refpoint{def-beta-projection}).
  By determinism of $\tostrat$ (\refpoint{def-determinism}), one obtains $\decode{\statetwo} \streq \tmtwo$.
\end{proof}

\gettoappendix{thm:abs-impl}
\begin{proof}
\applabel{thm:abs-impl}
  According to \refdef{implem}, given a $\skterms$-term $\tm$ and an initial state $\compilrel\tm\state$, we have to show that:
  \begin{enumerate}
   \item \label{p:exec-to-deriv} \emph{Runs to evaluations}: for any $\mach$-run $\run: \compilrel\tm\state \tomachine^* \statetwo$ there exists a 
$\tostrat$-evaluation $\deriv: \tm \tostrat^* \streq\decode\statetwo$;

\item \label{p:deriv-to-exec} \emph{Evaluations to runs}: for every $\tostrat$-evaluation $\deriv: \tm \tostrat^* \tmtwo$ there exists a 
$\mach$-run $\run: \compilrel\tm\state \tomachine^* \statetwo$ such that $\decode\statetwo \streq \tmtwo$;
  \end{enumerate}
  Plus the principal matching constraint that shall be evident by the principal projection requirement and how the proof is built.

  \subparagraph{Proof of \refpoint{exec-to-deriv}.}  By induction on $\sizepr\run \in \nat$. Cases:
  \begin{itemize}
  \item $\sizepr\run = 0$. Then $\run \colon \compilrel\tm\state \tomachsea^* \statetwo$ and hence $\decode{\state} = \decode\statetwo$ by search transparency (\refpoint{def-overhead-transparency} of \refdef{distillery}).
  Moreover, $\tm = \decode{\state}$ since decoding is the inverse of initialization, therefore the statement holds with respect to the empty evaluation $\deriv$ with starting and end term $\tm$.
  
\item $\sizepr\run > 0$. Then, $\run \colon \compilrel\tm\state \tomachine^* \statetwo$ is the concatenation of a run $\runtwo \colon \compilrel\tm\state \tomachine^* \statethree$ followed by a run $\runthree \colon \statethree \tomachpr \statefour \tomachsea^* \statetwo$.
  By \ih applied to $\runtwo$, there exists an evaluation $\derivtwo \colon \tm \tostrat^* \streq\decode\statetwo$ satisfying the principal matching constraint.
  By principal projection (\refpoint{def-beta-projection} of \refdef{distillery}) and search transparency (\refpoint{def-overhead-transparency} of \refdef{distillery}) applied to $\runthree$, one obtains a one-step evaluation $\derivthree \colon \decode\statetwo \tostrat \streq\decode\statethree = \decode\state$ having the same principal label of the transition. 
Concatenating $\derivtwo$ and $\derivthree$, we obtain an evaluation:
\[\derivfour \colon \tm  \tostrat^* \streq\decode\statetwo\tostrat \streq\decode\state.\]
Postponing structural equivalence (\refprop{streq-is-a-strong-bisim}) we obtain the required evaluation $\deriv \colon \tm  \tostrat^* \streq\decode\state$, which satisfies the principal matching constraint.
 \end{itemize}
 
  \subparagraph{Proof of \refpoint{deriv-to-exec}.}  By induction on $\size\deriv \in \nat$. Cases:
  \begin{itemize}
  \item $\size\deriv = 0$. Then $\tm = \tmtwo$.
  Since decoding is the inverse of initialization, one has $\decode\state = \tm$.
  Then the statement holds with respect to the empty (\ie without transitions) run $\run$ with initial (and final) state $\state$.
  
 \item  $\size\deriv > 0$. Then, $\deriv\colon \tm \tostrat^* \tmtwo$ is the concatenation of an evaluation $\derivtwo \colon \tm \tostrat^* \tmtwop$ followed by the step $\tmtwop \tostrat \tmtwo$.
  By \ih, there exists a $\mach$-run $\runtwo\colon \compilrel\tm\state \tomachine^* \statethree$ such that $\decode\statethree \streq \tmtwop$ and verifying the principal matching constraint.
  Since $\decode\statethree \streq \tmtwop\tostrat \tmtwo$, by strong bisimulation of $\streq$ (\refprop{streq-is-a-strong-bisim}) we obtain $\decode\statethree \tostrat \streq\tmtwo$.
By one-step simulation (\reflemma{one-step-simulation}), there is a state $\statetwo$ of $\mach$ such that $\statethree \tomachsea^*\tomachpr \statetwo$ and $\decode\statetwo\streq \tmtwo$, preserving the label of the step/transition.
  Therefore, the run $\run \colon \compilrel\tm\state \tomachine^*\statethree \tomachsea^*\tomachpr \statetwo$ satisfies the statement.\qedhere
  \end{itemize}
\end{proof}
\section{Proofs omitted from \refsect{skeletal-MAD} (The MAD and the Skeletal MAD)}
\label{sect:app-skeletal-MAD}

\gettoappendix{l:skeletal-mam-qual-invariants}

\begin{proof}
\applabel{l:skeletal-mam-qual-invariants}
By induction on the length of the run. The base case trivially holds, the inductive case is by analysis of the last transition. The first two invariants are proved by straightforward inspection of the transitions. For the third one, the crucial point is that the well-bound invariant ensures that in a chain $\chain\cons(\var, \stack, \env\esub\var\cdot)$ there are no environment entries of variable $\var$ in $\env$ or $\chain$, so that in the definition of read-back $\decode{\chain\cons(\var, \stack, \env\esub\var\cdot)}	 \defeq  \decodep\env{\decodep\chain{\decodep\stack\var}}\esub\var\ctxhole$ the context $\decodep\env{\decodep\chain{\decode\stack}}$ does not bind $\var$ and it is indeed $\esub\var\ctxhole$ that binds it.
\end{proof}

\gettoappendix{thm:SkMAD-properties}

\begin{proof}
\applabel{thm:SkMAD-properties}\hfill
\begin{enumerate}
\item 
\begin{enumerate}
\item If $\state = \fourstate \chain {\la\var\tm} {\tmtwo \cons\stack}  \env \tomachb \fourstate \chain \tm \stack {\esub\var\tmtwo\cons\env} = \statetwo$ then:
\[\begin{array}{rllll}
 \decode{\fourstate \chain {\la\var\tm} {\tmtwo \cons\stack}  \env}
 & = &
 \decodep\env{\decodep\chain{\decodep{\tmtwo \cons\stack}{\la\var\tm}}}
\\ & = &
 \decodep\env{\decodep\chain{\decodep\stack{(\la\var\tm)\tmtwo}}}
\\ (\reflemmaeq{skeletal-mam-qual-invariants}.3)& \toskneeddb &
 \decodep\env{\decodep\chain{\decodep\stack{\tm\esub\var\tmtwo}}}
\\ & \streq_{\reflemmaeq{skeletal-mam-qual-invariants}.3} &
 \decodep{\env}{\decodep\chain{\decodep\stack{\tm}}\esub\var\tmtwo}
\\ & = &
 \decodep{\esub\var\tmtwo\cons\env}{\decodep\chain{\decodep\stack{\tm}}}
 & = &		
	\decode{\fourstate \chain \tm \stack {\esub\var\tmtwo\cons\env}}.
\end{array}\]
Note that the $\db$ steps applies because by the contextual chain read-back invariant (\reflemmaeq{skeletal-mam-qual-invariants}.3) $(\la\var\tm)\tmtwo$ occurs inside an evaluation context. Because of the same invariant, we can apply structural equivalence.

\item If $\state = \fourstate \chain \var \stack {\env\cons\esub\var\val\cons\envtwo}
	  	\tomachsk 
		\fourstate \chain \var \stack {\env\cons\skesub\var{\val'}\cons\env_\sctx\cons\envtwo}
 = \statetwo$ then:
\[\begin{array}{rllll}
 \decode{\fourstate \chain \var \stack {\env\cons\esub\var\val\cons\envtwo}}
 & = &
 \decodep{\env\cons\esub\var\val\cons\envtwo}{\decodep\chain{\decodep\stack{\var}}}
\\ & = &
 \decodep{\envtwo}{\decodep\env{\decodep\chain{\decodep\stack{\var}}}\esub\var\val}
\\ (\reflemmaeq{skeletal-mam-qual-invariants}.2,3)&\tosk &
 \decodep{\envtwo}{\sctxp{\decodep\env{\decodep\chain{\decodep\stack{\var}}}\skesub\var\valtwo}}
\\ & = &
 \decodep{\env\cons\env_\sctx\cons\skesub\var\valtwo\cons\envtwo}{\decodep\chain{\decodep\stack{\var}}}
 & = &		
	\decode{\fourstate \chain \var \stack {\env\cons\skesub\var{\val'}\cons\env_\sctx\cons\envtwo}}.
\end{array}\]
The $\sksym$ steps applies because:
\begin{enumerate}
\item As in the previous case, the contextual chain read-back invariant (\reflemmaeq{skeletal-mam-qual-invariants}.3) ensures that the step takes place in an evaluation context;
\item The well-bound invariant (\reflemma{skeletal-mam-qual-invariants}.2) ensures that there is only one environment entry on $\var$ in $\state$, so that there are no entries bounding $\var$ in $\env$ (if there were another entry $\esub\var\tm$ in $\env$ then the occurrence of $\var$ in $\esub\var\val$ would violate the invariant).
\end{enumerate}

\item The case $\tomachss$ is analogous to the $\sksym$ case.
\end{enumerate}

\item Cases of the search transition:
\begin{enumerate}
\item If $\state = \fourstate \chain {\tm\tmtwo} \stack \env
	  	\tomachseaone
	  	\fourstate \chain \tm {\tmtwo\cons\stack} \env = \statetwo$ then:
\[\begin{array}{lllllllll}
 \decode{\fourstate \chain {\tm\tmtwo} \stack \env}
 & = &
 \decodep{\env}{\decodep\chain{\decodep\stack{\tm\tmtwo}}}
& = &
 \decodep{\env}{\decodep\chain{\decodep{\tmtwo\cons\stack}\tm}}
 & = &		
	\decode{\fourstate \chain \tm {\tmtwo\cons\stack} \env}.
\end{array}\]

\item If $\state = \fourstate \chain \var \stack {\env\cons\esub\var\tm\cons\envtwo}
	  	\tomachseatwo 
	  	\fourstate {\chain\cons(\var, \stack, \env\esub\var\cdot)} \tm \stempty \envtwo  = \statetwo$ then:
\[\begin{array}{lllllllll}
 \decode{\fourstate \chain \var \stack {\env\cons\esub\var\tm\cons\envtwo}}
 & = &
 \decodep{\env\cons\esub\var\tm\cons\envtwo}{\decodep\chain{\decodep\stack{\var}}}
\\ & = &
 \decodep{\envtwo}{\decodep\env{\decodep\chain{\decodep\stack\var}}\esub\var\tm}
 & = &		
	\decode{\fourstate {\chain\cons(\var, \stack, \env\esub\var\cdot)} \tm \stempty \envtwo}.
\end{array}\]

\item If $\state = \fourstate {\chain\cons(\var, \stack, \env\esub\var\cdot)} \val \stempty \envtwo
		 \tomachseathree 
		\fourstate \chain \var \stack {\env\cons\esub\var{\val}\cons\envtwo} = \statetwo$ then:
\[\begin{array}{lllllllll}
 \decode{\fourstate {\chain\cons(\var, \stack, \env\esub\var\cdot)} \val \stempty \envtwo}
 & = &
 \decodep{\envtwo}{\decodep\env{\decodep\chain{\decodep\stack\var}}\esub\var\val}
\\ & = &
\decodep{\env\cons\esub\var\val\cons\envtwo}{\decodep\chain{\decodep\stack{\var}}}
 & = &		
	\decode{\fourstate \chain \var \stack {\env\cons\esub\var\val\cons\envtwo}}.
\end{array}\]
\end{enumerate}

\item Termination is implied by the complexity analysis of the next section.
\item Let us first show that final states $\state_f$ have shape $\fourstate\emptylist\val\emptylist\env$. We have the following points:
\begin{itemize}
\item \emph{The code of a final state is an abstraction}: it cannot be an application, because transition $\seasym_1$ would apply, and cannot be a variable $\var$, because by the closure invariant (\reflemma{skeletal-mam-qual-invariants}.1) there has to be an environment entry on $\var$, and so one among transitions $\seasym_2$, $\sksym$, and $\sssym$ would apply;
\item \emph{The stack is empty}: because, given that the code is an abstraction, is the stack is non-empty then a $\beta$ transition would apply;
\item \emph{The chain is empty}: because, given that the code is an abstraction and the stack is empty, then transition $\seasym_3$ would apply.
\end{itemize}
Now, final states decode to terms of shape $\sctxp\val$ which are $\toskneed$ normal.
\qedhere
\end{enumerate}
\end{proof}
\section{Proofs Omitted from \refsect{complexity} (Complexity Analysis)}
\label{sect:app-complexity}

\gettoappendix{l:skeleton-size}

\begin{proof}
\applabel{l:skeleton-size}
Point 1 is by induction on $\skeldec\tm\skelvars$. Point 2 follows from Point 1 and the definitions of skeleton and flesh.
\end{proof}

\gettoappendix{l:sub-term}

\begin{proof}
\applabel{l:sub-term}
By induction on the length of the run. The base case trivially holds, the inductive case is by analysis of the last transition. For all transitions but $\tomachsk$ it follows from the \ih and a straightforward inspection of the transitions. For $\tomachsk$, it is given by \reflemma{skeleton-size}.2.
\end{proof}

}{}
}

\end{document}